\DeclareMathOperator{\diag}{diag}
\DeclareMathOperator{\CHF}{CHF}
\newcommand{\C}{\mathbb{C}}
\renewcommand{\Re}{\mathrm{Re}\,}
\renewcommand{\Im}{\mathrm{Im}\,}
\newcommand{\Pe}{\mathrm{Pe}}
\renewcommand{\vec}{\mathbf}
\newcommand{\ud}{\,\mathrm{d}}
\newcommand{\Boh}{\mathcal{O}}
\newtheorem{theorem}{Theorem}[section]
\newtheorem{lemma}[theorem]{Lemma}
\newtheorem{proposition}[theorem]{Proposition}
\newtheorem{corollary}[theorem]{Corollary}
\newtheorem{rhp}[theorem]{RH problem}
\theoremstyle{definition}
\theoremstyle{remark}
\newtheorem{remark}[theorem]{Remark}
\numberwithin{equation}{section}
\begin{document}

\title{On the deformed Pearcey determinant}

\author{Dan Dai\footnotemark[1], ~Shuai-Xia Xu\footnotemark[2] ~and Lun Zhang\footnotemark[3]}

\renewcommand{\thefootnote}{\fnsymbol{footnote}}
\footnotetext[1]{Department of Mathematics, City University of Hong Kong, Tat Chee
Avenue, Kowloon, Hong Kong. E-mail: \texttt{dandai@cityu.edu.hk}}
\footnotetext[2]{Institut Franco-Chinois de l'Energie Nucl\'{e}aire, Sun Yat-sen University,
Guangzhou 510275, China. E-mail: \texttt{xushx3@mail.sysu.edu.cn}}
\footnotetext[3] {School of Mathematical Sciences and Shanghai Key Laboratory for Contemporary Applied Mathematics, Fudan University, Shanghai 200433, China. E-mail: \texttt{lunzhang@fudan.edu.cn }}

\date{\today}

\maketitle

\begin{abstract}
In this paper, we are concerned with the deformed Pearcey determinant $\det\left(I-\gamma K^\Pe_{s,\rho}\right)$, where $0 \leq \gamma<1$ and $K^\Pe_{s,\rho}$ stands for the trace class operator acting on $L^2\left(-s, s\right)$ with the classical Pearcey kernel arising from random matrix theory. This determinant corresponds to the gap probability for the Pearcey process after thinning, which means each particle in the Pearcey process is removed independently with probability $1-\gamma$. We establish an integral representation of the deformed Pearcey determinant involving the Hamiltonian associated with a family of special solutions to a system of nonlinear differential equations. Together with some remarkable differential identities for the Hamiltonian, this allows us to obtain the large gap asymptotics, including the exact calculation of the constant term, which complements our previous work on the undeformed case (i.e., $\gamma=1$). It comes out that the deformed Pearcey determinant exhibits a significantly different asymptotic behavior from the undeformed case, which suggests a transition will occur as the parameter $\gamma$ varies. As an application of our results, we obtain the asymptotics for the expectation and variance of the counting function for the Pearcey process, and a central limit theorem as well.
\end{abstract}

\setcounter{tocdepth}{2} \tableofcontents

\section{Introduction}
Universality of local statistics of eigenvalues for large random matrices is one of the most fascinating phenomena in random matrix theory \cite{EY12,Forrester,Kui11,metha}. This means the local behaviors of the spectrum depend only on the symmetry type of the ensemble but not on the detailed information about the elements of the ensemble. For the classical Gaussian unitary ensemble (GUE), it is well-known that the eigenvalues form a determinantal point process. As the dimension of the matrix goes to infinity, the correlation function tends to the sine kernel in the bulk of the spectrum \cite{metha}, and to the Airy kernel at the edges of the spectrum \cite{TWAiry}. One encounters the same universal local statistics for a large class of random matrices, which particularly include the unitarily invariant ensembles \cite{DG07,DKMVZ99} and the Wigner matrices (i.e., Hermitian matrices with independent, identically distributed entries) \cite{EPRSY,ESY,Sosh99,TaoV11,TaoV10}, just to name a few.

A different local statistics will arise if we consider the following deformed GUE \cite{BH,BH1}
\begin{equation*}
A+\lambda M,
\end{equation*}
where $M$ is a GUE matrix, $A$ is a deterministic diagonal matrix (also known as the external source) and $\lambda$ is a real parameter. If the spectrum of $A$ is symmetric to the origin with a gap around 0, there will be a critical value of $\lambda$ at which the gap in the support of density closes and the density exhibits a cusp-like singularity at the origin \cite{Pastur}, i.e., the density behaves like $|x|^{\frac13}$ from both sides of the origin. This cubic root singularity leads to a new determinantal process characterized by the Pearcey kernel \cite{BK3,BH,BH1,TW,Zinn}.

The Pearcey kernel $K^\Pe$ is defined by (see \cite{BH,BH1})
\begin{align}\label{eq: pearcey kernel}
K^\Pe(x,y;\rho)&=\int_0^{\infty}\mathcal{P}(x+z)\mathcal{Q}(y+z)\ud z
\nonumber
\\
&=\frac{\mathcal{P}(x)\mathcal{Q}''(y)-\mathcal{P}'(x)\mathcal{Q}'(y)+\mathcal{P}''(x)\mathcal{Q}(y)-\rho
\mathcal{P}(x)\mathcal{Q}(y)}{x-y},
\end{align}
where $\rho\in\mathbb{R}$,
\begin{equation}\label{eq:pearcey integral}
\mathcal{P}(x)=\frac{1}{2\pi}\int_{-\infty}^\infty
e^{-\frac14 t^4-\frac{\rho}{2}t^2+itx} \ud t \qquad \text{and} \qquad
\mathcal{Q}(y)=\frac{1}{2\pi} \int_\Sigma e^{\frac14
t^4+\frac{\rho}{2}t^2+ity} \ud t.
\end{equation}
The contour $\Sigma$ in the definition of $\mathcal{Q}$ consists of the four
rays $\arg t=\frac{\pi}{4},\frac34\pi,\frac54\pi,\frac74\pi$, where the first and the
third rays are oriented from infinity to zero while the second and
the last rays are oriented outwards; see Figure \ref{fig: sigma} for an illustration.
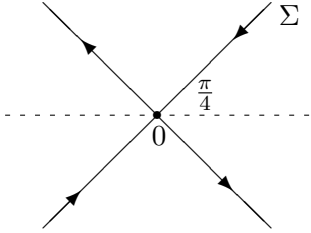
\begin{figure}[h]
\begin{center}
   \setlength{\unitlength}{1truemm}
   \vspace{-13mm}
   \begin{picture}(100,70)(-5,2)
        \dashline{0.8}(20,40)(60,40)
       %\put(40,40){\line(-1,0){20}}
       \put(40,40){\line(1,1){15}}
       %\put(40,40){\line(2,-1){30}}
       %\put(40,40){\line(-2,1){30}}
       \put(40,40){\line(-1,-1){15}}
       \put(40,40){\line(-1,1){15}}
       \put(40,40){\line(1,-1){15}}
       %\put(40,40){\line(-2,3){15}}
       %\put(40,40){\line(-2,-3){15}}
       \put(40,40){\thicklines\circle*{1}}
       \put(39.3,36){$0$}
       \put(45,42){$\frac{\pi}{4}$}
       %\put(55,40){\thicklines\vector(1,0){.0001}}
       %\put(28,40){\thicklines\vector(1,0){.0001}}
       \put(50,50){\thicklines\vector(-1,-1){.0001}}
       %\put(60,30){\thicklines\vector(2,-1){.0001}}
       \put(30,50){\thicklines\vector(-1,1){.0001}}
       %\put(20,30){\thicklines\vector(-2,-1){.0001}}
       %\put(50,55){\thicklines\vector(2,3){.0001}}
       \put(50,30){\thicklines\vector(1,-1){.0001}}
       %\put(30,55){\thicklines\vector(-2,3){.0001}}
       \put(30,30){\thicklines\vector(1,1){.0001}}
       \put(56,52){$\Sigma$}
\end{picture}
 \vspace{-23mm}
\caption{The contour $\Sigma$ in the definition of $\mathcal{Q}(y)$.}
\label{fig: sigma}
\end{center}
\end{figure}
The functions $\mathcal{P}$ and $\mathcal{Q}$ in \eqref{eq:pearcey integral} satisfy the following two third order differential equations
\begin{align}
\mathcal{P}'''(x)&=x\mathcal{P}(x)+\rho \mathcal{P}'(x), \label{eq:Pearcey1}
\\
\mathcal{Q}'''(y)&=-y\mathcal{Q}(y)+\rho \mathcal{Q}'(y),
\end{align}
respectively, and are also called Pearcey integrals \cite{Pear}.
%Since $\mathcal{P}$ and $\mathcal{Q}$ were first introduced by Pearcey in the context of electromagnetic fields \cite{Pear}, the kernel $K^\Pe$ bears the name Pearcey kernel. To see how $K^\Pe$ describe the aforementioned phase transition, note that the eigenvalues of $M$ form a determinantal point process with a correlation kernel $K_n(x,y;a)$ depending on $a$ (see \cite{BH,BH1,Zinn}), it was established in \cite{BH,BH1} (for $\rho=0$) and in \cite{BK3,TW} (for general $\rho \in \mathbb{R}$) that
%$$
%\lim_{n\to\infty}\frac{1}{n^{3/4}}K_n\left(\frac{x}{n^{3/4}},\frac{y}{n^{3/4}}; 1+\frac{\rho}{2\sqrt{n}}\right)=K^\Pe(x,y;\rho),
%$$
%i.e., the correlation kernel $K_n$ converges to the Pearcey kernel near the origin as $n\to \infty$ in a double scaling regime.

Analogously to the universal sine and Airy point processes, the Pearcey process represents another canonical universality class in random matrix theory, as can be seen from its emergence in specific matrix models including large complex correlated Wishart matrices \cite{HHNa,HHNb}, a two-matrix model with special quartic potential \cite{GZ}, and in complex Hermitian Wigner-type matrices at the cusps under general conditions \cite{EKS} as well. The complex Hermitian Wigner-type matrices generalize the traditional Wigner matrices by dropping the requirement on the identical distribution of the entries. Moreover, a recent classification theorem regarding the singularities of the solution to the underlying Dyson equation shows that the limiting eigenvalue density therein has only square root or cubic root cusp singularities \cite{AjEK2017,AlEK2018}. Thus, the Pearcey process is the third and last universal statistics arising from this generalization of Wigner matrices. It is also worthwhile to mention that the Pearcey statistics have been related to the non-intersecting Brownian motions at the critical time \cite{AOV,AM,BK3} and to a combinatorial model on random partitions \cite{OR}.

Let $K^\Pe_{s,\rho}$ be the trace class operator acting on $L^2\left(-s, s\right)$, $s\geq 0$, with the Pearcey kernel \eqref{eq: pearcey kernel}, it is well-known that the associated Fredholm determinant $\det\left(I-K^\Pe_{s,\rho}\right)$ gives us the probability of finding no particles (also known as the gap probability) on the interval $(-s,s)$  in a determinantal point process on the real line characterized by the Pearcey kernel.  The nonlinear differential equations have been established for this gap probability in \cite{AM,BC1,BH,TW} under more general settings, while its transition to an Airy process and the large gap asymptotics can be found in \cite{ACV,BC1} and \cite{BH,DXZ20}, respectively.

In this paper, we intend to continue our investigation on the Pearcey determinant, initiated in \cite{DXZ20}, by considering
\begin{equation}\label{def:DeformedPearcey}
\det\left(I-\gamma K^\Pe_{s,\rho}\right),  \qquad 0 \leq \gamma<1,
\end{equation}
i.e., a deformed case. This determinant corresponds to the gap probability for the thinned Pearcey process, which means each particle in the Pearcey process is removed independently with probability $1-\gamma$. Thinning is a classical operation in the studies of point processes; cf. \cite{IPSS}. The thinned process is an intermediate process as it interpolates between the original point process (for $\gamma \to 1$) and an uncorrelated process (for $\gamma=0$) \cite{Kallen}. In the context of random matrix theory, they were first introduced by Bohigas and Pato in \cite{Boh06,Boh04} with motivations arising from nuclear physics and have attracted great interest recently. For the classical sine, Airy and Bessel point processes, the deformed distribution functions are all closely related to the Painlev\'{e} equations or the associated Hamiltonians, and exhibit significantly different asymptotic behaviors from the undeformed case (i.e., $\gamma=1$), which in particular implies transitions will occur as the parameter $\gamma$ varies; see \cite{BCI16,Bot:Buck2018,BDIK18,BDIK17,BDIK15,BIP19,Charlier18} for the relevant works. The critical value of $\gamma$ at $1$ can also be seen from the eigenvalues of the associated operators. For instance, let us consider the trace class operator $K_{s}$ acting on $L^2\left(-s, s\right)$ with the sine kernel $K_{\textrm{sin}}(x,y)=\frac{\sin (x-y)}{\pi (x-y)}$. Since $K_s$ is a positive projection operator, we have that the eigenvalues $\lambda_k(s)$, $k=0,1,2,\ldots$, of $K_s$ lie in $(0,1)$ and $\lambda_k(s) \to 1$ as $s\to\infty$ for each fixed $k$. By Lidskii's theorem, it follows that
\begin{equation*}
\det\left(I-\gamma K_{s}\right) = \prod_{k=0}^\infty (1- \gamma \lambda_k(s)).
\end{equation*}
As a consequence, one can see that \cite{Widom}: (i) when $\gamma <1$, the infinite product is exponentially small as  $0<1- \gamma \lambda_k(s)<1$ for all $k$; (ii) when $\gamma = 1$, the infinite product tends to zero much faster than the case $\gamma < 1$ because $1- \lambda_k(s)$ tends to 0 for any fixed $k$; (iii) when $\gamma > 1$, the determinant will vanish for a discrete set of $s$, where $\lambda_k(s)=1/\gamma$ for some $k$,  and this implies that the logarithm of the determinant blows up. This change of behaviors indicates that $\gamma=1$ is a critical phase transition point. It is expected that this phase transition will occur for many other kernels, including the Pearcey kernel, obtained in random matrix theory; cf. \cite{Bothner} for recent studies regarding the classical Airy and Bessel kernels. %{\red The different type of asymptotic behaviors of the determinants  as the parameter $\gamma$ varies can also be seen from the eigenvalues of the associated operators. Taking the classical sine kernel as an example, we have \cite{Widom}
% \begin{equation*}
%\det\left(I-\gamma K_{s}\right) = \prod_{k=0}^\infty (1- \gamma \lambda_k(s)),
%\end{equation*}
%where $K_{s} $ is the trace class operator  acting on $L^2\left(-s, s\right)$ with the sine kernel $K(x,y)=\frac{\sin (x-y)}{\pi (x-y)}$. Here,
% $1>\lambda_0(s)> \lambda_1(s) >\cdots > 0$ are the eigenvalues of the trace class operator. The eigenvalue $\lambda_k(s)$ tends to $1$  as $s\to\infty$ for each fixed $k\geq 0$.
%The above expression implies that: (i) when $\gamma <1$, the infinite product is exponentially small as  $0<1- \gamma \lambda_k(s)<1$ for all $k$; (ii) when $\gamma = 1$, the infinite product tends to zero much more faster than the case $\gamma < 1$ because $1- \lambda_k(s)$ tends to 0 for any fixed $k \geq 0$; (iii) When $\gamma > 1$, the determinant will vanish
%for a discrete set of $s$, where $\lambda_k(s)=\gamma^{-1}$ for some $k$. Therefore, the logarithm of the determinant blows up. This change of behaviors indicates that $\gamma=1$ is a critical phase transition point. The existence of the phase transition and the asymptotics of the eigenvalues of the Airy and Bessel kernels are considered recently in \cite{Bothner}. It is expected that similar phase transition happens in many other kernels, including the Pearcey kernel,  obtained in random matrix theory.
%}

In the present work, we will fill in the gap in understanding the thinned Pearcey point process by working on the deformed Pearcey determinant \eqref{def:DeformedPearcey}. In particular, we are able to establish an integral representation of the deformed Pearcey determinant via the Hamiltonian for a system of differential equations. This, in turn, allows us to derive the large gap asymptotics including the exact evaluation of the constant term. Our results will be stated in the next section; see also \cite{Charlier20,CL21} for further extensions and applications.

%Let us consider the deformed distribution
%
%which is related to the following thinned process: . Then, the probability that there is no {\it observed} particles in $(-s,s)$ after thinning is given by $\exp\Big(F(s,\gamma;\rho)\Big)$. The case when $\gamma =1$ has been studied in our previous work \cite{DXZ20}.
%
%In this paper, we reserve $\ln(\cdot)$ for the real logarithmic function, and $\log(\cdot)$ for the complex function with principal branch.

\section{Statement of results}

\subsection{A system of differential equations and a family of special solutions}
The system of differential equations relevant to this work reads as follows:
 \begin{equation}\label{eq:SystemEq}
\left\{
 \begin{array}{ll}
  p_0'(s)=-\sqrt{2}p_3(s)q_2(s),\\
  q_0'(s)=\sqrt{2}p_2(s)q_1(s),\\
  q_1'(s)=q_2(s)-\frac{2}{s}p_2(s)q_ 1(s)q_2(s),\\
  q_2'(s)=\sqrt{2}p_0(s)q_1(s) +q_3(s)+\frac{2}{s}p_2(s)q_ 2(s)^2,\\
  q_3'(s)=sq_1(s)  + \sqrt{2}q_0(s)q_2(s)-\frac{2}{s}p_2(s)q_2(s)q_3(s), \\
  p_1'(s)=-\sqrt{2}p_0(s)p_2(s) -sp_3(s)+\frac{2}{s}p_1(s)p_2(s)q_ 2(s),\\
  p_2'(s)=-\sqrt{2}p_3(s)q_0(s) -p_1(s)-\frac{2}{s}p_2(s)^2q_ 2(s),\\
  p_3'(s)=-p_2(s)+\frac{2}{s}p_2(s)p_ 3(s)q_2(s),
 \end{array}
\right.
\end{equation}
where $p_i(s),q_i(s)$, $i=0,1,2,3$, are 8 unknown functions. By further imposing the condition
\begin{equation}\label{eq:sum0}
\sum_{k=1}^3 p_k(s) q_k(s)=0,
\end{equation}
it is readily to check that the Hamiltonian $H(s)=H(p_0,p_1,p_2,p_3,q_0,q_1,q_2,q_3;s)$ for the above system of differential equations is
given by

\begin{align}\label{pro:H}
H(s)& =  \sqrt{2}p_0(s) p_2(s)q_1(s)+\sqrt{2}p_3(s)q_0(s) q_2(s)+p_1(s)q_2(s) +p_2(s)q_3(s)+sp_3(s)q_1(s)
\nonumber
\\
& ~~~ +\frac{1}{2s}\left(p_1(s)q_1(s)-p_2(s)q_2(s)+p_3(s)q_3(s) \right)^2,
\end{align}
i.e., we have
\begin{equation}\label{eq:H-sys}
q_k'(s)=\frac{\partial H}{\partial p_k}, \qquad p_k'(s)=-\frac{\partial H}{\partial q_k},  \qquad k=0,1,2,3.
\end{equation}

Our first result concerns the existence of a family of special solution to the equations \eqref{eq:SystemEq} and \eqref{eq:sum0}.

\begin{theorem}\label{thm:specialsol}
For the real parameter $\rho\in\mathbb{R}$ and purely imaginary parameter
\begin{equation}\label{def:alpha}
\beta:=\frac{1}{2\pi i} \ln (1-\gamma)\in i \mathbb{R}_+, \qquad \gamma \in [0, 1),
\end{equation}
there exist solutions to the system of differential equations \eqref{eq:SystemEq} and \eqref{eq:sum0} such that the following asymptotic behaviors hold. As $s\to+\infty$, we have
\begin{align}
p_0(s)&=\frac{\sqrt{6}}{2}\beta is^{\frac23}+\frac{\sqrt{2}}{2} \left( \frac{\rho^3}{54}+\frac{\rho}{2} \right)+\Boh(s^{-\frac23}),  \label{thm:p-0-asy} \\
p_1(s)&= -\frac{2\sin(\beta\pi)}{3\pi }e^{\frac{1}{2}\theta_3(s)+\frac{2}{3}\beta \pi i} s^{\frac13}| \Gamma(1-\beta)| \left(\cos \left(\vartheta(s)-\frac{\pi }{3}\right)+ \sqrt{3} \beta i \cos \left(\vartheta(s)+\frac{\pi}{3} \right)\right) \nonumber \\
& ~~~\times \left(1+ \Boh(s^{-\frac23}) \right), \label{thm:p-1-asy} \\
p_2(s)&= \frac{2\sin(\beta\pi)}{3\pi }e^{\frac{1}{2}\theta_3(s)+\frac{2}{3}\beta \pi i} | \Gamma(1-\beta)|\cos (\vartheta(s)) \left( 1+\Boh(s^{-\frac23}) \right), \label{thm:p-2-asy} \\
p_3(s)&= -\frac{2\sin(\beta\pi)}{3\pi }e^{\frac{1}{2}\theta_3(s)+\frac{2}{3}\beta\pi i} s^{-\frac13}| \Gamma(1-\beta)| \cos \left( \vartheta(s)+\frac{\pi}{3} \right) \Big(1+\Boh(s^{-\frac23}) \Big), \label{thm:p-3-asy}
\\
q_0(s)&=-\frac{\sqrt{6}}{2}\beta is^{\frac23}+\frac{\sqrt{2}}{2}\left(-\frac{\rho^3}{54}+\frac{\rho}{2} \right)+\Boh(s^{-\frac23}),  \label{thm:q-0-asy} \\
q_1(s)&= 2ie^{-\frac{1}{2}\theta_3(s)-\frac{2}{3}\beta\pi i} s^{-\frac13}| \Gamma(1-\beta)|\sin \left(\vartheta(s)-\frac{\pi}{3} \right) \left( 1+\Boh(s^{-\frac23}) \right), \label{thm:q-1-asy} \\
q_2(s)&= -2ie^{-\frac{1}{2}\theta_3(s)-\frac{2}{3}\beta\pi i} | \Gamma(1-\beta)|\sin (\vartheta(s) ) \left(1+\Boh(s^{-\frac23}) \right), \label{thm:q-2-asy} \\
q_3(s)&= 2ie^{-\frac{1}{2}\theta_3(s)-\frac{2}{3}\beta\pi i} s^{\frac13}| \Gamma(1-\beta)| \left(\sin \left( \vartheta(s)+\frac{\pi}{3} \right) -\sqrt{3} \beta i \sin \left( \vartheta(s)-\frac{\pi}{3} \right) \right) \nonumber \\
& ~~~ \times \left(1+\Boh(s^{-\frac23}) \right), \label{thm:q-3-asy}
\end{align}
where  $\Gamma(z)$ is Euler's Gamma function, $\theta_3(s)=\theta_3(s;\rho)=\frac34 s^{\frac43}+\frac{\rho}{2}s^{\frac23}$ and
\begin{equation} \label{eq:vtheta}
\vartheta(s)=\vartheta(s;\beta)=-\frac{3 \sqrt{3}}{8} s^{\frac43} + \frac{\sqrt{3}\rho}{4} s^{\frac23}+\arg \Gamma(1-\beta) - \beta i\left(\frac{4}{3}\ln s+\ln \left(\frac{9}{2} \right)\right);
\end{equation}
as $s\to 0^{+}$, we have
\begin{align}
p_0(s) &= \frac{\sqrt{2}}{2}\left( \frac{\rho^3}{54}+\frac{\rho}{2} \right)+\Boh(s),  \label{thm:p-0-asy-0} \\
p_1(s) &= \Boh(s), \qquad p_2(s) = \Boh(1), \qquad p_3(s) = \Boh(s), \label{thm:p-1-asy-0} \\
q_0(s) &= \frac{\sqrt{2}}{2} \left(-\frac{\rho^3}{54}+\frac{\rho}{2} \right)+\Boh(s),  \label{thm:q-0-asy-0} \\
q_1(s) &= \Boh(1), \qquad q_2(s) = \Boh(s), \qquad q_3(s) = \Boh(1).\label{thm:q-1-asy-0}
\end{align}
Moreover, the functions $p_0(s)$ and $q_0(s)$ satisfy the following coupled differential system:
\begin{align}
p_0'''(s) & = \rho p_0'(s)-\frac{2\sqrt{2}q_0'(s)p_0'(s)^2}{s^2 \left( p_0(s)+q_0(s)-\frac{\rho}{\sqrt{2}} \right)} + \left(1+\frac{2\sqrt{2}}{s}p_0'(s)\right) \nonumber \\
& ~~~ \times \left( s \left( p_0(s)+q_0(s)-\frac{\rho}{\sqrt{2}} \right)+\frac{2q_0'(s)p_0''(s)+q_0''(s)p_0'(s)}{p_0(s)+q_0(s)-\frac{\rho}{\sqrt{2}}} \right. \nonumber \\
& \hspace{4.5cm} \left. -\frac{p_0'(s)q_0'(s)\left( 2q_0'(s)+p_0'(s) \right)}{\left( p_0(s)+q_0(s)-\frac{\rho}{\sqrt{2}} \right)^2}
\right) , \label{eq:eqforp-0}
\end{align}
and
\begin{align}
  q_0''(s)&= -p_0''(s) +\frac{p_0'(s)q_0'(s)}{p_0(s)+q_0(s)-\frac{\rho}{\sqrt{2}}} \left( 3+\frac{2\sqrt{2}}{s}\left( p_0'(s)-q_0'(s) \right) \right)  \nonumber \\
  & ~~~ +\sqrt{2}\left( p_0(s)+q_0(s) \right)\left( p_0(s)+q_0(s)-\frac{\rho}{\sqrt{2}} \right) . \label{eq:p0q0-1}
\end{align}
\end{theorem}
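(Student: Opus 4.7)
The plan is to construct the special solutions through a Riemann--Hilbert (RH) analysis of the deformed Pearcey kernel. First I would formulate an RH problem for a $3\times 3$ matrix-valued function $\Psi(z;s,\rho,\beta)$ whose jumps encode the integrable structure of $K^\Pe$ together with the constant jump factor $1-\gamma$ on the thinning interval $(-s,s)$. Following the standard Its--Izergin--Korepin--Slavnov framework, the Fredholm determinant $\det(I-\gamma K^\Pe_{s,\rho})$ can be expressed in terms of this $\Psi$, and the eight functions $p_i(s),q_i(s)$ would be defined as specific entries, or combinations of entries, appearing in the expansion of $\Psi$ at $z=\infty$ (and possibly at the branch points). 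Differentiating $\Psi$ in the spectral variable $z$ and in the parameter $s$ produces a Lax pair whose compatibility condition yields the Hamiltonian system \eqref{eq:SystemEq}, the constraint \eqref{eq:sum0}, and identifies $H(s)$ in \eqref{pro:H} as the Hamiltonian; equivalently, \eqref{eq:H-sys} holds by construction.

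For the large $s$ asymptotics, the main task is a Deift--Zhou steepest descent analysis. I would introduce a $g$-function matching the cubic-root singularity characteristic of the Pearcey process, open lenses to turn oscillatory jumps into exponentially decaying ones, and construct: \textbf{(i)} a global parametrix away from $\pm s$ producing the leading exponential $e^{\frac12 \theta_3(s)}$ with $\theta_3(s)=\frac34 s^{4/3}+\frac{\rho}{2} s^{2/3}$; \textbf{(ii)} Pearcey-type local parametrices at the interior turning points to control the transition region; and \textbf{(iii)} most importantly, confluent hypergeometric parametrices at the endpoints $\pm s$ to resolve the constant jump $1-\gamma$. It is from step (iii) that the factor $|\Gamma(1-\beta)|$ and the logarithmic phase $-\beta i\bigl(\tfrac{4}{3}\ln s+\ln(9/2)\bigr)+\arg\Gamma(1-\beta)$ appearing in \eqref{eq:vtheta} emerge, while the trigonometric factors $\cos\vartheta(s),\sin\vartheta(s)$ in \eqref{thm:p-1-asy}--\eqref{thm:q-3-asy} come from matching the small-norm correction to these local models. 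Extracting the polynomial prefactor in $p_0,q_0$ requires computing the first two subleading coefficients of the global parametrix, which produce the $s^{2/3}$ term as well as the constant $\frac{\sqrt 2}{2}\bigl(\pm\frac{\rho^3}{54}+\frac{\rho}{2}\bigr)$.

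The small $s$ behavior \eqref{thm:p-0-asy-0}--\eqref{thm:q-1-asy-0} is comparatively easy: as $s\to 0^+$ the thinning interval shrinks and the RH problem degenerates to an undeformed problem posed by the bare Pearcey functions $\mathcal P,\mathcal Q$, so $\Psi(z;s)$ admits a regular expansion in $s$ whose leading term is computed from \eqref{eq:pearcey integral} and the differential equations satisfied by $\mathcal P,\mathcal Q$. The orders $\Boh(s)$ versus $\Boh(1)$ for the various entries reflect the parity structure inherited from the symmetric interval $(-s,s)$. Finally, to derive the coupled system \eqref{eq:eqforp-0}--\eqref{eq:p0q0-1} I would use the Hamiltonian flow \eqref{eq:H-sys} together with the constraint \eqref{eq:sum0} to algebraically eliminate $p_1,p_2,p_3,q_1,q_2,q_3$ in favor of $p_0,q_0$ and their first two derivatives, exploiting in particular the identities $p_0'=-\sqrt 2 p_3 q_2$ and $q_0'=\sqrt 2 p_2 q_1$, and then substitute back into the relations obtained by differentiating these expressions.

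The hard part will unquestionably be the large $s$ analysis. While the $\gamma=1$ steepest descent was already handled in \cite{DXZ20}, the thinning deformation at purely imaginary $\beta$ introduces non-decaying jumps on $(-s,s)$ that force the confluent hypergeometric parametrix at each endpoint; tracking its contribution through the global model and matching phases to the precise form of \eqref{eq:vtheta} demands careful control of \emph{subleading} corrections, since any error at order $o(1)$ in the phase would destroy the amplitude of the oscillatory leading terms in \eqref{thm:p-1-asy}--\eqref{thm:q-3-asy}. Accurate bookkeeping of the $\Gamma(1-\beta)$ factors from the two endpoints, and of the interaction between the thinning jump and the cubic $g$-function, is the principal technical challenge.
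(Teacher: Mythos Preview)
Your proposal is correct and follows essentially the same route as the paper: an RH problem $\Phi(z;s)$ with constant jumps (including the $1-\gamma$ factor on $(-s,s)$), a Lax pair yielding \eqref{eq:SystemEq}--\eqref{eq:sum0} with $p_0,q_0$ read off from the $z^{-1}$ coefficient at infinity and $p_k,q_k$ ($k=1,2,3$) from the local expansion at $z=s$, a Deift--Zhou analysis for large $s$ with confluent hypergeometric parametrices at the (rescaled) endpoints $\pm 1$ and a Pearcey parametrix at $0$, and a straightforward small-$s$ comparison with the bare Pearcey RH problem.

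Two small clarifications worth noting. First, the paper does not introduce a separate $g$-function; the normalization uses directly the diagonal exponent $\Theta(sz)$ inherited from the large-$z$ asymptotics of the Pearcey parametrix, and the factor $e^{\frac12\theta_3(s)}$ in \eqref{thm:p-1-asy}--\eqref{thm:q-3-asy} comes from undoing that normalization rather than from the global parametrix, which is purely algebraic (built from a scalar function $\lambda(\xi)=\bigl(\tfrac{\xi^2-\omega}{\xi^2-1}\bigr)^\beta$). Second, the derivation of the coupled system \eqref{eq:eqforp-0}--\eqref{eq:p0q0-1} hinges on the first integral
\[
p_3(s)q_1(s)=-\tfrac{1}{\sqrt2}\bigl(p_0(s)+q_0(s)-\tfrac{\rho}{\sqrt2}\bigr),
\]
whose constant of integration $\rho/2$ is fixed by the RH structure (equivalently by the small-$s$ limits \eqref{thm:p-0-asy-0}--\eqref{thm:q-1-asy-0}); you should make this explicit, since the abstract elimination you describe only determines $p_3q_1+\tfrac{1}{\sqrt2}(p_0+q_0)$ up to a constant.
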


We note that the Hamiltonian \eqref{pro:H} is equivalent to that used in Br\'ezin and Hikami \cite{BH} via an elementary transformation
\begin{equation}
	H^{\mathtt{BH}}(u,v,P_0,P_1,P_2, Q_0,Q_1,Q_2; s)=-H \left( -\frac{u}{\sqrt{2}},-P_0,-P_1,-P_2, -\frac{v}{\sqrt{2}}, Q_0,Q_1,Q_2;s \right),
\end{equation}
where $H^{\mathtt{BH}},u,v, P_0,P_1,P_2, Q_0,Q_1,Q_2$ stand for the notations in \cite{BH}. In the special case $\rho=0$, the nonlinear differential equation \eqref{eq:eqforp-0} is first obtained in \cite[Equation (3.25)]{BH}, and  the second order differential equation \eqref{eq:p0q0-1} can be viewed as the first integral of the coupled third order differential equations \cite[Equations (3.25) and (3.26)]{BH}; see also \cite[Equation (3.26)]{BH} for the other third order nonlinear differential equation for $p_0$ and $q_0$.

\subsection{An integral representation of the deformed Pearcey determinant and large gap asymptotics}
By setting
\begin{equation}\label{def:Fnotation}
F(s;\gamma,\rho):=\ln \det\left(I-\gamma K^\Pe_{s,\rho}\right),
\end{equation}
it comes out that $F$ admits an elegant integral representation in terms of the Hamiltonian $H(s)$ \eqref{pro:H}. In view of the remarkable and well-known connections between several classical distribution functions in random matrix theory and the Painlev\'e equations \cite{BIP19,JM80,TWBessel,TWAiry}, our next theorem provides an analogous result for the deformed Pearcey determinant.
\begin{theorem}\label{thm:TW}
With the function $F(s;\gamma,\rho)$ defined in \eqref{def:Fnotation}, we have
\begin{equation}\label{thm: IntRep}
F(s;\gamma,\rho)=2\int_0^sH(\tau) \ud \tau, \qquad s\in (0,+\infty),
\end{equation}
where $H(s)$ is the Hamiltonian \eqref{pro:H} associated with the family of solutions specified in Theorem \ref{thm:specialsol}. Moreover, $H(s)$ satisfies the following asymptotic behaviors: as $s \to 0^+$,
\begin{equation} \label{thm:H-asy-0}
H(s)=\Boh(1),
\end{equation}
and $s \to +\infty$,
\begin{align} \label{thm:H-asy-infty}
H(s)= \sqrt{3} \beta i s ^{\frac13}- \frac{\rho \beta i}{\sqrt{3} s^{\frac13}}-\frac{4\beta^2}{3s} -\frac{2\sqrt{3} \beta i}{9s} \cos(2\vartheta(s))+\Boh(s^{-\frac53}),
%H(s)= \begin{cases}  -\frac{3 s^{5/3}}{2^{11/3}} + \frac{\rho s}{4} - \frac{\rho^2 s^{1/3}}{3 \cdot 2^{7/3}} - \frac{1}{9 s} + \Boh(s^{-5/3}) , & \gamma = 1, \vspace{2mm} \\
%\sqrt{3} \beta i s ^{1/3}- \frac{\beta i  \rho }{\sqrt{3} \, s^{1/3}}-\frac{4\beta^2}{3s} -\frac{2\sqrt{3} \beta i}{9s} \cos(2\vartheta(s))+\Boh(s^{-5/3}), &  \gamma \in [0, 1),
%\end{cases}
\end{align}
where $\beta$ and $\vartheta(s)$ are defined in \eqref{def:alpha} and \eqref{eq:vtheta}, respectively.
%as $s \to +\infty,$
%\begin{equation}\label{def:alpha}
%\beta:=\frac{1}{2\pi i} \ln (1-\gamma) \qquad \textrm{for } \gamma \in [0, 1),
%\end{equation}
%\begin{equation} \label{eq:vtheta}
%\vartheta(s)=-\frac{3 \sqrt{3}}{8} s^{4/3} + \frac{\sqrt{3} \, \rho}{4} s^{\frac23}+\arg \Gamma(1-\beta) -\beta i \left(\frac{4}{3}\ln s+\ln \Big(\frac{9}{2} \Big)\right).
%\end{equation}
\end{theorem}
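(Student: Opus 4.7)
The plan is to prove the integral representation by establishing the pointwise differential identity $F'(s;\gamma,\rho)=2H(s)$ on $(0,+\infty)$ and then integrating against the obvious initial condition $F(0;\gamma,\rho)=\ln\det I=0$. To get the differential identity, I would start from the standard formula
\begin{equation*}
\frac{d}{ds}\ln\det\bigl(I-\gamma K^\Pe_{s,\rho}\bigr)=\gamma R(s,s;\gamma,\rho)+\gamma R(-s,-s;\gamma,\rho),
\end{equation*}
where $R$ is the resolvent kernel of $\gamma K^\Pe_{s,\rho}$, and then use the Its--Izergin--Korepin--Slavnov machinery to express $R(\pm s,\pm s)$ through the solution of a $4\times 4$ Riemann--Hilbert problem with jumps built from $\mathcal{P},\mathcal{Q}$ and a factor $(1-\gamma)$ on $(-s,s)$. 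The functions $p_i(s),q_i(s)$ in Theorem \ref{thm:specialsol} should be identified with specific matrix entries (or combinations thereof) evaluated at the endpoints $\pm s$, and the check reduces to verifying algebraically that the right-hand side of the resolvent formula matches $2H(s)$ as written in \eqref{pro:H}. The dynamical system \eqref{eq:SystemEq}--\eqref{eq:sum0} and the Hamiltonian structure \eqref{eq:H-sys} then arise automatically from differentiating the RH data in $s$, which also explains why the particular solutions of Theorem \ref{thm:specialsol} are the relevant ones.

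For the small-$s$ behavior \eqref{thm:H-asy-0}, I would just substitute \eqref{thm:p-0-asy-0}--\eqref{thm:q-1-asy-0} into \eqref{pro:H}. All the bilinear terms $p_0p_2q_1$, $p_3q_0q_2$, $p_1q_2$, $p_2q_3$, $sp_3q_1$ are manifestly $\Boh(1)$. The only potentially dangerous term is the $\frac{1}{2s}\left(p_1q_1-p_2q_2+p_3q_3\right)^2$ contribution, but \eqref{thm:p-1-asy-0}--\eqref{thm:q-1-asy-0} force each of $p_1q_1$, $p_2q_2$, $p_3q_3$ to be $\Boh(s)$, so the square is $\Boh(s^2)$ and division by $s$ gives $\Boh(s)$; hence $H(s)=\Boh(1)$.

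The large-$s$ expansion \eqref{thm:H-asy-infty} is the technical heart. Here I would substitute the asymptotics \eqref{thm:p-0-asy}--\eqref{thm:q-3-asy} directly into \eqref{pro:H}, exploiting the crucial structural fact that every $p_i$ carries the exponential factor $e^{\frac12\theta_3(s)+\frac{2}{3}\beta\pi i}$ while every $q_i$ carries $e^{-\frac12\theta_3(s)-\frac{2}{3}\beta\pi i}$, so the real exponentials cancel identically in every product $p_iq_j$. The leading power $\sqrt{3}\,\beta is^{1/3}$ then emerges from the dominant contributions of $sp_3(s)q_1(s)$, $\sqrt{2}\,p_0(s)p_2(s)q_1(s)$, and $\sqrt{2}\,p_3(s)q_0(s)q_2(s)$; the subleading $\rho$-correction $-\rho\beta i/(\sqrt{3}s^{1/3})$ arises from the next order in the same three terms plus $p_1q_2+p_2q_3$; the non-oscillatory $-4\beta^2/(3s)$ and the oscillatory $\cos(2\vartheta(s))$ pieces come from the $|\Gamma(1-\beta)|^2$ prefactors combined with $\sin\vartheta\cos\vartheta=\tfrac12\sin(2\vartheta)$ type identities and, for the non-oscillatory $\beta^2$ part, from the Euler reflection formula $|\Gamma(1-\beta)|^2=\frac{\pi\beta}{\sin(\pi\beta)}$ applied after carefully collecting the $\sin^2\vartheta+\cos^2\vartheta$-type combinations. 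The $\frac{1}{2s}(\cdots)^2$ term contributes at the same order and is essential for producing the correct $\beta^2$ coefficient.

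The main obstacle will be the large-$s$ computation: it is a lengthy bookkeeping of many oscillatory contributions at several orders in $s^{-2/3}$, where the cancellations must be tracked through trigonometric identities and the reflection formula. The secondary obstacle is the rigorous identification of the RH-problem quantities with the $(p_i,q_i)$ of Theorem \ref{thm:specialsol}, so that the resolvent at the endpoints genuinely equals $2H(s)$ with $H$ given by \eqref{pro:H}; this requires carefully matching the local parametrices near $\pm s$ with the dynamical system and using the constraint \eqref{eq:sum0} to remove the gauge freedom.
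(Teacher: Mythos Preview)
Your strategy for the integral representation and for the small-$s$ behavior is essentially the same as the paper's (modulo two slips: the Riemann--Hilbert problem here is $3\times 3$, not $4\times 4$, since $\mathcal P$ satisfies a third-order ODE; and the resolvent formula should read $\frac{d}{ds}F=-R(s,s)-R(-s,-s)$ with $R$ the resolvent of $\gamma K^\Pe_{s,\rho}$). The paper obtains $F'(s)=2H(s)$ exactly as you outline, by expressing both sides through the same entry of the local expansion of $\Phi$ at $z=s$, and then integrates using $H(s)=\Boh(1)$ near $0$.

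There is, however, a genuine gap in your plan for the large-$s$ asymptotics \eqref{thm:H-asy-infty}. Direct substitution of \eqref{thm:p-0-asy}--\eqref{thm:q-3-asy} into \eqref{pro:H} cannot produce the $s^{-1}$ terms with error $\Boh(s^{-5/3})$, because those asymptotics carry only a relative error $(1+\Boh(s^{-2/3}))$. Every bilinear term in \eqref{pro:H} is of size $s^{1/3}$ at leading order (e.g.\ $sp_3q_1\sim s\cdot s^{-1/3}\cdot s^{-1/3}=s^{1/3}$), so the unknown $\Boh(s^{-2/3})$ corrections in each factor contaminate $H(s)$ at order $\Boh(s^{-1/3})$, one full order above the terms $-\tfrac{4\beta^2}{3s}-\tfrac{2\sqrt 3\beta i}{9s}\cos(2\vartheta(s))$ you are trying to extract. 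No trigonometric or reflection-formula cancellation can rescue this, since the $\Boh(s^{-2/3})$ corrections in different $p_k,q_k$ are independent unspecified functions. The paper explicitly flags this obstruction and bypasses it: instead of going through $p_k,q_k$, it uses the identity $H(s)=-\frac{\gamma}{2\pi i}\bigl[(\Phi_1^{(0)}(s))_{21}+(\Phi_1^{(0)}(s))_{31}\bigr]$ and computes $\Phi_1^{(0)}(s)$ directly from the steepest-descent data (the analytic prefactor $E_1$ at $z=1$, the confluent-hypergeometric constants $\Upsilon_0,\Upsilon_1$, and the estimate $R=I+\Boh(s^{-2/3})$), which yields a closed formula for $H(s)$ whose leading terms can be read off without any delicate cancellation. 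To make your approach work you would either have to push the asymptotics of $p_k,q_k$ one order further (which the paper does not provide and would itself require the same RH computation), or switch to the paper's route through $\Phi_1^{(0)}(s)$.
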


The local behavior of $H$ near the origin \eqref{thm:H-asy-0} ensures the integral formula \eqref{thm: IntRep} is well-defined. Moreover, since the derivative of $H$ with respect to $s$ can be represented in terms of the functions $p_0(s)$ and  $q_0(s)$ (see \eqref{eq: dH-s} below), it then follows from \eqref{thm: IntRep} that, after an integration by parts, $F$ also admits a (complicated) integral representation only involving $p_0(s)$ and $q_0(s)$, which resembles the Tracy-Widom type formula for the classical distribution functions in random matrix theory \cite{TWBessel,TWAiry}.

A direct consequence of Theorem \ref{thm:TW} is that one can easily obtain the first few terms in the asymptotic expansion of $F(s;\gamma,\rho)$ as $s\to +\infty$, except for the constant term. In the literature, it is an important but challenging task to solve the so-called ``constant problem" in the large gap asymptotics \cite{Kra1}. By further exploring several differential identities for the Hamiltonian $H$ (see Proposition \ref{prop:H-diff} below), we have succeeded in resolving this problem for the present case and obtained the following large gap asymptotics.
\begin{theorem}\label{thm:FAsy}
With the function $F(s;\gamma,\rho)$ defined in \eqref{def:Fnotation}, we have, as $s\to +\infty$,
\begin{align}
  F(s;\gamma,\rho) = & \ \frac{3\sqrt{3} \beta i}{2} s ^{\frac43} - \sqrt{3} \rho \beta i  s^{\frac23}-\frac{8\beta^2}{3} \ln s \nonumber  \\
  & \ -2\beta^2\ln \left( \frac{9}{2} \right) +2\ln \left( G(1+\beta)G(1-\beta) \right)+\Boh(s^{-\frac23}), \qquad 0\leq \gamma <1, \label{main:F-asy}
\end{align}
uniformly for $\gamma$ and $\rho$ in any compact subset of $[0,1)$ and $\mathbb{R}$, respectively, where $\beta$ is given in \eqref{def:alpha} and $G(z)$ is the Barnes G-function.
\end{theorem}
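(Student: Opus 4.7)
The plan is to combine the integral representation \eqref{thm: IntRep} of Theorem \ref{thm:TW} with the large-$s$ expansion \eqref{thm:H-asy-infty} of the Hamiltonian to read off all $s$-dependent terms in $F(s;\gamma,\rho)$, and then resolve the notorious ``constant problem'' by exploiting the differential identities for $H$ in the deformation parameter $\beta$ alluded to in Proposition \ref{prop:H-diff}.

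First I would integrate the expansion \eqref{thm:H-asy-infty} termwise against $2\,\ud\tau$. The three non-oscillatory terms $\sqrt{3}\beta i\,\tau^{1/3}$, $-\rho\beta i/(\sqrt{3}\,\tau^{1/3})$ and $-4\beta^2/(3\tau)$ integrate directly to give the leading contributions $\tfrac{3\sqrt{3}\beta i}{2}s^{4/3}$, $-\sqrt{3}\rho\beta i\, s^{2/3}$ and $-\tfrac{8\beta^2}{3}\ln s$ in \eqref{main:F-asy}. The oscillatory remainder $-\tfrac{2\sqrt{3}\beta i}{9\tau}\cos(2\vartheta(\tau))$ is harmless: since $\vartheta'(\tau)$ grows like $\tau^{1/3}$ for large $\tau$, a single integration by parts converts the integrand into a $\Boh(\tau^{-4/3})$ density, hence contributes only $\Boh(s^{-4/3})$ upon tail integration, well inside the desired error. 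Combining this with the tail of the $\Boh(\tau^{-5/3})$ remainder in \eqref{thm:H-asy-infty} yields
\begin{equation*}
F(s;\gamma,\rho) = \tfrac{3\sqrt{3}\beta i}{2}s^{4/3} - \sqrt{3}\rho\beta i\, s^{2/3} - \tfrac{8\beta^2}{3}\ln s + C(\gamma,\rho) + \Boh(s^{-2/3}),
\end{equation*}
for a single $s$-independent quantity $C(\gamma,\rho)$ that must still be identified.

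The main obstacle is the evaluation of $C(\gamma,\rho)$. Here I would apply the classical ``integration in parameter'' technique: differentiate $F$ with respect to $\beta$, invoke a differential identity from Proposition \ref{prop:H-diff} to express $\partial_\beta H(s)$ (hence $\partial_\beta F(s;\gamma,\rho)$) in terms of the Riemann--Hilbert data and of the functions $(p_k,q_k)$ furnished by Theorem \ref{thm:specialsol}, and then let $s\to+\infty$. On the left-hand side, the expansion just derived forces $\partial_\beta C(\gamma,\rho)$ to emerge as the bounded surviving piece; on the right-hand side, substitution of \eqref{thm:p-0-asy}--\eqref{thm:q-3-asy} should collapse into elementary functions of $\beta$ together with $\ln\Gamma(1-\beta)$ and $\ln\Gamma(1+\beta)$. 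Recognising the outcome through the standard logarithmic-derivative identity for Barnes $G$, which rewrites $\frac{\ud}{\ud \beta}\ln\bigl[G(1+\beta)G(1-\beta)\bigr]$ in closed form involving $\pi\beta\cot(\pi\beta)$ and $\beta\ln(2\pi)$, and then integrating from the trivial endpoint $\beta=0$ (where $\gamma=0$ and $F\equiv 0$, so $C\equiv 0$), yields $C(\gamma,\rho) = -2\beta^2\ln(9/2) + 2\ln\bigl(G(1+\beta)G(1-\beta)\bigr)$ as required.

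I expect the hardest step to be the derivation and large-$s$ simplification of the $\partial_\beta$ differential identity for $H$. On the one hand, it must be cast in a form where every contribution has a controlled asymptotic. On the other hand, one must verify that all $\rho$-dependence drops out of $\partial_\beta C$ in the limit, which is non-obvious because the asymptotics of the $p_k$ and $q_k$ carry $\rho$ through both $\theta_3(s)$ and $\vartheta(s)$, and the needed cancellations occur only at the Hamiltonian level after delicate combinations of oscillatory factors. This type of cancellation is familiar from the constant-term analyses for the sine, Airy and Bessel kernels \cite{BCI16,BDIK15,Charlier18}, but here it must be carried out in the presence of the additional parameter $\rho$ and the more intricate structure of the Pearcey Riemann--Hilbert problem, and the uniformity claim in $\gamma,\rho$ on compact sets will follow once the asymptotics of Theorems \ref{thm:specialsol} and \ref{thm:TW} are shown to be uniform in these parameters.
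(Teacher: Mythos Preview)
Your strategy—integrate the large-$s$ expansion of $H$ to capture the $s$-dependent terms, then pin down the constant $C(\gamma,\rho)$ by differentiating in $\beta$ and integrating back from $\beta=0$—matches the paper's in all essential ingredients. The paper organizes the computation slightly differently: rather than first isolating the $s$-dependent terms and then solving for $C$, it combines \emph{two} identities from Proposition~\ref{prop:H-diff}, namely the action-differential relation \eqref{eq: action-diff} and the $\gamma$-identity \eqref{eq: dH-gamma}, to obtain the closed formula \eqref{eq:H-integral},
\[
\int_0^s H(\tau)\,\ud\tau=\int_0^\beta\Bigl(\sum_{k=0}^3 p_k(s)\,\partial_{\beta'}q_k(s)\Bigr)\ud\beta'\;-\;\text{(explicit boundary terms in }p_k,q_k,H\text{)},
\]
from which all terms of \eqref{main:F-asy}, including the constant, emerge at once after inserting the asymptotics of Theorem~\ref{thm:specialsol}. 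Your route will need \eqref{eq: action-diff} as well: the identity \eqref{eq: dH-gamma} alone only controls $\partial_\beta\int_0^s(\sum p_kq_k'-H)\,\ud\tau$, not $\partial_\beta\int_0^s H\,\ud\tau$, and it is precisely the action relation that closes this gap by expressing $\sum_k p_kq_k'-2H$ as a total $s$-derivative. With that in hand, the Barnes-$G$ contribution arises exactly as you anticipate, from $-2i\int_0^\beta\beta'\,\partial_{\beta'}\vartheta(s)\,\ud\beta'$ (see \eqref{vartheta-log}--\eqref{vartheta-barnes}), and the $\rho$-dependence of $C$ indeed drops out without a separate argument.
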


If $\gamma=0$, we have $\beta=0$. Since $G(1)=1$, the large gap asymptotics \eqref{main:F-asy} reads $F(s;0,\rho)=\Boh(s^{-\frac23})$, which is consistent with the fact that $F(s;0,\rho)=0$. If $\gamma=1$, our recent work
\cite{DXZ20} shows that
\begin{equation} \label{F-gamma=1-asy}
F(s;1,\rho)= -\frac{9 s^{\frac83}}{2^{\frac{17}3}} + \frac{\rho s^2}{4} - \frac{\rho^2 s^{{\frac43}}}{2^{{\frac{10}3}}} - \frac{2}{9} \ln s +\frac{\rho^4}{216} + C + \Boh(s^{-\frac{2}{3}}), \qquad s \to +\infty,
\end{equation}
uniformly for $\rho$ in any compact subset of $\mathbb{R}$, where $C$ is an undetermined constant independent of $\rho$ and $s$; see \cite[Theorem 1.1]{DXZ20}. Clearly, the asymptotics of $F(s;\gamma,\rho)$ is not uniformly valid for $\gamma \in [0,1]$, as can be seen from the fact that the exponent of the leading term drops by half, i.e., $s^{\frac83}$ in \eqref{F-gamma=1-asy} reduces to $s^{\frac43}$ in \eqref{main:F-asy}. Such kind of phenomenon seems `universal' since it also appears in the deformed Airy, sine and Bessel determinants; cf. \cite{Bot:Buck2018,BIP19,Bud:Bus95,Charlier18}.
%A heuristic explanation is that,  $K^\Pe_{s,\rho}$ is a positive definite trace class operator, whose norm is expected to be smaller than 1. Let us order the eigenvalues of  $K^\Pe_{s,\rho}$ as $1>\lambda_0(s) \geq \lambda_1(s) \geq \cdots > 0, $ it then follows that $\lim_{s \to +\infty} \lambda_k(s) = 1$ for any fixed $k \in \mathbb{Z}_{\geq 0}$. By Lidskii's theorem, we have
%\begin{equation*}
%\det\left(I-\gamma K^\Pe_{s,\rho}\right) = \prod_{k=0}^\infty (1- \gamma \lambda_k(s)).
%\end{equation*}
%The above expression implies that: (i) when $\gamma <1$, the infinite product is exponentially small as  $0<1- \gamma \lambda_k(s)<1$ for all $k$; (ii) when $\gamma = 1$, the infinite product tends to zero faster than the case $\gamma < 1$ because $1- \lambda_k(s)$ tends to 0 for any fixed $k \geq 0$.
To describe the transition rigorously, a delicate uniform asymptotic analysis is needed to capture the behaviour of $F(s;\gamma,\rho)$ as $s \to +\infty$ and $\gamma \to 1$.% at different speeds w.r.t. $s$.
We will not  address the transition problem in this paper, but refer interested readers to \cite{BDIK18} where a complete transition picture for the  sine process is demonstrated.

%A natural and interesting question is then to describe this transition as the parameter $\gamma$ varies, which is in general delicate and will not be addressed in this paper.

Finally, it is worthwhile to point out that a combination of the proof of Theorem \ref{thm:TW} and our work \cite{DXZ20} on $F(s;1,\rho)$ shows that the integral representation \eqref{thm: IntRep} still holds for $\gamma=1$, but the asymptotic behavior of $H$ at $+\infty$ should be replaced by
\begin{equation}
H(s)=-\frac{3 s^{\frac53}}{2^{\frac{11}{3}}} + \frac{\rho s}{4} - \frac{\rho^2 s^{\frac13}}{3 \cdot 2^{\frac73}} - \frac{1}{9 s} + \Boh(s^{-\frac53});
\end{equation}
which is readily seen from \eqref{F-gamma=1-asy}.

\subsection{Applications}
%{\blue With} the large gap asymptotics for the Pearcey process after thinning, {\blue one can also obtain} some information about the Pearcey process {\blue without thinning}.
By interpreting the Pearcey determinant \eqref{def:DeformedPearcey} as a moment generating function, we are able to obtain some useful information  about the Pearcey process. More precisely, let us denote by $N(s)$ to be the random variable for the number of particles in the Pearcey process falling into the interval $(-s,s)$. It is well-known that the following moment generating function of the occupancy number $N$
\begin{equation}
\mathbb{E}\left(e^{-2\pi \nu N(s)}\right)=\sum_{k=0}^{\infty}\mathbb{P}(N(s)=k)e^{-2 \pi \nu k}, \qquad \nu \geq 0,
\end{equation}
is equal to the Fredholm determinant $ \det\left(I-(1-e^{-2\pi \nu}) K^\Pe_{s,\rho}\right)$; cf. \cite{Johansson06,Sosh00}. This connection, together with Theorem \ref{thm:FAsy}, allows us to establish the expectation, variance and a central limit theorem for the counting function associated with the Pearcey process; see also \cite{Charlier18,CC20,SoshJSP} for the results about the sine, Airy, Bessel and other determinantal processes.
\begin{corollary}\label{cor:application}
As $s\to +\infty$, we have
\begin{equation}\label{eq:expecvar}
\mathbb{E}(N(s))=\mu(s)+\Boh\left( \frac{ \ln s }{s^{\frac23}} \right), \qquad  {\rm Var}(N(s))=\sigma(s)^2+\frac{1+\ln\left(\frac92\right)+\gamma_{{\rm E}}}{\pi^2} +\Boh\left( \frac{(\ln s)^2 }{s^{\frac23}} \right),
\end{equation}
where $\gamma_{{\rm E}}=-\Gamma'(1)\thickapprox 0.57721$ is Euler's constant,
\begin{equation}\label{def:musigma}
\mu(s)= \frac{3\sqrt{3}}{4\pi}s^{\frac43}-\frac{\sqrt{3}\rho}{2\pi}s^{\frac23},\qquad \sigma(s)^2=\frac{4}{3\pi^2}\ln s.
\end{equation}
Moreover, the random variable $\frac{N(s)-\mu(s)}{\sqrt{\sigma(s)^2}}$ converges in distribution to the normal law $\mathcal{N}(0,1)$ as $s\to +\infty$.
\end{corollary}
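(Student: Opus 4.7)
The plan is to transfer the large gap asymptotics of Theorem~\ref{thm:FAsy} to statistics of $N(s)$ through the moment generating function identity
\[ \mathbb{E}\bigl(e^{-2\pi\nu N(s)}\bigr) = \det\bigl(I-(1-e^{-2\pi\nu})K^\Pe_{s,\rho}\bigr) = e^{F(s;\,1-e^{-2\pi\nu},\,\rho)}, \qquad \nu \geq 0, \]
which extends analytically in $\nu$ to all of $\mathbb{C}$ since both sides are entire in $\nu$. Under the substitution $\gamma = 1-e^{-2\pi\nu}$ the parameter \eqref{def:alpha} becomes $\beta = i\nu$, so $\beta i = -\nu$ and $\beta^2 = -\nu^2$. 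Writing $L(\nu,s) := F(s;\,1-e^{-2\pi\nu},\,\rho)$, Theorem~\ref{thm:FAsy} therefore reads
\[
L(\nu,s) = -2\pi\nu\,\mu(s) + 2\pi^2\nu^2\sigma(s)^2 + \Phi(\nu) + \Boh(s^{-2/3}),
\]
where $\Phi(\nu) := 2\nu^2\ln(9/2) + 2\ln\bigl(G(1+i\nu)G(1-i\nu)\bigr)$ is an even entire function of $\nu$ with $\Phi(0)=0$ (with $G$ the Barnes $G$-function).

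Since $L(\cdot,s)$ is (up to the change of variable $t=-2\pi\nu$) the cumulant generating function of $N(s)$, one has $\mathbb{E}(N(s)) = -\frac{1}{2\pi}\partial_\nu L(0,s)$ and $\mathrm{Var}(N(s)) = \frac{1}{4\pi^2}\partial_\nu^2 L(0,s)$. Differentiating the explicit part of the expansion, and using $\Phi'(0)=0$ by parity, yields $\partial_\nu L(0,s) = -2\pi\mu(s) + (\text{error})$ and $\partial_\nu^2 L(0,s) = 4\pi^2\sigma(s)^2 + \Phi''(0) + (\text{error})$. Inserting the classical Taylor expansion $\ln G(1+z) = \tfrac{z}{2}\ln(2\pi) - \tfrac{z}{2} - \tfrac{1+\gamma_{{\rm E}}}{2}z^2 + \Boh(z^3)$ gives $\Phi''(0) = 4\bigl(\ln(9/2) + 1 + \gamma_{{\rm E}}\bigr)$, producing, after division by $(2\pi)^2$, precisely the constant $\frac{1+\ln(9/2)+\gamma_{{\rm E}}}{\pi^2}$ appearing in \eqref{eq:expecvar}.

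For the central limit theorem, set $T(s) := (N(s)-\mu(s))/\sigma(s)$ and choose $\nu = -t/(2\pi\sigma(s))$ for $t$ in a bounded real interval. Substituting into the expansion,
\[ \mathbb{E}\bigl(e^{t T(s)}\bigr) = \exp\!\Bigl(-\tfrac{t\mu(s)}{\sigma(s)} + L\bigl(-\tfrac{t}{2\pi\sigma(s)},\,s\bigr)\Bigr) = \exp\!\Bigl(\tfrac{t^2}{2} + \Phi\bigl(-\tfrac{t}{2\pi\sigma(s)}\bigr) + \Boh(s^{-2/3})\Bigr). \]
As $\sigma(s)\to\infty$, smoothness of $\Phi$ at the origin together with $\Phi(0)=0$ gives $\Phi(-t/(2\pi\sigma(s))) = \Boh(1/\sigma(s)^2)\to 0$, so the moment generating function of $T(s)$ converges to $e^{t^2/2}$ pointwise on a real neighborhood of the origin. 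L\'evy's continuity theorem then furnishes $T(s)\to\mathcal{N}(0,1)$ in distribution.

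The principal technical obstacle is the rigorous justification of the formal differentiations in $\nu$ and of the evaluation of $L(\nu,s)$ at negative real (more generally, complex) $\nu$, which is required both for extracting derivatives at $\nu=0$ and for the CLT substitution when $t>0$. The uniformity in Theorem~\ref{thm:FAsy} is stated only for real $\gamma \in [0,1)$, whereas $L(\nu,s)$ is analytic in $\nu$ in a complex neighborhood of the origin. One therefore needs to verify that the Riemann-Hilbert analysis underlying Theorem~\ref{thm:FAsy} carries through for $\gamma$ in a complex neighborhood of $[0,1)$, delivering a uniform remainder $\Boh(s^{-2/3})$ on some disk $|\nu|\le r$. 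Cauchy's integral formula then bounds the $n$-th derivative of the remainder at $0$ by $\Boh(s^{-2/3}r^{-n})$; choosing $r \asymp 1/\ln s$ (the natural scale dictated by the $\nu^2\ln s$ term in $L$) produces exactly the $\ln s$ and $(\ln s)^2$ factors in the error terms of \eqref{eq:expecvar}.
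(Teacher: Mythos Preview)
Your approach is essentially the same as the paper's: both use the moment generating function identity, substitute $\gamma=1-e^{-2\pi\nu}$ (so $\beta=i\nu$) into Theorem~\ref{thm:FAsy}, extract the first two cumulants by differentiating at $\nu=0$, identify the constant via the quadratic Taylor coefficient of $\ln G(1\pm i\nu)$, and deduce the CLT from pointwise convergence of the MGF to $e^{t^2/2}$.

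The one substantive difference is in how the $\ln s$ and $(\ln s)^2$ losses in the error terms are justified. The paper does not argue via a complex extension plus Cauchy estimates; instead it appeals to a direct differentiated estimate coming out of the Riemann--Hilbert analysis, namely $\partial_\beta^k R(z)=\partial_\beta^k R_1(z)\,s^{-2/3}+\Boh((\ln s)^k s^{-4/3})$ uniformly in $z$ (see \eqref{eq:R-est-beta'}). This already controls $\partial_\nu^k$ of the remainder in $F$ for real $\nu$, so no analytic continuation in $\gamma$ is needed. Your Cauchy route is a legitimate alternative, but your stated reason for the radius $r\asymp 1/\ln s$ is slightly off: the explicit $\nu^2\ln s$ term has already been subtracted from the remainder, so it does not force a shrinking disk. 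The genuine constraint comes from the parametrices in the RH analysis, which contain factors like $s^{4\beta/3}$ and $f'(1)^\beta$ (cf.\ \eqref{eq:c-1}); these stay bounded for complex $\beta$ only when $|\Re\beta|\,\ln s=\Boh(1)$, i.e.\ $|\Im\nu|\lesssim 1/\ln s$. With that correction your Cauchy argument reproduces exactly the $(\ln s)^k s^{-2/3}$ bounds, and the two justifications are equivalent in strength.
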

\begin{proof}
On the one hand, it is readily seen that, as $\nu \to 0$,
\begin{equation}\label{eq:momentexpan}
\mathbb{E}\left(e^{-2\pi \nu N(s)}\right)=1-2\pi \mathbb{E}(N(s))\nu+2\pi^2 \mathbb{E}(N(s)^2 )\nu^2+\Boh(\nu^3).
\end{equation}
On the other hand, by taking $\gamma=1-e^{-2 \pi \nu}$ in \eqref{main:F-asy}, we have
\begin{align}\label{eq:detexpansion1}
 &  \det\left(I-(1-e^{-2\pi \nu}) K^\Pe_{s,\rho}\right)
 \nonumber
 \\
 &=\left(\frac92\right)^{2\nu^2}G(1+\nu i)^2G(1-\nu i)^2 e^{-2\pi\mu(s)\nu+2\pi^2\sigma(s)^2\nu^2}(1+\Boh(s^{-\frac23})),\qquad s\to+\infty,
\end{align}
where the functions $\mu(s)$ and $\sigma(s)^2$ are defined in \eqref{def:musigma}. Recall the following definition of the Barnes G-function (see \cite[Equation 5.17.4]{DLMF}):
\begin{equation}\label{eq:G}
\ln (G(1+z))=\frac{z}{2}\ln(2\pi)-\frac{z(z+1)}{2}+z\ln(\Gamma(1+z))-\int_0^z \ln (\Gamma(1+x))\ud x, \qquad \Re z>-1,
\end{equation}
it follows that
\begin{equation}
G(1+z)=1+\frac{\ln(2\pi)-1}{2}z+\left(\frac{(\ln(2\pi)-1)^2}{8}-\frac{1+\gamma_{\textrm{E}}}{2}\right)z^2+\Boh(z^3),\qquad z\to 0,
\end{equation}
where $\gamma_{{\rm E}}$ is Euler's constant. This gives us the following expansion for the leading term in \eqref{eq:detexpansion1} as $\nu \to 0$:
\begin{align}
& \left(\frac92\right)^{2\nu^2}G(1+\nu i)^2G(1-\nu i)^2 e^{-2\pi\mu(s)\nu+2\pi^2\sigma(s)^2\nu^2} \nonumber \\
&  = 1-2\pi \mu(s)\nu+2\left(\ln \left(\frac92\right)+1+\gamma_\textrm{E}+\pi^2(\mu(s)^2+\sigma(s)^2)\right)\nu^2+\Boh(\nu^3). \label{eq:detexpansion2}
\end{align}
Since
\begin{equation}\label{eq:relationE}
\mathbb{E}\left(e^{-2\pi \nu N(s)}\right)=\det\left(I-(1-e^{-2\pi \nu}) K^\Pe_{s,\rho}\right),
\end{equation}
we have from \eqref{eq:momentexpan} that
\begin{align*}
\mathbb{E}(N(s)) &= -\frac{1}{2 \pi} \frac{\partial}{\partial \nu} \det\left(I-(1-e^{-2\pi \nu}) K^\Pe_{s,\rho}\right) \Big|_{\nu = 0}, \\
{\rm Var}(N(s)) &= \frac{1}{4 \pi^2} \left[ \frac{\partial^2}{\partial \nu^2} \det\left(I-(1-e^{-2\pi \nu}) K^\Pe_{s,\rho}\right) - \Big( \frac{\partial}{\partial \nu} \det\left(I-(1-e^{-2\pi \nu}) K^\Pe_{s,\rho}\right) \Big)^2 \right] _{\nu = 0}.
\end{align*}
Substituting \eqref{eq:detexpansion1} and \eqref{eq:detexpansion2} into the above formulas, we obtain  \eqref{eq:expecvar}. Note that the additional factors $\ln s$ and $(\ln s)^2$ appear in the error terms due to the derivative with respect to $\nu$; see also \eqref{eq:R-est-beta'} below.

%This, together with \eqref{eq:detexpansion1} and uniformity of the expansion in $\nu$, implies that, for large positive $s$,
% \begin{align}\label{eq:detexpansion2}
% & \det\left(I-(1-e^{-2\pi \nu}) K^\Pe_{s,\rho}\right)
% \nonumber
% \\
% &=\left(1-2\pi \mu(s)\nu+2\left(\ln \left(\frac92\right)+1+\gamma_\textrm{E}+\pi^2(\mu(s)^2+\sigma(s)^2)\right)\nu^2+\Boh(\nu^3)\right)
% \nonumber
% \\
% &~~~\times (1+\Boh(s^{-\frac23})).
%\end{align}
%Since
%\begin{equation}\label{eq:relationE}
%\mathbb{E}\left(e^{-2\pi \nu N(s)}\right)=\det\left(I-(1-e^{-2\pi \nu}) K^\Pe_{s,\rho}\right),
%\end{equation}
%we obtain \eqref{eq:expecvar} by comparing the coefficients of $\nu$ and $\nu^2$ in \eqref{eq:momentexpan} and \eqref{eq:detexpansion2}.

To shown the central limit theorem, we observe from \eqref{eq:relationE} and \eqref{eq:detexpansion1} that
\begin{equation}
\mathbb{E}\left(e^{t\cdot \frac{N(s)-\mu(s)}{\sqrt{\sigma(s)^2}}}\right) \to e^{\frac{t^2}{2}}, \qquad s\to+\infty,
\end{equation}
which implies the convergence of $\frac{N(s)-\mu(s)}{\sqrt{\sigma(s)^2}}$ in distribution to the normal law $\mathcal{N}(0,1)$.

This completes the proof of Corollary \ref{cor:application}.
\end{proof}

The rest of this paper is devoted to the proofs of our main results. Following the general strategy established in \cite{Bor:Dei2002,DIZ97}, the proofs essentially boil down to the analysis of a $3\times 3$ Riemann-Hilbert (RH) problem.  In Section \ref{sec:RHP}, we recall an RH characterization of the Pearcey kernel given in \cite{BK3} and relate $\frac{\ud}{\ud s}F(s;\gamma, \rho)$ to a $3 \times 3$ RH problem for $\Phi$ with constant jumps. We then derive a Lax pair for $\Phi$ in Section \ref{sec:Lax}, and the system of differential equations follows from the associated compatibility conditions. Several remarkable differential identities for the Hamiltonian are also included therein for later use. We carry out a Deift-Zhou steepest descent analysis \cite{DZ93} on the RH problem for $\Phi$ as $s\to+\infty$ and $s\to 0^+$ in Section \ref{sec:AsyPhiinfty} and Section \ref{sec:AsyPhi0}, respectively. These asymptotic outcomes, together with the differential identities for the Hamiltonian, will finally lead to the proofs of our main results, as presented in Section \ref{sec:proofs}.

\section{An RH formulation} \label{sec:RHP}

\subsection{An RH characterization of the Pearcey kernel}

Our starting point is an alternative representation of the Pearcey kernel $K^\Pe$ via a $3 \times 3$ RH problem, as shown in \cite{BK3} and stated next.

\begin{rhp} \label{rhp: Pearcey}
We look for a $3 \times 3$ matrix-valued function $\Psi(z)=\Psi(z;\rho)$ satisfying
\begin{itemize}
\item[\rm (1)] $\Psi(z)$ is defined and analytic in $\C \setminus
\{\cup_{j=0}^5\Sigma_j \cup \{ 0 \} \}$,
where
\begin{equation}\label{def:sigmai}
\begin{aligned}
&\Sigma_0=(0,+\infty), \qquad \Sigma_1=e^{\frac{\pi i}{4}}(0,+\infty), \qquad \Sigma_2=e^{\frac{ 3 \pi i}{4}}(0,+\infty),
\\
&\Sigma_3=(-\infty, 0), \qquad \Sigma_4=e^{-\frac{3\pi i}{4}}(0,+\infty), \qquad \Sigma_5=e^{-\frac{\pi i}{4}}(0,+\infty),
\end{aligned}
\end{equation}
with the orientations as shown in Figure \ref{fig:Pearcey}.

\item[\rm (2)] For $z\in \Sigma_k$, $k=0,1,\ldots,5$, the limiting values
\[ \Psi_+(z) = \lim_{\substack{\zeta \to z \\\zeta\textrm{ on $+$-side of }\Sigma_k}}\Psi(\zeta), \qquad
   \Psi_-(z) = \lim_{\substack{\zeta \to z \\\zeta\textrm{ on $-$-side of }\Sigma_k}}\Psi(\zeta), \]
exist, where the $+$-side and $-$-side of $\Sigma_k$ are the sides
which lie on the left and right of $\Sigma_k$, respectively, when
traversing $\Sigma_k$ according to its orientation. These limiting
values satisfy the jump relation
\begin{equation}\label{jumps:M}
\Psi_{+}(z) = \Psi_{-}(z)J_{\Psi}(z),\qquad z\in \cup_{j=0}^5\Sigma_j,
\end{equation}
where
\begin{equation}\label{def:JPsi}
J_{\Psi}(z):= \left\{
        \begin{array}{ll}
          \begin{pmatrix} 0&1&0 \\ -1&0&0 \\ 0&0&1 \end{pmatrix}, & \qquad \hbox{$z\in \Sigma_0$,} \\
          \begin{pmatrix} 1&0&0 \\ 1&1&1 \\ 0&0&1 \end{pmatrix}, & \qquad \hbox{$z\in \Sigma_1$,} \\
          \begin{pmatrix} 1&0&0 \\ 0&1&0 \\ 1&1&1 \end{pmatrix}, & \qquad \hbox{$z\in \Sigma_2$,} \\
          \begin{pmatrix} 0&0&1 \\ 0&1&0 \\ -1&0&0 \end{pmatrix}, & \qquad \hbox{$z\in \Sigma_3$,} \\
          \begin{pmatrix} 1&0&0 \\ 0&1&0 \\ 1&-1&1 \end{pmatrix}, & \qquad \hbox{$z\in \Sigma_4$,} \\
          \begin{pmatrix} 1&0&0 \\ 1&1&-1 \\ 0&0&1 \end{pmatrix}, & \qquad \hbox{$z\in \Sigma_5$.}
        \end{array}
      \right.
\end{equation}

\item[\rm (3)]  As $z \to \infty$ and $\pm\Im z>0$, we have
\begin{equation}\label{eq:asyPsi}
 \Psi(z)=
\sqrt{\frac{2\pi}{3}} e^{\frac{\rho^2}{6}}i \Psi_0
\left(I+ \frac{\Psi_1}{z} +\mathcal \Boh(z^{-2}) \right)\diag \left(z^{-\frac13},1,z^{\frac13} \right)L_{\pm} e^{\Theta(z)},
\end{equation}
where
\begin{equation} \label{asyPsi:coeff}
\Psi_0 =  \begin{pmatrix}
1 & 0 & 0 \\
0 & 1  & 0 \\
\kappa_3(\rho)+\frac{2\rho}{3} & 0 & 1
\end{pmatrix}, \qquad
\Psi_1 = \begin{pmatrix}
0 & \kappa_3(\rho) & 0 \\
\widetilde\kappa_6(\rho) & 0 & \kappa_3(\rho) + \frac{\rho}{3} \\
0 & \widehat \kappa_6(\rho)  & 0
\end{pmatrix},
\end{equation}
with
\begin{equation} \label{poly:kappa3}
	\kappa_3(\rho) = \frac{\rho^3}{54} - \frac{\rho}{6},
\end{equation}
$$\widetilde\kappa_6(\rho) = \kappa_6(\rho) + \frac{\rho}{3} \kappa_3(\rho) - \frac{1}{3},\qquad \widehat\kappa_6(\rho) = \kappa_6(\rho) -\kappa_3(\rho)^2 + \frac{\rho^2}{9} - \frac{1}{3},$$
and
\begin{equation} \label{poly:kappa6}
	\kappa_6(\rho) = \frac{\rho^6}{5832} -  \frac{\rho^4}{162} - \frac{\rho^2}{72} + \frac{7}{36}.
\end{equation}
Moreover, $L_{\pm}$ are the constant matrices
\begin{align}\label{def:Lpm}
L_{+}=
\begin{pmatrix}
-\omega & \omega^2 & 1 \\ -1&1&1 \\ -\omega^2 & \omega & 1
\end{pmatrix},
\qquad
L_{-}=
\begin{pmatrix}
\omega^2 & \omega & 1 \\ 1&1&1 \\ \omega & \omega^2 & 1
\end{pmatrix},
\end{align}
with $\omega=e^{\frac{2\pi i}{3}}$, and $\Theta(z)$ is given by
\begin{align}\label{def:Theta}
\Theta(z)=\Theta(z;\rho)&= \begin{cases}
\diag (\theta_1(z;\rho),\theta_2(z;\rho),\theta_3(z;\rho)), & \text{$\Im z >0$,} \\
\diag (\theta_2(z;\rho),\theta_1(z;\rho),\theta_3(z;\rho)), & \text{$\Im z <0$,} \\
\end{cases}
\end{align}
with
\begin{equation} \label{eq: theta-k-def}
\theta_k(z;\rho)=\frac34 \omega^{2k}z^{\frac43}+\frac{\rho}{2}\omega^kz^{\frac23}, \qquad k=1,2,3.
\end{equation}

\item[\rm (4)] $\Psi(z)$ is bounded near the origin.
\end{itemize}
\end{rhp}

\begin{figure}[h]
\begin{center}
   \setlength{\unitlength}{1truemm}
   \begin{picture}(100,70)(-5,2)
       %\put(40,40){\line(-2,-3){16}}
       \put(40,40){\line(-1,-1){20}}
       %\put(40,40){\line(-2,3){16}}
       \put(40,40){\line(-1,1){20}}
       \put(40,40){\line(-1,0){30}}
       \put(40,40){\line(1,0){30}}
       %\put(40,40){\line(2,3){16}}
       \put(40,40){\line(1,1){20}}
       %\put(40,40){\line(2,-3){16}}
       \put(40,40){\line(1,-1){20}}

       %\put(30,55){\thicklines\vector(2,-3){1}}
       \put(30,50){\thicklines\vector(1,-1){1}}
       \put(30,40){\thicklines\vector(1,0){1}}
       %\put(30,25){\thicklines\vector(2,3){1}}
       \put(30,30){\thicklines\vector(1,1){1}}
       %\put(50,55){\thicklines\vector(2,3){1}}
       \put(50,50){\thicklines\vector(1,1){1}}
       \put(50,40){\thicklines\vector(1,0){1}}
       %\put(50,25){\thicklines\vector(2,-3){1}}
       \put(50,30){\thicklines\vector(1,-1){1}}

       \put(39,36.3){$0$}

       \put(20,15){$\Sigma_4$}
       \put(20,63){$\Sigma_2$}
       \put(3,40){$\Sigma_3$}
       \put(60,15){$\Sigma_5$}
       \put(60,63){$\Sigma_1$}
       \put(75,40){$\Sigma_0$}

       \put(22,44){$\Theta_2$}
       \put(22,34){$\Theta_3$}
       \put(55,44){$\Theta_0$}
       \put(55,34){$\Theta_5$}
       \put(38,53){$\Theta_1$}
       \put(38,22){$\Theta_4$}
       %\put(55,40){$\Omega_1$}

       \put(40,40){\thicklines\circle*{1}}

   \end{picture}
   \vspace{-10mm}
   \caption{The jump contours $\Sigma_{k}$ and the regions $\Theta_k$, $k=0,1,\ldots,5$, for the RH problem for $\Psi$.}
   \label{fig:Pearcey}
\end{center}
\end{figure}
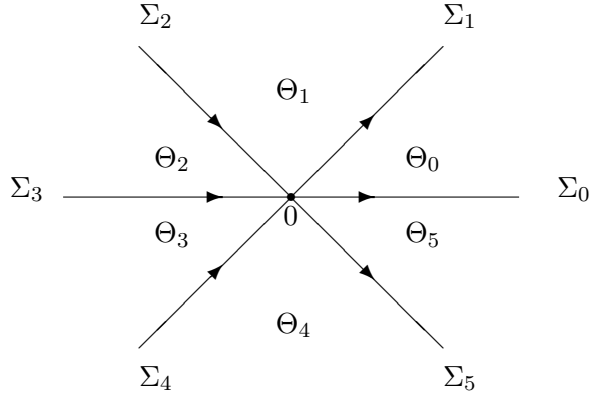

It is shown in \cite[Section 8.1]{BK3} that the above RH problem has a unique solution expressed in terms of solutions of the Pearcey differential equation
\eqref{eq:Pearcey1}. Indeed, note that \eqref{eq:Pearcey1} admits the following solutions:
\begin{equation}\label{def:pj}
\mathcal{P}_j(z)=\mathcal{P}_j(z;\rho)=\int_{\Gamma_j}e^{-\frac14 t^4-\frac{\rho}{2} t^2+it z}\ud t, \qquad j=0,1,\ldots,5,
\end{equation}
where
\begin{align*}
\Gamma_0&=(-\infty,+\infty), \quad && \Gamma_1=(i\infty, 0]\cup[0,\infty),
\\
\Gamma_2&=(i\infty,0]\cup[0,-\infty), \quad && \Gamma_3=(-i\infty, 0]\cup[0,-\infty),
\\
\Gamma_4&=(-i\infty,0]\cup[0,\infty), \quad && \Gamma_5=(-i\infty, i\infty).
\end{align*}
We then have
\begin{equation}\label{def:soltopsi}
\Psi(z)=\left\{
        \begin{array}{ll}
          \begin{pmatrix} -\mathcal{P}_2(z) & \mathcal{P}_1(z) & \mathcal{P}_5(z)\\ -\mathcal{P}_2'(z) & \mathcal{P}_1'(z) & \mathcal{P}_5'(z) \\-\mathcal{P}_2''(z) & \mathcal{P}_1''(z) & \mathcal{P}_5''(z) \end{pmatrix}, & \quad \hbox{$z\in \Theta_0$,} \\
          \begin{pmatrix}  \mathcal{P}_0(z) & \mathcal{P}_1(z) & \mathcal{P}_4(z)\\ \mathcal{P}_0'(z) & \mathcal{P}_1'(z) & \mathcal{P}_4'(z) \\ \mathcal{P}_0''(z) & \mathcal{P}_1''(z) & \mathcal{P}_4''(z) \end{pmatrix}, & \quad \hbox{$z\in \Theta_1$,} \\
          \begin{pmatrix} -\mathcal{P}_3(z) & -\mathcal{P}_5(z) & \mathcal{P}_4(z) \\ -\mathcal{P}_3'(z) & -\mathcal{P}_5'(z) & \mathcal{P}_4'(z) \\-\mathcal{P}_3''(z) & -\mathcal{P}_5''(z) & \mathcal{P}_4''(z) \end{pmatrix}, & \quad \hbox{$z\in \Theta_2$,} \\
          \begin{pmatrix} \mathcal{P}_4(z) & -\mathcal{P}_5(z) & \mathcal{P}_3(z)\\ \mathcal{P}_4'(z) & -\mathcal{P}_5'(z) & \mathcal{P}_3'(z) \\ \mathcal{P}_4''(z) & -\mathcal{P}_5''(z) & \mathcal{P}_3''(z)\end{pmatrix}, & \quad \hbox{$z\in \Theta_3$,} \\
          \begin{pmatrix} \mathcal{P}_0(z) & \mathcal{P}_2(z) & \mathcal{P}_3(z)\\ \mathcal{P}_0'(z) & \mathcal{P}_2'(z) & \mathcal{P}_3'(z) \\ \mathcal{P}_0''(z) & \mathcal{P}_2''(z) & \mathcal{P}_3''(z) \end{pmatrix}, & \quad \hbox{$z\in \Theta_4$,} \\
          \begin{pmatrix} \mathcal{P}_1(z) & \mathcal{P}_2(z) & \mathcal{P}_5(z)\\ \mathcal{P}_1'(z) & \mathcal{P}_2'(z) & \mathcal{P}_5'(z) \\ \mathcal{P}_1''(z) & \mathcal{P}_2''(z) & \mathcal{P}_5''(z) \end{pmatrix}, & \quad \hbox{$z\in \Theta_5$,}
        \end{array}
      \right.
\end{equation}
where $\Theta_k$, $k=0,1,\ldots,5$, is the region bounded by the rays $\Sigma_k$ and $\Sigma_{k+1}$ (with $\Sigma_6:=\Sigma_0$); see Figure \ref{fig:Pearcey} for an illustration.

Now, define
\begin{equation} \label{eq: tilde-psi}
\widetilde \Psi (z)=\widetilde \Psi (z;\rho)=
\begin{pmatrix}
\mathcal{P}_0(z) & \mathcal{P}_1(z) & \mathcal{P}_4(z)
\\
\mathcal{P}_0'(z) & \mathcal{P}_1'(z) & \mathcal{P}_4'(z)
\\
\mathcal{P}_0''(z) & \mathcal{P}_1''(z) & \mathcal{P}_4''(z)
\end{pmatrix}, \qquad z\in \mathbb{C},
\end{equation}
that is, $\widetilde \Psi$ is the analytic extension of the restriction of $\Psi$ on the region $\Theta_1$ to the whole complex plane. The Pearcey kernel \eqref{eq: pearcey kernel} then admits the following equivalent representation in terms of $\widetilde \Psi$ (see \cite[Equation (10.19)]{BK3}):
\begin{equation}\label{eq:PearceyRH}
K^\Pe(x,y;\rho)=\frac{1}{2 \pi i(x-y)}
\begin{pmatrix}
0 & 1 & 1
\end{pmatrix}
\widetilde \Psi(y;\rho)^{-1}\widetilde \Psi(x;\rho)
\begin{pmatrix}
1
\\
0
\\
0
\end{pmatrix}, \qquad x,y \in \mathbb{R}.
\end{equation}
As a consequence, it is readily seen that
\begin{equation}\label{eq:tildeKdef}
\gamma K^\Pe(x,y;\rho) = \frac{\vec{f}(x)^t\vec{h}(y)}{x-y},
\end{equation}
where
\begin{equation}\label{def:fh}
\vec{f}(x)=\begin{pmatrix}
f_1
\\
f_2
\\
f_3
\end{pmatrix}:=\widetilde \Psi(x)
\begin{pmatrix}
1
\\
0
\\
0
\end{pmatrix}, \qquad
\vec{h}(y)=\begin{pmatrix}
h_1
\\
h_2
\\
h_3
\end{pmatrix}
:=
\frac{\gamma}{2 \pi i}
\widetilde \Psi(y)^{-t} \begin{pmatrix}
0
\\
1
\\
1
\end{pmatrix}.
\end{equation}

\subsection{An RH problem related to $\frac{\ud }{\ud s}F$}

With the function $F$ defined in \eqref{def:Fnotation}, we have
\begin{align}
\frac{\ud}{\ud s}F(s;\gamma,\rho)&=\frac{\ud}{\ud s} \ln \det(I-\gamma K_{s,\rho}^\Pe)
=-\textrm{tr}\left((I-\gamma K_{s,\rho}^\Pe)^{-1} \gamma
\frac{\ud}{\ud s}K_{s,\rho}^\Pe\right) \nonumber \\
&=-R(s,s)-R(-s,-s), \label{eq:derivatives}
\end{align}
%and
%\begin{equation}\label{eq:derivative-gamma}
%\frac{\partial}{\partial \gamma}F(s;\gamma,\rho)=\frac{\partial}{\partial \gamma} \ln \det(I-\gamma K_{s,\rho}^\Pe)
%=-\textrm{tr}\left((I-\gamma K_{s,\rho}^\Pe)^{-1} K_{s,\rho}^\Pe \right)= - \frac{1}{\gamma} \int_{-s}^s R(v,v) \ud v,
%\end{equation}
where $R(u,v)$ stands for the kernel of the resolvent operator, that is,
$$R=\left(I-\gamma K_{s,\rho}^\Pe\right)^{-1}-I=\gamma K_{s,\rho}^\Pe\left(I-\gamma K_{s,\rho}^\Pe\right)^{-1}=\gamma \left(I-\gamma K_{s,\rho}^\Pe\right)^{-1}K_{s,\rho}^\Pe.$$
In view of \eqref{eq:tildeKdef}, we have that the kernel of the operator $\gamma K_{s,\rho}^\Pe$ is integrable for all $0\leq \gamma \leq 1$ in the sense of \cite{IIKS90}, which also implies that its resolvent kernel is integrable as well; cf. \cite{DIZ97,IIKS90}. Indeed, by setting
\begin{equation}\label{def:FH}
\vec{F}(u)=
\begin{pmatrix}
F_1 \\
F_2 \\
F_3
\end{pmatrix}:=\left(I-\gamma K_{s,\rho}^\Pe\right)^{-1}\vec{f}, \qquad \vec{H}(v)=\begin{pmatrix}
H_1 \\
H_2 \\
H_3
\end{pmatrix}
:=\left(I-\gamma K_{s,\rho}^\Pe\right)^{-1}\vec{h},
\end{equation}
we have
\begin{equation}\label{eq:resolventexpli}
R(u,v)=\frac{\vec{F}(u)^t\vec{H}(v)}{u-v}.
\end{equation}

We next establish a connection between the function $\frac{\ud}{\ud s} F(s;\gamma,\rho)$ and an RH problem with constant jumps, which is based on the fact that the resolvent kernel $R(u,v)$ is related to the following RH problem; see \cite{DIZ97}.
\begin{rhp} \label{rhp:Y}
We look for a $3 \times 3$ matrix-valued function
$Y(z)$ satisfying the following properties:
\begin{enumerate}
\item[\rm (1)] $Y(z)$ is defined and analytic in $\mathbb{C}\setminus [-s,s]$, where the orientation is taken from the left to the right.

\item[\rm (2)] For $x\in(-s,s)$, we have
\begin{equation}\label{eq:Y-jump}
 Y_+(x)=Y_-(x)(I-2\pi i \vec{f}(x)\vec{h}(x)^t),
 \end{equation}
where the functions $\vec{f}$ and $\vec{h}$ are defined in \eqref{def:fh}.
\item[\rm (3)] As $z \to \infty$,
\begin{equation}\label{eq:Y-infty}
 Y(z)=I+\frac{\mathsf{Y}_1}{z}+\mathcal \Boh(z^{-2}).
 \end{equation}

\item[\rm (4)] As $z \to \pm s$, we have $Y(z) = \mathcal \Boh(\ln(z \mp s))$.

\end{enumerate}
\end{rhp}
By \cite{DIZ97}, it follows that
\begin{equation}\label{eq:Yexpli}
Y(z)=I-\int_{-s}^s\frac{\vec{F}(w)\vec{h}(w)^t}{w-z}\ud w
\end{equation}
and
\begin{equation}\label{def:FH2}
\vec{F}(z)=Y(z)\vec{f}(z), \qquad \vec{H}(z)=Y(z)^{-t}\vec{h}(z).
\end{equation}

\begin{remark}
Since $\det(I-\gamma K_{s,\rho}^\Pe)$ stands for the gap probability for the thinned Pearcey process, it is strictly positive and hence invertible. This in turn guarantees the solvability of the RH problem for $Y$.
\end{remark}

Recall the RH problem \ref{rhp: Pearcey} for $\Psi$, we make the following undressing transformation to arrive at an RH problem with constant jumps. To proceed, the four rays $\Sigma_k$, $k=1,2,4,5$, emanating from the origin are replaced by their parallel lines emanating from some special points on the real line. More precisely, we replace $\Sigma_1$ and $\Sigma_5$ by their parallel rays $\Sigma_1^{(s)}$ and $\Sigma_5^{(s)}$ emanating from the point $s$, replace $\Sigma_2$ and $\Sigma_4$ by their parallel rays $\Sigma_2^{(s)}$ and $\Sigma_4^{(s)}$ emanating from the point $-s$. Furthermore, these rays, together with the real axis, divide the complex plane into six regions $\texttt{I-VI}$, as illustrated in Figure \ref{fig:X}.

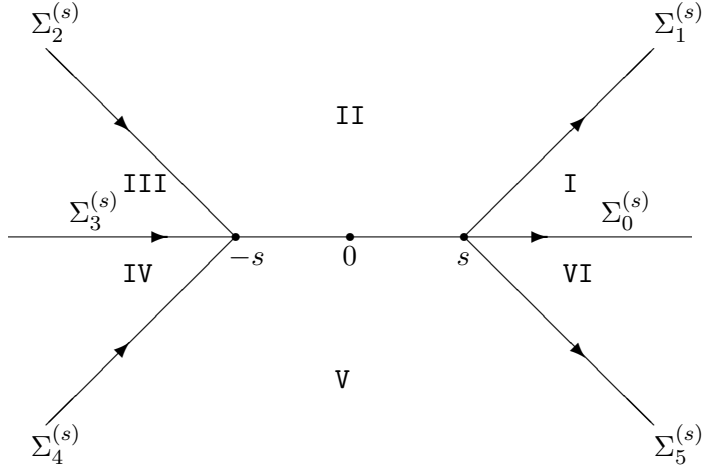
\begin{figure}[h]
\begin{center}
   \setlength{\unitlength}{1truemm}
   \begin{picture}(100,70)(-5,2)
       %\put(40,40){\line(-2,-3){16}}
       %\put(40,40){\line(-2,3){16}}
       \put(25,40){\line(-1,0){30}}
       \put(55,40){\line(1,0){30}}
       %\put(40,40){\line(2,3){16}}
       %\put(40,40){\line(2,-3){16}}

       \put(25,40){\line(1,0){30}}
       %\dashline{0.8}(25,40)(55,40)
%       \dashline{0.8}(15,65)(40,40)
%       \dashline{0.8}(40,40)(65,65)
%       \dashline{0.8}(40,40)(65,15)

       \put(25,40){\line(-1,-1){25}}
       \put(25,40){\line(-1,1){25}}

       \put(55,40){\line(1,1){25}}
       \put(55,40){\line(1,-1){25}}

       %\put(30,55){\thicklines\vector(2,-3){1}}
       \put(15,40){\thicklines\vector(1,0){1}}
       %\put(30,25){\thicklines\vector(2,3){1}}
       %\put(50,55){\thicklines\vector(2,3){1}}
       \put(65,40){\thicklines\vector(1,0){1}}
       %\put(50,25){\thicklines\vector(2,-3){1}}

       \put(10,55){\thicklines\vector(1,-1){1}}
       \put(10,25){\thicklines\vector(1,1){1}}
       \put(70,25){\thicklines\vector(1,-1){1}}
       \put(70,55){\thicklines\vector(1,1){1}}

       \put(39,36.3){$0$}
       \put(-2,11){$\Sigma_4^{(s)}$}

       \put(-2,67){$\Sigma_2^{(s)}$}
       \put(3,42){$\Sigma_3^{(s)}$}
       \put(80,11){$\Sigma_5^{(s)}$}
       \put(80,67){$\Sigma_1^{(s)}$}
       \put(73,42){$\Sigma_0^{(s)}$}

       \put(10,46){$\texttt{III}$}
       \put(10,34){$\texttt{IV}$}
       \put(68,46){$\texttt{I}$}
       \put(68,34){$\texttt{VI}$}
       \put(38,55){$\texttt{II}$}
       \put(38,20){$\texttt{V}$}
       %\put(55,40){$\Omega_1$}

       \put(40,40){\thicklines\circle*{1}}
       \put(25,40){\thicklines\circle*{1}}
       \put(55,40){\thicklines\circle*{1}}

       \put(24,36.3){$-s$}
       \put(54,36.3){$s$}

   \end{picture}
   \caption{Regions $\texttt{I-VI}$ and the contours $\Sigma_{k}^{(s)}$, $k=0,1,\ldots,5$, for the RH problem for $\Phi$.}
   \label{fig:X}
\end{center}
\end{figure}

We now define $\Phi(z)=\Phi(z;s)$ as follows:
\begin{align}\label{eq:YtoX}
 \Phi(z) = \frac{\Psi_0^{-1}}{\sqrt{\frac{2\pi}{3}} e^{\frac{\rho^2}{6}}i} \begin{cases}
   Y(z)\Psi(z), & \hbox{ $z \in \texttt{I}\cup \texttt{III}\cup \texttt{IV} \cup \texttt{VI}$,} \\
         Y(z)\widetilde \Psi(z), & \hbox{ $z \in \texttt{II}$,} \\
Y(z)\widetilde \Psi(z)
\begin{pmatrix}
1 & -1 & -1
\\
0 & 1 & 0
\\
0 & 0 & 1
\end{pmatrix}, & \hbox{$z \in \texttt{V}$,}
 \end{cases}
\end{align}
where $\widetilde \Psi(z)$ is defined in \eqref{eq: tilde-psi} and the constant matrix $\Psi_0$ is given in \eqref{asyPsi:coeff}. Then, $\Phi$ satisfies the following RH problem.

\begin{proposition}\label{rhp:X}
The function $\Phi(z)=\Phi(z;s)$ defined in \eqref{eq:YtoX} has the following properties:
\begin{enumerate}
\item[\rm (1)] $\Phi(z)$ is defined and analytic in $\mathbb{C}\setminus \{\cup^5_{j=0}\Sigma_j^{(s)} \cup[-s,s] \}$, where
\begin{equation}\label{def:sigmais}
\begin{aligned}
&\Sigma_0^{(s)}=(s,+\infty), ~~ &&\Sigma_1^{(s)}=s+e^{\frac{\pi i}{4}}(0,+\infty),  &&&\Sigma_2^{(s)}=-s+e^{\frac{ 3 \pi i}{4}}(0,+\infty),
\\
&\Sigma_3^{(s)}=(-\infty, -s), ~~ &&\Sigma_4^{(s)}=-s+e^{-\frac{3\pi i}{4}}(0,+\infty),  &&&\Sigma_5^{(s)}=s+e^{-\frac{\pi i}{4}}(0,+\infty),
%\\ ~~ &\Sigma_6^{(s)}=(-s,s),
\end{aligned}
\end{equation}
with the orientations from the left to the right; see Figure \ref{fig:X} for an illustration.

\item[\rm (2)] $\Phi$ satisfies the jump condition
\begin{equation}\label{eq:X-jump}
 \Phi_+(z)=\Phi_-(z)J_\Phi(z), \qquad z\in \cup^5_{j=0}\Sigma_j^{(s)} \cup(-s,s),
\end{equation}
where
\begin{equation}\label{def:JX}
J_\Phi(z):=\left\{
 \begin{array}{ll}
          \begin{pmatrix} 0 & 1 &0 \\ -1 & 0 &0 \\ 0&0&1 \end{pmatrix}, & \qquad \hbox{$z\in \Sigma_0^{(s)}$,} \\
          \begin{pmatrix} 1&0&0 \\ 1&1&1 \\ 0&0&1 \end{pmatrix},  & \qquad  \hbox{$z\in \Sigma_1^{(s)}$,} \\
          \begin{pmatrix} 1&0&0 \\ 0&1&0 \\ 1&1&1 \end{pmatrix},  & \qquad \hbox{$z\in \Sigma_2^{(s)}$,} \\
          \begin{pmatrix} 0 &0&1 \\ 0&1&0 \\ -1&0&0 \end{pmatrix}, & \qquad  \hbox{$z\in \Sigma_3^{(s)}$,} \\
          \begin{pmatrix} 1&0&0 \\ 0&1&0 \\ 1&-1&1 \end{pmatrix}, & \qquad  \hbox{$z\in \Sigma_4^{(s)}$,} \\
          \begin{pmatrix} 1&0&0 \\ 1&1&-1 \\ 0&0&1 \end{pmatrix}, & \qquad  \hbox{$z\in \Sigma_5^{(s)}$,} \\
          \begin{pmatrix} 1&1-\gamma&1-\gamma\\ 0&1&0 \\ 0&0&1 \end{pmatrix}, & \qquad  \hbox{$z\in (-s,s)$,}
        \end{array}
      \right.
 \end{equation}

\item[\rm (3)]As $z \to \infty$ and $\pm \Im z>0$, we have
\begin{equation}\label{eq:asyX}
\Phi(z)=
\left(I+ \frac{\mathsf{\Phi}_1}{z} + \frac{\mathsf{\Phi}_2}{z^2} +\mathcal \Boh(z^{-3}) \right)\diag \left(z^{-\frac13},1,z^{\frac13} \right)L_{\pm} e^{\Theta(z)},
\end{equation}
for some functions $\mathsf{\Phi}_1$ and $\mathsf{\Phi}_2$, where $L_{\pm}$ and $\Theta(z)$ are given in \eqref{def:Lpm} and \eqref{def:Theta}, respectively, and
\begin{equation} \label{X1-formula}
	\mathsf{\Phi}_1 = \Psi_1 + \Psi_0^{-1} \mathsf{Y}_1 \Psi_0
\end{equation}
with $\Psi_1$ and $\mathsf{Y}_1$ given in \eqref{asyPsi:coeff} and \eqref{eq:Y-infty}.

\item[\rm (4)] As $z \to s$, we have
\begin{align} \label{eq:X-near-s}
	\Phi(z) =& \ \widehat{\Phi}_1(z) \begin{pmatrix}
	1 & -\frac{\gamma}{2\pi i} \ln(z-s) & -\frac{\gamma}{2\pi i} \ln(z-s) \\
	0 & 1 & 0 \\
	0& 0 & 1
	\end{pmatrix}
\nonumber
\\
&\times
\begin{cases}
	I, & z \in \textrm{\texttt{II}}, \\
	\begin{pmatrix}
1 & -1 & -1
\\
0 & 1 & 0
\\
0 & 0 & 1
\end{pmatrix}, & z \in \textrm{\texttt{V}},
	\end{cases}
\end{align}
where the principal branch is taken for $\ln(z-s)$, and $\widehat{\Phi}_1(z)$ is analytic at $z=s$ satisfying the following expansion
\begin{equation}\label{eq: Phi-expand-s}
 \widehat{\Phi}_1(z) =\Phi_0^{(0)}(s)\left(I+\Phi_1^{(0)}(s)(z-s)+\Boh((z-s)^2)\right ),\qquad z\to s,
 \end{equation}
for some functions $\Phi_0^{(0)}(s)$ and $\Phi_1^{(0)}(s)$.

\item[\rm (5)] As $z \to -s$, the behavior of $\Phi(z)$ is determined by the following symmetric relation
\begin{equation} \label{eq: Xz+-}
  \Phi(z) = -\diag(1,-1,1)  \Phi(-z) \begin{pmatrix}
-1 & 0 & 0
\\
0 & 0 & 1
\\
0 & 1 & 0
\end{pmatrix}
\end{equation}
and \eqref{eq:X-near-s}.
%where
%\begin{equation} \label{eq: MatLam}
%	 \Lam =
%
%\end{equation}
%we have
%\begin{equation} \label{eq:X-near--s}
%	\Phi(z) = \widehat{\Phi}_2(z) \begin{pmatrix}
%	1 & \frac{\gamma}{2\pi i} \ln(z+s) & \frac{\gamma}{2\pi i} \ln(z+s) \\
%	0 & 1 & 0 \\
%	0& 0 & 1
%	\end{pmatrix} \begin{cases}
%	I, & z \in \texttt{II}, \\
%	\begin{pmatrix}
%1 & -1 & -1
%\\
%0 & 1 & 0
%\\
%0 & 0 & 1
%\end{pmatrix}, & z \in \texttt{V},
%	\end{cases}
%\end{equation}
%where the branch of $\ln(z+s)$ is taken such that $\arg(z+s) \in (0, 2\pi)$, and $\widehat{\Phi}_2(z)$ is analytic at $z=-s$ and satisfying the following expansion
%\begin{equation}\label{eq: Phi-expand--s}
% \widehat{\Phi}_2(z) =\Phi_0^{(1)}\left(I+\Phi_1^{(1)}(z+s)+\Boh(z+s)^2\right ).
%\end{equation}
\end{enumerate}
\end{proposition}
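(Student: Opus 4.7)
The plan is a direct verification of properties (1)--(5) from the piecewise definition \eqref{eq:YtoX}, using the RH data for $Y$ (RH problem~\ref{rhp:Y}), the RH data for $\Psi$ (RH problem~\ref{rhp: Pearcey}), and the fact that $\widetilde\Psi$ from \eqref{eq: tilde-psi} is entire. No new analysis is required; the whole proof is a careful tracking of where each piece of the definition sits relative to the original rays $\Sigma_k$, the translated rays $\Sigma_k^{(s)}$, and the segment $[-s,s]$.

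For analyticity, I would observe that the regions $\texttt{I},\texttt{III},\texttt{IV},\texttt{VI}$ are chosen to lie inside a single wedge $\Theta_k$ and to avoid $[-s,s]$, so the product $Y\Psi$ used there is analytic; regions $\texttt{II}$ and $\texttt{V}$ contain cuts of $\Psi$ but use the entire $\widetilde\Psi$ instead, which kills those internal jumps. For the jumps on the rays, on $\Sigma_0^{(s)}$ and $\Sigma_3^{(s)}$ the function $Y$ is analytic and both sides use $Y\Psi$, so $J_\Phi$ reduces to $J_\Psi$. On each lateral ray $\Sigma_k^{(s)}$ with $k=1,2,4,5$, one side uses $Y\Psi|_{\Theta_k}$ and the other uses $Y\widetilde\Psi$; writing $\widetilde\Psi$ as $\Psi|_{\Theta_k}$ times the telescoping product of $\Psi$-jumps along the chain of wedges from $\Theta_k$ to $\Theta_1$ makes $Y\Psi|_{\Theta_k}$ cancel and leaves precisely the jump matrix recorded in \eqref{def:JX}. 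On $(-s,s)$ both sides use $\widetilde\Psi$ (with the factor $M$ on the lower side), so the only contribution is the jump of $Y$ from \eqref{eq:Y-jump}; inserting \eqref{def:fh} for $\vec{f}$ and $\vec{h}$, a short computation gives
\[
\widetilde\Psi^{-1}\bigl(I-2\pi i\,\vec{f}\,\vec{h}^{\,t}\bigr)\widetilde\Psi \;=\; I - \gamma(E_{12}+E_{13}),
\]
and left multiplication by $M^{-1}$ to absorb $M$ produces the stated upper-triangular unipotent matrix with $(1,2)$ and $(1,3)$ entries equal to $1-\gamma$.

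For the large-$z$ asymptotics, substituting \eqref{eq:Y-infty} and \eqref{eq:asyPsi} into \eqref{eq:YtoX} makes the scalar prefactor $\sqrt{2\pi/3}\,e^{\rho^2/6}i$ and the constant matrix $\Psi_0$ cancel, and the $1/z$ coefficient becomes $\Psi_0^{-1}\mathsf{Y}_1\Psi_0+\Psi_1$, which is \eqref{X1-formula}; the coefficient $\mathsf{\Phi}_2$ is obtained from the next order in the expansion of $Y$. The local behavior at $z=s$ is extracted by factoring out a logarithmic model: since the jump on $(-s,s)$ is unipotent upper triangular, it is reproduced by the boundary values of the matrix $L(z):=I-\tfrac{\gamma\ln(z-s)}{2\pi i}(E_{12}+E_{13})$ with the principal branch. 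Consequently, $\Phi(z)L(z)^{-1}$ in region $\texttt{II}$ and $\Phi(z)M^{-1}L(z)^{-1}$ in region $\texttt{V}$ have no jump across $(-s,s)$ near $s$ and are bounded, hence analytic at $s$, yielding \eqref{eq:X-near-s} with $\widehat{\Phi}_1(z)$ analytic; Taylor expansion produces \eqref{eq: Phi-expand-s}. The symmetry \eqref{eq: Xz+-}, from which the local behavior at $-s$ follows, is obtained by checking that its right-hand side satisfies the same RH problem as $\Phi(z)$---the contours and jump matrices transform correctly under $z\mapsto -z$ combined with the sandwich by $\mathrm{diag}(1,-1,1)$ and the permutation matrix in \eqref{eq: Xz+-}---and then invoking uniqueness.

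The only step where I expect nontrivial bookkeeping is the jump on $(-s,s)$: one must substitute \eqref{def:fh} into \eqref{eq:Y-jump} and carry out the $3\times3$ matrix product $\widetilde\Psi^{-1}\bigl(I-2\pi i\,\vec{f}\,\vec{h}^{\,t}\bigr)\widetilde\Psi$, with particular care for the $M$ factor appearing in the definition of $\Phi$ on region $\texttt{V}$. Everything else reduces to tracking which piece of \eqref{eq:YtoX} applies on each side of each contour and multiplying.
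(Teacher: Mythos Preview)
Your proposal is correct and follows essentially the same approach as the paper: the paper's proof explicitly verifies only the jump on $(-s,s)$ via the computation $\widetilde\Psi^{-1}\bigl(I-2\pi i\,\vec{f}\,\vec{h}^{\,t}\bigr)\widetilde\Psi=I-\gamma(E_{12}+E_{13})$ together with the left multiplication by $M^{-1}$, declares the remaining jumps to follow directly from the RH data for $\Psi$, and obtains the symmetry \eqref{eq: Xz+-} by the same uniqueness argument you describe. Your write-up is simply a more detailed elaboration of the same verification, including the analyticity discussion and the logarithmic-model argument for the local behavior at $z=s$, which the paper states but does not spell out.
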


\begin{proof}
For the jump condition \eqref{eq:X-jump} and \eqref{def:JX}, we only need to check the jump over $(-s,s)$, while the other jumps follow directly from \eqref{eq:YtoX} and the RH problem \ref{rhp: Pearcey} for $\Psi$. By \eqref{def:fh} and \eqref{eq:Y-jump}, we have, for $-s<x<s$,
\begin{equation}
Y_-(x)^{-1}Y_+(x) = I-2\pi i  \vec{f}(x)\vec{h}(x)^t=\widetilde\Psi(x)
\begin{pmatrix}
1 & -\gamma & -\gamma
\\
0 & 1  & 0
\\
0 & 0 & 1
\end{pmatrix}\widetilde\Psi(x)^{-1}.
\end{equation}
This, together with \eqref{eq:YtoX}, implies that for $x\in(-s,s)$,
\begin{align}
J_\Phi(x) = \Phi_-(x)^{-1} \Phi_+(x) &= \begin{pmatrix}
1 & 1 & 1
\\
0 & 1 & 0
\\
0 & 0 & 1
\end{pmatrix} \widetilde \Psi(x)^{-1} Y_-(x)^{-1} Y_+(x) \widetilde \Psi(x)
\nonumber
\\
&= \begin{pmatrix}
1 & 1-\gamma & 1-\gamma
\\
0 & 1  & 0
\\
0 & 0 & 1
\end{pmatrix},
\end{align}
as desired.

Finally, the symmetric relation \eqref{eq: Xz+-} follows from the properties of the jump matrices $J_\Phi(z)$ in \eqref{def:JX} and the large $z$ behavior of $\Phi$ given in \eqref{eq:asyX}; see also the proof of a similar relation in \cite[Equation (2.52)]{DXZ20}.

This completes the proof of Proposition \ref{rhp:X}.
\end{proof}

The connection between the above RH problem and the derivative of $F(s;\gamma,\rho)$ is revealed in the following proposition.
\begin{proposition}\label{prop:derivativeandX}
With $F$ defined in \eqref{def:Fnotation}, we have
%\begin{align}
%	\frac{\partial}{\partial \rho} F(s;\gamma,\rho) &=
%\frac{\ud}{\ud \rho} \ln \det\left(I- \gamma K_{s,\rho}^\Pe\right)
%= -\frac{1}{2} \left[ (\mathsf{X}_1)_{12} + (\mathsf{X}_1)_{23} \right] + \frac{ \rho^3}{54},  \label{eq:derivativein-rho-X}
%\end{align}
\begin{align}
\frac{\ud}{\ud s} F(s;\gamma,\rho)&=\frac{\ud}{\ud s} \ln \det\left(I- \gamma K_{s,\rho}^{\Pe}\right) \nonumber
\\
&=-\frac{\gamma}{ \pi i} \left[ \left(\Phi_1^{(0)} (s) \right)_{21}+ \left(\Phi_1^{(0)} (s) \right)_{31}
\right],
\label{eq:derivativeinsX-2}
\end{align}
where $\Phi_1^{(0)} (s)$ is given in \eqref{eq: Phi-expand-s} and $(M)_{ij}$ stands for the $(i,j)$-th entry of a matrix $M$.
\end{proposition}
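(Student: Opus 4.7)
The plan is to start from \eqref{eq:derivatives} and reduce the problem to computing the two diagonal resolvent values $R(\pm s,\pm s)$, and then to extract each of them from the local data of $\Phi$ at $z=\pm s$ through the undressing \eqref{eq:YtoX}. First I would verify the vanishing $\vec{f}(x)^{t}\vec{h}(x)\equiv 0$: from \eqref{def:fh} one has $\vec{f}(x)^{t}\vec{h}(x)=\frac{\gamma}{2\pi i}(1,0,0)\,\widetilde{\Psi}(x)^{t}\widetilde{\Psi}(x)^{-t}(0,1,1)^{t}=\frac{\gamma}{2\pi i}(1,0,0)(0,1,1)^{t}=0$, and by \eqref{def:FH2} the same identity propagates to $\vec{F},\vec{H}$. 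Consequently $\vec{F}(u)^{t}\vec{H}(u)\equiv 0$, so applying L'H\^opital's rule to the integrable representation \eqref{eq:resolventexpli} yields the diagonal formula $R(v,v)=\vec{F}'(v)^{t}\vec{H}(v)$.

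Next I would rewrite \eqref{eq:YtoX} in region \texttt{II} as $Y(z)=c\,\Psi_{0}\Phi(z)\widetilde{\Psi}(z)^{-1}$ with $c=\sqrt{2\pi/3}\,e^{\rho^{2}/6}i$, and combine it with \eqref{def:FH2} and \eqref{def:fh} to get
\[
\vec{F}(z)=c\,\Psi_{0}\,\Phi(z)\begin{pmatrix}1\\0\\0\end{pmatrix},\qquad \vec{H}(z)=\frac{\gamma}{2\pi i\,c}\,\Psi_{0}^{-t}\,\Phi(z)^{-t}\begin{pmatrix}0\\1\\1\end{pmatrix}.
\]
The decisive observation is that the upper-triangular log-block in \eqref{eq:X-near-s} fixes $(1,0,0)^{t}$ when multiplied on the right, and its inverse transpose fixes $(0,1,1)^{t}$ on the right. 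Hence the logarithms drop out of both expressions and
\[
\vec{F}(z)=c\,\Psi_{0}\,\widehat{\Phi}_{1}(z)(1,0,0)^{t},\qquad \vec{H}(z)=\frac{\gamma}{2\pi i\,c}\,\Psi_{0}^{-t}\,\widehat{\Phi}_{1}(z)^{-t}(0,1,1)^{t}
\]
are analytic at $z=s$. Inserting the Taylor expansion \eqref{eq: Phi-expand-s} and exploiting the trivial cancellations $\Psi_{0}^{t}\Psi_{0}^{-t}=I$ and $\Phi_{0}^{(0)}(s)^{t}\Phi_{0}^{(0)}(s)^{-t}=I$ collapses the diagonal identity to
\[
R(s,s)=\frac{\gamma}{2\pi i}\,(1,0,0)\,\bigl(\Phi_{1}^{(0)}(s)\bigr)^{t}(0,1,1)^{t}=\frac{\gamma}{2\pi i}\Bigl[\bigl(\Phi_{1}^{(0)}(s)\bigr)_{21}+\bigl(\Phi_{1}^{(0)}(s)\bigr)_{31}\Bigr].
\]

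To finish, I would exploit the reflection symmetry \eqref{eq: Xz+-} to prove $R(-s,-s)=R(s,s)$. Repeating the previous local analysis at $z=-s$ and substituting $\Phi(-z)=-\diag(1,-1,1)\Phi(z)P$ (with $P$ the fixed constant matrix from \eqref{eq: Xz+-}) rewrites the coefficient appearing in the analogue of \eqref{eq: Phi-expand-s} at $-s$ in terms of $\Phi_{1}^{(0)}(s)$ conjugated by $\diag(1,-1,1)$ and $P$; since the two distinguished directions $(1,0,0)^{t}$ and $(0,1,1)^{t}$ are (up to scalar) stabilized by the relevant actions of these constant matrices, the row-sum $(\Phi_{1}^{(0)})_{21}+(\Phi_{1}^{(0)})_{31}$ is reproduced, giving $R(-s,-s)=R(s,s)$. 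Plugging this equality into \eqref{eq:derivatives} yields $\frac{d}{ds}F(s;\gamma,\rho)=-2R(s,s)$, which is exactly \eqref{eq:derivativeinsX-2}. I expect the only delicate step to be this final symmetry bookkeeping, because $\ln(z-s)$ and $\ln(-z-s)$ differ by a branch constant $\pm i\pi$ that has to be absorbed cleanly into the conjugating constant matrices before the two expansion coefficients at $\pm s$ can be identified; however, this reduces to a short piece of finite-dimensional linear algebra rather than any new analytic estimate.
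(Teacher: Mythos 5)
Your proposal is correct and follows essentially the same route as the paper: express $\vec F$, $\vec H$ via \eqref{def:fh}, \eqref{def:FH2}, \eqref{eq:YtoX} in terms of $\Phi$, compute the resolvent diagonal for $z\in\texttt{II}$ (the paper writes this as $R(z,z)=\frac{\gamma}{2\pi i}[(\Phi^{-1}\Phi')_{21}+(\Phi^{-1}\Phi')_{31}]$ and you reach the equivalent $\widehat\Phi_1$-form after observing that the log block in \eqref{eq:X-near-s} fixes $(1,0,0)^t$ and, in inverse-transpose, $(0,1,1)^t$), then double the contribution using \eqref{eq:derivatives} together with the reflection \eqref{eq: Xz+-}. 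One small imprecision in your final paragraph: $\diag(1,-1,1)$ does \emph{not} stabilize $(0,1,1)^t$ even up to scalar; rather, the symmetry $\Phi(z)=-\diag(1,-1,1)\Phi(-z)P$ injects one copy of $\diag(1,-1,1)$ into $\vec F$ and another into $\vec H$, and these two copies meet across the scalar product $\vec F'(-s)^t\vec H(-s)$ and cancel because $\diag(1,-1,1)^2=I$ (the $P$'s do stabilize the two vectors as you say), so the claim $R(-s,-s)=R(s,s)$ still holds and the proof goes through.
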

\begin{proof}
%We begin with the proof of \eqref{eq:derivativein-rho-X}. From \eqref{eq:Y-infty} and \eqref{eq:Yexpli}, it follows that
%\begin{equation}
%	\mathsf{Y}_1 = \int_{-s}^s \vec{F}(v)\vec{h}^t(v) \ud v  = \int_{-s}^s \begin{pmatrix}
%F_1(v) \\
%F_2(v) \\
%F_3(v)
%\end{pmatrix} \begin{pmatrix}
%h_1(v) & h_2(v) & h_3(v)
%\end{pmatrix} \ud v .
%\end{equation}
%This, together with \eqref{eq:lamdaderivative}, implies that
%\begin{equation}\label{eq:Frho1}
%	\frac{\partial}{\partial \rho} F(s;\gamma,\rho) = - \frac{1}{2} \left[ (\mathsf{Y}_1)_{12} + (\mathsf{Y}_1)_{23} \right].
%\end{equation}
%By \eqref{X1-formula}, it is readily seen that $\mathsf{Y}_1 = \Psi_0 (\mathsf{X}_1 - \Psi_1 ) \Psi_0^{-1}$. With the aid of the explicit expressions of $\Psi_0$ and $\Psi_1$ in \eqref{asyPsi:coeff}, we then have
%$(\mathsf{Y}_1)_{12} + (\mathsf{Y}_1)_{23}=(\mathsf{X}_1)_{12} + (\mathsf{X}_1)_{23}-\frac{\rho^3}{27}$, which gives us \eqref{eq:derivativein-rho-X} in view of \eqref{eq:Frho1}.

The proof of is similar to that of \cite[Equation (2.46)]{DXZ20}. For $z\in \texttt{II}$, we see from \eqref{def:fh}, \eqref{def:FH2} and \eqref{eq:YtoX} that
\begin{equation}\label{eq:FHinX}
\vec{F}(z)=Y(z)\vec{f}(z)=Y(z)\widetilde \Psi(z)\begin{pmatrix}
1
\\
0
\\
0
\end{pmatrix}=\sqrt{\frac{2\pi}{3}} e^{\frac{\rho^2}{6}} i \Psi_0 \Phi(z)
\begin{pmatrix}
1
\\
0
\\
0
\end{pmatrix}
\end{equation}
and
\begin{align}
\vec{H}(z)&=Y(z)^{-t}\vec{h}(z)= \frac{\Psi_0^{-t}}{\sqrt{\frac{2\pi}{3}} e^{\frac{\rho^2}{6}} i } \Phi(z)^{-t}\widetilde \Psi(z)^{t} \cdot \frac{\gamma }{2 \pi i} \widetilde \Psi(z)^{-t} \begin{pmatrix}
0
\\
1
\\
1
\end{pmatrix} \nonumber \\
&=\frac{\gamma }{2 \pi i} \frac{\Psi_0^{-t}}{\sqrt{\frac{2\pi}{3}} e^{\frac{\rho^2}{6}} i }  \Phi(z)^{-t}
\begin{pmatrix}
0
\\
1
\\
1
\end{pmatrix}.
\end{align}
Combining the above formulas and \eqref{eq:resolventexpli}, we obtain
\begin{equation} \label{eq: R-Xij}
	R(z,z) = \frac{\gamma}{2 \pi i} \left[ \left(\Phi(z)^{-1}\Phi'(z)\right)_{21}+\left(\Phi(z)^{-1}\Phi'(z)\right)_{31}
\right], \qquad  z\in \texttt{II}.
\end{equation}
This, together with \eqref{eq:derivatives} and \eqref{eq: Xz+-}, implies that
\begin{equation}
 \frac{\ud}{\ud s} F(s;\gamma,\rho) = -\frac{\gamma}{ \pi i} \left[\lim_{z \to s} \left(\left(\Phi(z)^{-1}\Phi'(z)\right)_{21}+\left(\Phi(z)^{-1}\Phi'(z)\right)_{31}\right)
\right], \qquad  z\in \texttt{II}.
\end{equation}
With the aid of the local behavior of $\Phi(z)$ as $z \to s$ given in \eqref{eq:X-near-s} and \eqref{eq: Phi-expand-s}, we arrive at \eqref{eq:derivativeinsX-2}.

This completes the proof of Proposition \ref{prop:derivativeandX}.
\end{proof}

\section{Lax pair equations and differential identities for the Hamiltonian}\label{sec:Lax}
In this section, we will derive a Lax pair for $\Phi(z;s)$ from the associated RH problem, which will lead to the proof of differential equations satisfied by $p_0$ and $q_0$ given in Theorem \ref{thm:specialsol}.
%From the RH problem for $\Phi$ given in Proposition \ref{rhp:X}, we derive  in this section. The related Hamiltonian is also obtained, together with an important relation with $\frac{\ud}{\ud s} F(s;\gamma,\rho)$.
Several useful differential identities for the Hamiltonian $H$ will also be presented for later use.

\subsection{Lax pair equations}
We start with the following Lax pair for $\Phi(z;s)$.
\begin{proposition}\label{pro:Lax pair}
  For the function $\Phi(z) = \Phi(z;s)$ defined in \eqref{eq:YtoX}, we have
  \begin{equation} \label{eq:Lax pair}
    \frac{\partial}{\partial z}\Phi(z;s) = L(z;s)\Phi(z;s), \quad  \frac{\partial}{\partial s}\Phi(z;s) = U(z;s)\Phi(z;s),
  \end{equation}
  where
   \begin{equation} \label{eq:L}
L(z;s)=\begin{pmatrix}
      0 & 0 & 0 \\
      0& 0 & 0 \\
      z   & 0& 0
    \end{pmatrix}+A_0(s)+\frac{A_1(s)}{z-s}+\frac{A_2(s)}{z+s},
  \end{equation}
  and
   \begin{equation} \label{eq:U}
U(z;s)=
     - \frac{A_1(s)}{z-s}+\frac{A_2(s)}{z+s}
  \end{equation}
   with
  \begin{equation} \label{eq:A-k}
    A_0(s)= \begin{pmatrix}
      0 & 1 & 0 \\
      \sqrt{2} p_0(s)& 0 & 1 \\
     0   & \sqrt{2} q_0(s)& 0
    \end{pmatrix}, \qquad   A_1(s)  = \begin{pmatrix}
      q_1(s) \\
      q_2(s) \\
    q_3(s)
    \end{pmatrix}
    \begin{pmatrix}
      p_1(s) &  p_2(s) & p_3(s)
    \end{pmatrix}, \end{equation}
    and $A_2(s)$ being related to $A_1(s)$ through the following relation
   \begin{equation}\label{eq:A-sym}
A_2(s)=\diag(1,-1,1)A_1(s)\diag(1,-1,1).
\end{equation}
Moreover, the functions $p_i(s)$ and $q_i(s)$, $i=0,1,2,3$, in \eqref{eq:A-k} satisfy the equations \eqref{eq:SystemEq} and \eqref{eq:sum0}, and $p_0(s)$ and $q_0(s)$ satisfy the coupled differential system
\eqref{eq:eqforp-0} and \eqref{eq:p0q0-1}.
%The compatibility condition $\Phi_{zs} = \Phi_{sz}$ yields the Hamiltonian in \eqref{eq:H-sys}.
\end{proposition}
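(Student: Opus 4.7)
The plan is to exploit the fact that all jump matrices in \eqref{def:JX} are independent of both $z$ and $s$, so that $L(z;s) := \Phi_z(z;s)\Phi(z;s)^{-1}$ and $U(z;s) := \Phi_s(z;s)\Phi(z;s)^{-1}$ are well defined and meromorphic in the whole complex plane, with singularities only at $z = \pm s$ and at infinity. The entire argument then reduces to reading off $L$ and $U$ from the local/global data of $\Phi$ encoded in items (3)--(5) of Proposition \ref{rhp:X}, and then deriving the system of ODEs from the zero-curvature equation.

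For the local analysis at $z=s$, the form \eqref{eq:X-near-s}--\eqref{eq: Phi-expand-s} factors $\Phi$ as $\widehat{\Phi}_1(z)\, T(z)\, C$ with $C$ a constant matrix and $T(z) = I + M(z) E$, where $M(z) = -\tfrac{\gamma}{2\pi i}\ln(z-s)$ and $E = \vec{e}_1(\vec{e}_2+\vec{e}_3)^t$. Since $E^{2}=0$, a direct computation yields
\begin{equation*}
T'(z)\,T(z)^{-1} = M'(z)\,E, \qquad T_{s}(z)\,T(z)^{-1} = -M'(z)\,E,
\end{equation*}
so $L$ and $U$ each acquire a simple pole at $z=s$ with rank-one residues of opposite sign. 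Writing this common rank-one residue as an outer product $\vec{q}(s)\vec{p}(s)^{t}$ with $\vec{q} = (q_1,q_2,q_3)^{t}$ and $\vec{p}=(p_1,p_2,p_3)^{t}$ produces the matrix $A_1(s)$ in \eqref{eq:A-k}. The symmetry \eqref{eq: Xz+-} then forces the residue at $-s$ to be $\diag(1,-1,1)\,A_1(s)\,\diag(1,-1,1)$, giving \eqref{eq:A-sym}, and guarantees that $U$ has no polynomial part at infinity (since $\Theta$ is $s$-independent), which accounts for the full structure of $U(z;s)$ in \eqref{eq:U}.

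For the behavior of $L$ at infinity, substituting \eqref{eq:asyX} into $L = \Phi_{z}\Phi^{-1}$ and using $\Theta'(z)$ together with the Vandermonde-type conjugation properties of $L_{\pm}$, the diagonal scaling $\diag(z^{-1/3},1,z^{1/3})$ converts the $z^{1/3}$-growth of $\Theta'(z)$ into a linear polynomial in $z$ concentrated in the $(3,1)$-entry, which is exactly the companion-matrix structure associated with the Pearcey equation \eqref{eq:Pearcey1}. The constant term that survives, modulo the off-diagonal ones coming from this companion structure, is parametrized by two scalar unknowns that we normalize as $\sqrt{2}\,p_{0}(s)$ and $\sqrt{2}\,q_{0}(s)$ in positions $(2,1)$ and $(3,2)$; this produces $A_0(s)$ and completes the identification of $L(z;s)$.

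The system \eqref{eq:SystemEq} together with the constraint \eqref{eq:sum0} then follows from the zero-curvature compatibility $L_{s}-U_{z}+[L,U]=0$: expanding both sides in partial fractions near $z=\pm s$ and in a Laurent series at infinity and matching coefficients order by order produces exactly eight scalar ODEs for the eight unknowns $p_{i},q_{i}$ ($i=0,1,2,3$), while the trace condition $\mathrm{tr}(A_1)=\sum_{k=1}^{3}p_{k}q_{k}=0$, manifest from $\mathrm{tr}(E)=0$ after conjugation, yields \eqref{eq:sum0}. Finally, the coupled equations \eqref{eq:eqforp-0}--\eqref{eq:p0q0-1} are obtained by purely algebraic elimination: the first two equations in \eqref{eq:SystemEq} give $p_{3}q_{2}=-p_{0}'/\sqrt{2}$ and $p_{2}q_{1}=q_{0}'/\sqrt{2}$, and combining these with the remaining six equations and \eqref{eq:sum0} lets one solve successively for the products $p_{i}q_{j}$ and their derivatives in terms of $p_{0},q_{0}$ and their $s$-derivatives. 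The two main obstacles I anticipate are (i) the detailed matching at infinity required to justify the precise zero pattern and $\sqrt{2}$ normalization of $A_{0}$ (the $\omega$-algebra of $L_{\pm}$ must be handled carefully), and (ii) the final elimination yielding \eqref{eq:eqforp-0}--\eqref{eq:p0q0-1}, which hinges on several nontrivial cancellations and is where the bulk of the computational work lies.
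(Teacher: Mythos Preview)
Your overall strategy matches the paper's almost exactly: constant jumps give meromorphic $L,U$; the rank-one residue at $z=s$ comes from the nilpotent $E=\vec{e}_1(\vec{e}_2+\vec{e}_3)^t$ via $\widehat\Phi_1 E\widehat\Phi_1^{-1}$; the symmetry \eqref{eq: Xz+-} fixes $A_2$ and also forces the zero pattern of $A_0$ (so your anticipated obstacle (i) is handled by conjugation with $\diag(1,-1,1)$, not by the $\omega$-algebra of $L_\pm$); and the system \eqref{eq:SystemEq}, \eqref{eq:sum0} drops out of the zero-curvature relation together with $\mathrm{tr}\,E=0$.

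There is, however, a genuine gap in your derivation of \eqref{eq:eqforp-0}--\eqref{eq:p0q0-1}. These equations cannot be obtained by ``purely algebraic elimination'' from \eqref{eq:SystemEq} and \eqref{eq:sum0} alone. From the first two equations of \eqref{eq:SystemEq} together with the equations for $p_3'$ and $q_1'$ one finds $(p_3q_1)'=-(p_0'+q_0')/\sqrt{2}$, so $p_3q_1+\tfrac{1}{\sqrt{2}}(p_0+q_0)$ is a first integral whose constant value is \emph{not} determined by the ODE system. The constant $\rho/2$ appearing in \eqref{eq:eqforp-0}--\eqref{eq:p0q0-1} through the combination $p_0+q_0-\rho/\sqrt{2}$ is extra information coming from the RH problem. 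In the paper this is the identity
\[
p_3(s)q_1(s)=-\tfrac{1}{\sqrt{2}}\bigl(p_0(s)+q_0(s)-\tfrac{\rho}{\sqrt{2}}\bigr),
\]
obtained by matching the $(1,3)$ entry of the $z^{-1}$ term of $L(z;s)$ computed in two ways: once from the partial-fraction form \eqref{eq:L} (giving $2p_3q_1$) and once from the large-$z$ expansion \eqref{eq:asyX} of $\Phi$ (giving $\tfrac{\rho}{3}+(\mathsf{\Phi}_1)_{12}-(\mathsf{\Phi}_1)_{23}$), after noting that the symmetry forces $(\mathsf{\Phi}_1)_{13}=0$. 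Without this step your elimination will produce the correct differential structure but with an undetermined constant in place of $\rho/\sqrt{2}$, so the proof of \eqref{eq:eqforp-0}--\eqref{eq:p0q0-1} would be incomplete. Add this $z^{-1}$ matching at infinity to your plan and the argument goes through.
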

\begin{proof}
The proof is based on the RH problem for $\Phi$ given in Proposition \ref{rhp:X}. Note that all jumps in the RH problem for $\Phi$ are independent of $z$ and $s$,  it follows that coefficient matrices
   \begin{equation}
   L(z;s):=\frac{\partial}{\partial z}\Phi(z;s)\cdot \Phi(z;s)^{-1}, \qquad U(z;s):=\frac{\partial}{\partial s}\Phi(z;s)\cdot \Phi(z;s)^{-1}
   \end{equation}
are analytic in the complex $z$ plane except for possible isolated singularities at $z=\pm s $ and $z = \infty$. Moreover, in view of the symmetry relation between $\Phi(z)$ and $\Phi(-z)$ in \eqref{eq: Xz+-}, it is easy to check that $L(z;s)$ and $U(z;s)$  satisfy the following symmetry relations
\begin{align}
  L(z;s)&=-\diag(1,-1,1)L(-z;s)\diag(1,-1,1), \label{eq:L-sym} \\
  U(z;s)&=\diag(1,-1,1)U(-z;s)\diag(1,-1,1). \label{eq:U-sym}
\end{align}
In what follows, we calculate these two functions explicitly one by one.

From the large $z$ behavior of $\Phi$ given in \eqref{eq:asyX}, we have, as $z \to \infty$,
\begin{align}\label{eq:L-expand}
   &L(z;s)=   \begin{pmatrix}
      0 & 0 & 0 \\
      0 & 0 & 0 \\
      1 & 0 & 0
    \end{pmatrix}z + \begin{pmatrix}
      0 & 1 & 0 \\
      \frac{\rho}{3} & 0 & 1 \\
      0  & \frac{\rho}{3} & 0
    \end{pmatrix}   + \left[ \mathsf{\Phi}_1, \begin{pmatrix}
      0 & 0 & 0 \\
      0 & 0 & 0 \\
      1 & 0 & 0
    \end{pmatrix}\right]  + \frac{\mathsf{L}_1}{z}  + \Boh(z^{-2}),
  \end{align}
where
\begin{equation}
  \mathsf{L}_1 = \begin{pmatrix}
      -\frac{1}{3} & 0 & \frac{\rho}{3} \\
      0 & 0 & 0 \\
      0 & 0 & \frac{1}{3}
    \end{pmatrix} +\left[ \mathsf{\Phi}_2, \begin{pmatrix}
      0 & 0 & 0 \\
      0 & 0 & 0 \\
      1 & 0 & 0
    \end{pmatrix}\right]+\left[ \begin{pmatrix}
      0 & 0 & 0 \\
      0 & 0 & 0 \\
      1 & 0 & 0
    \end{pmatrix} \mathsf{\Phi}_1,\mathsf{\Phi}_1\right]+\left[ \mathsf{\Phi}_1, \begin{pmatrix}
      0 & 1 & 0 \\
      \frac{\rho}{3} & 0 & 1 \\
      0 & \frac{\rho}{3} & 0
    \end{pmatrix}\right]
\end{equation}
and $[A,B]$ denotes the commutator of two matrices, i.e., $[A,B] = AB - BA$. This gives us the leading term in \eqref{eq:L}  and
\begin{equation} \label{eq:A_0-Phi_1}
A_0(s) =  \begin{pmatrix}
      0 & 1 & 0 \\
      \frac{\rho}{3} & 0 & 1 \\
      0  & \frac{\rho}{3} & 0
    \end{pmatrix}  + \left[ \mathsf{\Phi}_1, \begin{pmatrix}
      0 & 0 & 0 \\
      0 & 0 & 0 \\
      1 & 0 & 0
    \end{pmatrix}\right].
\end{equation}
On account of the symmetric relation \eqref{eq:L-sym}, we further observe that
\begin{equation}
 A_0(s) = -\diag(1,-1,1) A_0(s) \diag(1,-1,1),
\end{equation}
which implies
\begin{equation} \label{eq:A0-entries}
\left(A_0(s)\right)_{11} = \left(A_0(s)\right)_{13}=\left(A_0(s)\right)_{22}=\left(A_0(s)\right)_{31}=\left(A_0(s)\right)_{33}=0.
\end{equation}
Let the functions $p_0(s)$ and $q_0(s)$ be defined as
\begin{equation}\label{def: p-0}
p_0(s)=\frac{1}{\sqrt{2}}\left( \frac{\rho}{3} +(\mathsf{\Phi}_1)_{23} \right), \qquad q_0(s)=\frac{1}{\sqrt{2}} \left( \frac{\rho}{3}-(\mathsf{\Phi}_1)_{12} \right),
\end{equation}
we then obtain the expression of $A_0(s)$ in \eqref{eq:A-k} by combining the above two formulas and \eqref{eq:A_0-Phi_1}. If $z\to s$, it is easily seen from \eqref{eq:X-near-s} that $L(z;s)$ has a simple pole at $z=s$, which comes from the term $\ln(z-s)$. Thus,
$$L(z;s)\sim \frac{A_1(s)}{z-s},\qquad z \to s$$
with
 \begin{equation}\label{eq: L-expand-s}
 A_1(s) = -\frac{\gamma}{2\pi i}\Phi_0^{(0)}(s) \begin{pmatrix} 0&1&1\\ 0&0&0 \\ 0&0&0 \end{pmatrix} \Phi_0^{(0)}(s)^{-1},
 \end{equation}
where $\Phi_0^{(0)}(s)$ is given in \eqref{eq: Phi-expand-s}.
By setting
 \begin{equation}\label{def: q-k}
 \begin{pmatrix}
      q_1(s) \\
      q_2(s) \\
    q_3(s)
    \end{pmatrix}=\Phi_0^{(0)}(s) \begin{pmatrix}
      1\\
    0 \\
   0
    \end{pmatrix} \quad \textrm{and} \quad \begin{pmatrix}
      p_1(s) \\
      p_2(s) \\
    p_3(s)
    \end{pmatrix}=-\frac{\gamma}{2\pi i}\Phi_0^{(0)}(s)^{-t}\begin{pmatrix}
      0\\
    1 \\
   1
    \end{pmatrix},
\end{equation}
we obtain the expression of $A_1(s)$ in \eqref{eq:A-k}. The formula for $A_2(s)$ in \eqref{eq:A-sym} then follows from the symmetry relation \eqref{eq:L-sym}. This finishes the computation of $L(z;s)$ in \eqref{eq:L}. It is also worthwhile to point out that \eqref{eq: L-expand-s} implies
 \begin{equation}\label{eq: const1}
\mathtt{Tr} A_1(s) = \sum_{k=1}^3q_k(s)p_k(s)=0.
\end{equation}

The computation of $U(z;s)$ is similar. It is easy to check that
\begin{equation} \label{eq: U-large-z}
U(z;s)=\Boh(z^{-1}), \qquad z \to \infty;\quad  U(z;s)\sim - \frac{A_1(s)}{z-s},\qquad z\to s,
\end{equation}
where $A_1(s)$ is given in \eqref{eq: L-expand-s}. This, together with \eqref{eq:U-sym}, gives us the expression of $U(z;s)$ in \eqref{eq:U}.

Next, we show the functions $p_i(s)$ and $q_i(s)$, $i=0,1,2,3$, in \eqref{eq:A-k} satisfy the equations \eqref{eq:SystemEq} and \eqref{eq:sum0}. By  \eqref{eq: const1}, we have already proved \eqref{eq:sum0}. To see other differential equations, we recall that the compatibility condition
$$\frac{\partial^2}{\partial z \partial s}\Phi(z;s)=\frac{\partial^2}{\partial s \partial z}\Phi(z;s)$$
for the Lax pair \eqref{eq:Lax pair} is the zero curvature relation
\begin{equation}\label{eq: comp-condition}
\frac{\partial }{\partial s}L(z;s)-\frac{\partial}{\partial z} U(z;s)=A_0'(s) + \frac{A_1'(s)}{z-s} + \frac{A_2'(s)}{z-s}=[U,L].
\end{equation}
Inserting \eqref{eq:L} and \eqref{eq:U} into the above formula, we obtain upon taking $z \to \infty$,
\begin{equation}\label{eq:d-A-0}
A_0'(s)=\left[A_2(s)-A_1(s),\begin{pmatrix}
      0 & 0 & 0 \\
      0 & 0 & 0 \\
      1 & 0 & 0
    \end{pmatrix} \right]=\begin{pmatrix}
      0 & 0 & 0 \\
      -2p_3(s)q_2(s) & 0 & 0 \\
      0 & 2p_2(s)q_1(s) & 0
    \end{pmatrix},
\end{equation}
which leads to
 \begin{equation}\label{eq:d-q-0}
\left\{
 \begin{array}{ll}
p_0'(s)=-\sqrt{2}p_3(s)q_2(s),\\
q_0'(s)=\sqrt{2}p_2(s)q_1(s).
\end{array}
\right.
\end{equation}

If we calculate the residue at $z = s$ on the both sides of \eqref{eq: comp-condition}, it is readily seen that
\begin{equation}\label{eq:d-A-1}
A_1'(s)=-[A_1(s),M(s)],
\end{equation}
where
\begin{align}\label{eq: M}
  M(s) & =\begin{pmatrix}
      0 & 0 & 0 \\
      0 & 0 & 0 \\
      s & 0 & 0
    \end{pmatrix}+ A_0(s)+\frac{1}{s}\left(A_2(s)-A_1(s) \right)\nonumber\\
    &=\begin{pmatrix}
      0 & 1-\frac{2p_2(s)q_ 1(s)}{s}& 0 \\
      \sqrt{2}p_0(s)-\frac{2p_1(s)q_2(s)}{s} & 0 & 1-\frac{2p_3(s)q_ 2(s)}{s} \\
      s &  \sqrt{2}q_0(s)-\frac{2p_2(s)q_3(s)}{s} & 0
    \end{pmatrix}.
\end{align}
Here, to use the symmetric relation \eqref{eq:A-sym} to simplify the subsequent calculations, we have replaced $[A_1,A_2]$ in the $s^{-1}$-term by $[A_1, A_2-A_1]$ with the fact $[A_1,A_1]=0$. To this end, we observe from
\eqref{eq:Lax pair}--\eqref{eq:U} that, on the one hand,
\begin{equation}
	\left(\frac{\partial }{\partial z}\Phi(z;s)+\frac{\partial}{\partial s}\Phi(z;s) \right)\Phi(z;s)^{-1}=L(z;s)+U(z;s)\sim M(s)+\frac{A_1(s)}{s}, \qquad z \to s.
\end{equation}
On the other hand, substituting \eqref{eq: Phi-expand-s} into the left-hand side of the above equation, it follows from a straightforward calculation that
\begin{equation}
\left(\frac{\partial }{\partial z}\Phi(z;s)+\frac{\partial}{\partial s}\Phi(z;s) \right)\Phi(z;s)^{-1} \sim \frac{\ud}{\ud s} \Phi_0^{(0)}(s) \cdot  \Phi_0^{(0)}(s) ^{-1}, \qquad  z \to s.
\end{equation}
Thus, we have from the above two formulas that
\begin{equation}
\frac{\ud}{\ud s}\Phi_0^{(0)}(s) = \left( M(s)+\frac{A_1(s)}{s} \right) \Phi_0^{(0)}(s).
\end{equation}
Recall the definition of $q_k(s)$, $k=1,2,3$, given in \eqref{def: q-k}, we take the first column in the above formula and obtain
\begin{equation}\label{eq:d-Phi-0}
	\begin{pmatrix}
  q_1'(s)  & q_2'(s) & q_3'(s)
\end{pmatrix}^{t}=M(s) \begin{pmatrix}
  q_1(s)  & q_2(s) & q_3(s)
\end{pmatrix}^{t}.
	\end{equation}
Here we have made use of the fact that
$$A_1(s) \begin{pmatrix}
  q_1(s)  & q_2(s) & q_3(s)
\end{pmatrix}^{t}  =  \begin{pmatrix}
  q_1(s)  & q_2(s) & q_3(s)
\end{pmatrix}^{t}  \left(\sum_{k=1}^3q_k(s)p_k(s) \right) = \begin{pmatrix}
  0  & 0 & 0
\end{pmatrix};$$
see \eqref{eq:A-k} and \eqref{eq: const1}. The equation \eqref{eq:d-Phi-0} then gives us
  \begin{equation}\label{eq:d-q}
\left\{
 \begin{array}{ll}
 q_1'(s)=q_2(s)-\frac{2}{s}p_2(s)q_ 1(s)q_2(s),\\
  q_2'(s)=\sqrt{2}p_0(s)q_1(s) +q_3(s)+\frac{2}{s}p_2(s)q_ 2(s)^2,\\
  q_3'(s)=sq_1(s)  + \sqrt{2}q_0(s)q_2(s)-\frac{2}{s}p_2(s)q_2(s)q_3(s) .
 \end{array}
\right.
\end{equation}

To show the last three equations in \eqref{eq:SystemEq}, we see from \eqref{eq:d-A-1} and \eqref{eq:A-k} that
\begin{align*}
A_1'(s)&=\begin{pmatrix}
      q_1'(s) \\
      q_2'(s) \\
    q_3'(s)
    \end{pmatrix} \begin{pmatrix}
      p_1(s) \\
      p_2(s) \\
    p_3(s)
    \end{pmatrix}^t + \begin{pmatrix}
      q_1(s) \\
      q_2(s) \\
    q_3(s)
    \end{pmatrix} \begin{pmatrix}
      p_1'(s) \\
      p_2'(s) \\
    p_3'(s)
    \end{pmatrix}^t
    =-[A_1(s),M(s)]
    \nonumber
    \\
    & = -\begin{pmatrix}
      q_1(s) \\
      q_2(s) \\
    q_3(s)
    \end{pmatrix} \begin{pmatrix}
      p_1(s) \\
      p_2(s) \\
    p_3(s)
    \end{pmatrix}^t M(s) + M(s) \begin{pmatrix}
      q_1(s) \\
      q_2(s) \\
    q_3(s)
    \end{pmatrix} \begin{pmatrix}
      p_1(s) \\
      p_2(s) \\
    p_3(s)
    \end{pmatrix}^t.
\end{align*}
A combination of this formula and \eqref{eq:d-Phi-0} yields
 \begin{equation}\label{eq:d-p-v}
\begin{pmatrix}
  p_1'(s) & p_2'(s) & p_3'(s)
\end{pmatrix} = - \begin{pmatrix}
  p_1(s) & p_2(s) & p_3(s)
\end{pmatrix}M(s),
	\end{equation}	
which is equivalent to the following differential equations
 \begin{equation}\label{eq:d-p}
\left\{
 \begin{array}{ll}
 p_1'(s)=-\sqrt{2}p_0(s)p_2(s) -sp_3(s)+\frac{2}{s}p_1(s)p_2(s)q_ 2(s),\\
  p_2'(s)=-\sqrt{2}p_3(s)q_0(s) -p_1(s)-\frac{2}{s}p_2(s)^2q_ 2(s),\\
  p_3'(s)=-p_2(s)+\frac{2}{s}p_2(s)p_ 3(s)q_2(s).
 \end{array}
\right.
\end{equation}
%Finally, it is straightforward to check that the equations \eqref{eq:d-q-0}, \eqref{eq:d-q} and \eqref{eq:d-p} are indeed the Hamiltonian system \eqref{eq:H-sys} with $H(s)$ defined in \eqref{pro:H}.

Finally, we derive the coupled differential system \eqref{eq:eqforp-0} and \eqref{eq:p0q0-1}. For that purpose, we need to express the functions $p_k(s)$ and $q_k(s)$, $k =1,2,3$, in terms of $p_0(s)$ and $q_0(s)$.
In view of \eqref{eq:A_0-Phi_1} and \eqref{eq:A0-entries}, we have $(\mathsf{\Phi}_1)_{13}=0$. Comparing  the $(1,3)$ entry of the $z^{-1}$-term on both sides of \eqref{eq:L-expand}, we get
\begin{equation}
  2p_3(s)q_1(s) = \frac{\rho}{3} + (\mathsf{\Phi}_1)_{12} - (\mathsf{\Phi}_1)_{23}.
\end{equation}
This, together with the definition of $p_0(s)$ and $q_0(s)$ in \eqref{def: p-0}, gives us
\begin{equation}\label{eq: const2}
p_3(s)q_1(s)=-\frac{1}{\sqrt{2}}\left( p_0(s)+q_0(s) -\frac{\rho}{\sqrt{2}} \right).
\end{equation}
Thus, by \eqref{eq:d-q-0} and \eqref{eq: const2},  we have
\begin{align}
  p_2(s)q_2(s) &=\frac{(p_2(s)q_1(s))(p_3(s)q_2(s))}{p_3(s)q_1(s)}=\frac{p_0'(s)q_0'(s)}{\sqrt{2}\left(p_0(s)+q_0(s)-\frac{\rho}{\sqrt{2}} \right)}, \label{eq:pq-2} \\
  p_2(s)q_3(s) &= \frac{(p_2(s)q_1(s))(p_3(s)q_3(s))}{p_3(s)q_1(s)}=-\frac{q_0'(s)p_3(s)q_3(s)}{p_0(s)+q_0(s)-\frac{\rho}{\sqrt{2}}}. \label{eq:p2q3}
\end{align}
Differentiating both sides of the first equation in \eqref{eq:d-q-0}, we obtain from \eqref{eq:d-q} and \eqref{eq:d-p} that
\begin{align*}
p_0''(s) &= -\sqrt{2} \left( p_3'(s) q_2(s) + p_3(s) q_2'(s) \right) \\
&= -2p_0(s)p_3(s)q_1(s) -\sqrt{2}p_3(s)q_3(s)+\sqrt{2}p_2(s)q_2(s)\left(1-\frac{4}{s}p_3(s)q_ 2(s) \right).
\end{align*}
Using \eqref{eq: const2}, \eqref{eq:pq-2} and \eqref{eq:d-q-0}, we are able to rewrite the right-hand side of the above formula in terms of $p_0(s)$ and $q_0(s)$, except for the $p_3(s)q_3(s)$ term:
 \begin{equation}\label{eq:dd-p-0}
p_0''(s)=\sqrt{2}p_0(s)\left(p_0(s)+q_0(s)-\frac{\rho}{\sqrt{2}} \right) - \sqrt{2}p_3(s)q_3(s)+\frac{p_0'(s)q_0'(s)\left(1+\frac{2\sqrt{2}}{s}p_0'(s) \right)}{p_0(s)+q_0(s)-\frac{\rho}{\sqrt{2}}}.
\end{equation}
By similar calculations, we also have
\begin{align}\label{eq:dd-q-0}
q_0''(s)&=-2q_0(s)p_3(s)q_1(s)-\sqrt{2}p_1(s)q_1(s)+\sqrt{2}p_2(s)q_2(s)\left(1-\frac{4}{s}p_2(s)q_1(s) \right), \nonumber \\
&=\sqrt{2}q_0(s)\left(p_0(s)+q_0(s)-\frac{\rho}{\sqrt{2}} \right) -\sqrt{2}p_1(s)q_1(s)+\frac{p_0'(s)q_0'(s)\left(1-\frac{2\sqrt{2}}{s}q_0'(s)\right)}{p_0(s)+q_0(s)-\frac{\rho}{\sqrt{2}}}.
\end{align}
Adding \eqref{eq:dd-q-0} to \eqref{eq:dd-p-0}, we then obtain \eqref{eq:p0q0-1} by the fact that  $p_2(s)q_2(s)=- p_1(s)q_1(s)-p_3(s)q_s(s) $ (see \eqref{eq: const1}) and \eqref{eq:pq-2}.

To show the equation \eqref{eq:eqforp-0}, we make use of \eqref{eq:d-q-0}, \eqref{eq:d-q},  \eqref{eq:d-p}, \eqref{eq: const2}  and \eqref{eq:p2q3} to get
\begin{align}
(p_3q_3)'(s)&=sp_3(s)q_1(s)+\sqrt{2}q_0(s)p_3(s)q_2(s)-p_2(s)q_3(s)
\nonumber \\
&=-\frac{s}{\sqrt{2}}\left( p_0(s)+q_0(s) -\frac{\rho}{\sqrt{2}} \right) -q_0(s)p_0'(s)+\frac{q_0'(s)p_3(s)q_3(s)}{p_0(s)+q_0(s)-\frac{\rho}{\sqrt{2}}}. \label{eq:d-pq-3}
\end{align}
Taking the derivative  on both sides of \eqref{eq:dd-p-0} and using the above formula, we obtain \eqref{eq:eqforp-0}.

This completes the proof of Proposition \ref{pro:Lax pair}.
\end{proof}

We note that the Hamiltonian defined in \eqref{pro:H} agrees with that derived from the general theory of Jimbo-Miwa-Ueno \cite{JM}. Indeed, from \cite[Equation (5.1)]{JM}, we have
%and the differential identity in \eqref{eq:derivativeinsX-2}, we have
\begin{align}\label{eq:H-F}
H(s)= -\frac{\gamma}{2\pi i} \mathtt{Tr}
\left(\Phi_1^{(0)}(s)\begin{pmatrix}
      0 & 1 & 1 \\
      0 & 0 & 0 \\
      0 & 0 & 0
    \end{pmatrix}\right)
= -\frac{\gamma}{2\pi i} \left[ \left(\Phi_1^{(0)} (s) \right)_{21}+ \left(\Phi_1^{(0)} (s) \right)_{31}
\right],
%=\frac{1}{2}\frac{\partial}{\partial s} F(s;\gamma,\rho),
\end{align}
%\begin{equation}
%  \begin{array}{ll}
%&=-\mathtt{Res}_{z=s}\left(\mathtt{I}+\Phi_1^{(0)}(s)(z-s)+...\right)\frac{d}{dz}\left(\mathtt{I}+\Phi_1^{(0)}(s)(z-s)+...\right)\frac{d}{ds}\frac{-\gamma\ln(z-s)}{2\pi i} \begin{pmatrix}
%      0 & 1 & 1 \\
%      0 & 0 & 0 \\
%      0 & 0 & 0
%    \end{pmatrix}\\
%%  & = -\mathtt{Res}_{z=-s}\left(\mathtt{I}+\Phi_1^{(1)}(s)(z+s)+...\right)\frac{d}{dz}\left(\mathtt{I}+\Phi_1^{(1)}(s)(z+s)+...\right)\frac{d}{ds}\frac{\gamma\ln(z+s)}{2\pi i} \begin{pmatrix}
%%      0 & 1 & 1 \\
%%      0 & 0 & 0 \\
%%      0 & 0 & 0
%%    \end{pmatrix}\\
%\end{array}
%\end{equation}
where $\Phi_1^{(0)} (s)$ given in \eqref{eq: Phi-expand-s} appears in the expansion of $\Phi(z)$ near $z =s$. In view of the first equation in the Lax pair \eqref{eq:Lax pair}, by sending $z\to s$, we obtain from  \eqref{eq:L} and \eqref{eq:X-near-s} that the $\Boh(1)$-term in the expansion yields
\begin{equation*}\label{eq:Phi-1}
\Phi_1^{(0)}(s)=\frac{\gamma}{2\pi i}\left[\Phi_1^{(0)}(s),\begin{pmatrix}
      0 & 1 & 1 \\
      0 & 0 & 0 \\
      0& 0 & 0
    \end{pmatrix} \right]+\Phi_0^{(0)}(s)^{-1}\left( \begin{pmatrix}
      0 & 0 & 0 \\
      0 & 0 & 0 \\
      s & 0 & 0
    \end{pmatrix}+ A_0(s)+\frac{1}{2s}A_2(s)\right) \Phi_0^{(0)}(s).
    \end{equation*}
Inserting the above formula into \eqref{eq:H-F} gives us
\begin{equation*}
H(s)=-\frac{\gamma}{2\pi i}  \begin{pmatrix}
0 & 1 & 1
\end{pmatrix} \Phi_0^{(0)}(s)^{-1}\left( \begin{pmatrix}
      0 & 0 & 0 \\
      0 & 0 & 0 \\
      s & 0 & 0
    \end{pmatrix}+ A_0(s)+\frac{1}{2s}A_2(s)\right) \Phi_0^{(0)}(s)\begin{pmatrix}
     1 \\
      0  \\
      0
    \end{pmatrix}.
\end{equation*}
On account of the expressions of $A_k(s)$, $k=0,2$, in \eqref{eq:A-k}--\eqref{eq:A-sym} and the definition of $p_k(s)$ and $q_k(s)$, $k=1,2,3$, in \eqref{def: q-k}, it is easily seen that
 \begin{align*}
H(s) & =  \begin{pmatrix}
p_1(s) \\ p_2(s) \\ p_3(s)
\end{pmatrix}^t \left( \begin{pmatrix}
      0 & 1 & 0 \\
      \sqrt{2}p_0(s) & 0 & 1 \\
      s & \sqrt{2}q_0(s)  & 0
    \end{pmatrix}+\frac{1}{2s}  \begin{pmatrix}
    q_1(s) \\
      -q_2(s)  \\
      q_3(s)
    \end{pmatrix}
    \begin{pmatrix}
    p_1(s) \\ -p_2(s) \\ p_3(s)
\end{pmatrix}^t \right) \begin{pmatrix}
    q_1(s) \\
      q_2(s)  \\
      q_3(s)
    \end{pmatrix}
    \\
     &= \sqrt{2}p_0(s) p_2(s)q_1(s)+\sqrt{2}p_3(s)q_0(s) q_2(s)+p_1(s)q_2(s) +p_2(s)q_3(s)+sp_3(s)q_1(s)
    \\
&~~~ +\frac{1}{2s}\left(p_1(s)q_1(s)-p_2(s)q_2(s)+p_3(s)q_3(s) \right)^2,
\end{align*}
which coincides with that defined in \eqref{pro:H}.

\subsection{Differential identities for the Hamiltonian}
For later use, we collect some remarkable differential identities for the Hamiltonian $H$ in what follows. In particular, some of these differential identities will be crucial in our the derivation of the large gap asymptotics for $F(s;\gamma,\rho)$, especially for the constant term.

\begin{proposition} \label{prop:H-diff}
With the Hamiltonian $H$ defined in \eqref{pro:H}, we have
\begin{align}
  \frac{\ud}{\ud s}H(s)&=p_3(s)q_1(s)-\frac{2}{s^2}p_2(s)^2q_2(s)^2  \nonumber \\
&= -\frac{1}{\sqrt{2}}\left( p_0(s)+q_0(s)-\frac{\rho}{\sqrt{2}} \right)-\frac{p_0'(s)^2q_0'(s)^2}{s^2\left(p_0(s)+q_0(s)-\frac{\rho}{\sqrt{2}}\right)^2}, \label{eq: dH-s}
\end{align}
and $H$ is related to the action differential $\sum_{k=0}^3p_kq_k'-H$ by
\begin{align} \label{eq: action-diff}
 & \sum_{k=0}^3p_k(s)q_k'(s)-H(s)
 \nonumber
 \\
 &=H(s) +\frac{1}{4}\frac{\ud}{\ud s}\left( 2p_0(s)q_0(s)+p_2(s)q_2(s)+2p_3(s)q_3(s)-3sH(s) \right).
\end{align}
Moreover, we also have the following differential identities with respect to the parameters $\gamma$ and $\rho$:
\begin{align}
  \frac{\partial}{\partial \gamma}\left(\sum_{k=0}^3p_k(s)q_k'(s)-H(s) \right)&=\frac{\ud}{\ud s}\left(\sum_{k=0}^3p_k(s)\frac{\partial}{\partial \gamma}q_k(s) \right), \label{eq: dH-gamma} \\
  \frac{\partial}{\partial \rho}\left(\sum_{k=0}^3p_k(s)q_k'(s)-H(s) \right)&=\frac{\ud}{\ud s}\left(\sum_{k=0}^3p_k(s)\frac{\partial}{\partial \rho}q_k(s) \right).  \label{eq: dH-rho}
\end{align}
\end{proposition}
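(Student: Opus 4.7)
My plan is to establish the three groups of identities in order of increasing algebraic complexity. The identity \eqref{eq: dH-s} reduces immediately to the explicit $s$-dependence of $H$: by the canonical equations \eqref{eq:H-sys}, the symplectic part of $\tfrac{\ud H}{\ud s}$ cancels, so $\tfrac{\ud H}{\ud s}=\tfrac{\partial H}{\partial s}$. Reading the partial off \eqref{pro:H} produces $p_3q_1-\tfrac{1}{2s^2}(p_1q_1-p_2q_2+p_3q_3)^2$, and the constraint \eqref{eq:sum0} collapses the squared factor to $4p_2^2q_2^2$, giving the first form. Substituting the closed expressions $p_3q_1=-\tfrac{1}{\sqrt 2}(p_0+q_0-\tfrac{\rho}{\sqrt 2})$ from \eqref{eq: const2} and $p_2q_2=\tfrac{p_0'q_0'}{\sqrt 2(p_0+q_0-\rho/\sqrt 2)}$ from \eqref{eq:pq-2} then yields the second form.

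For the action identity \eqref{eq: action-diff}, I would first compute the left-hand side by applying the Euler-type operator $\sum_k p_k\partial_{p_k}$ monomial by monomial to \eqref{pro:H}: the terms linear in $p$ reproduce themselves while the quadratic terms $\sqrt 2\, p_0p_2q_1$ and $(p_1q_1-p_2q_2+p_3q_3)^2/(2s)$ each double, which, combined with \eqref{eq:sum0}, gives $\sum_k p_kq_k'-H=\sqrt 2\,p_0p_2q_1+\tfrac{2p_2^2q_2^2}{s}$. On the right-hand side, I would differentiate $2p_0q_0+p_2q_2+2p_3q_3-3sH$ term by term using \eqref{eq:SystemEq}: the derivative $(p_0q_0)'$ is immediate, $(p_3q_3)'$ is already given by \eqref{eq:d-pq-3}, and in $(p_2q_2)'$ the two $2/s$-corrections from $p_2'q_2$ and $p_2q_2'$ cancel, leaving $\sqrt 2\,p_0p_2q_1-\sqrt 2\,p_3q_0q_2+p_2q_3-p_1q_2$. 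Substituting the first form of \eqref{eq: dH-s} into $-3sH'$ makes the $p_0p_2q_1$ and $p_2^2q_2^2/s$ contributions cancel exactly, and a routine regrouping of the surviving monomials reproduces $\sqrt 2\,p_0p_2q_1+\tfrac{2p_2^2q_2^2}{s}$ after adding back $H$.

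For the parameter identities \eqref{eq: dH-gamma} and \eqref{eq: dH-rho}, the decisive observation is that \eqref{pro:H}, regarded as a function of $(p_k,q_k,s)$, carries no explicit dependence on $\gamma$ or $\rho$; these parameters enter only through the special solutions in Theorem \ref{thm:specialsol}. Consequently, for $\lambda\in\{\gamma,\rho\}$, the chain rule combined with \eqref{eq:H-sys} gives $\tfrac{\partial H}{\partial\lambda}=\sum_k\bigl(q_k'\partial_\lambda p_k-p_k'\partial_\lambda q_k\bigr)$, and subtracting this from $\partial_\lambda\sum_k p_kq_k'=\sum_k\bigl(q_k'\partial_\lambda p_k+p_k\partial_\lambda q_k'\bigr)$, together with the mixed-partial identity $\partial_\lambda q_k'=\partial_s\partial_\lambda q_k$, condenses the difference to $\sum_k\tfrac{\ud}{\ud s}(p_k\partial_\lambda q_k)$, as required. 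The main obstacle in the whole argument is the algebraic bookkeeping in the verification of \eqref{eq: action-diff}: several polynomial contributions must cancel exactly across different groupings and one must track the coefficients carefully, but there is no analytic subtlety once \eqref{eq:SystemEq} and \eqref{eq:sum0} are applied systematically.
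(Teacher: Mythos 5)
Your proposal is correct and takes essentially the same route as the paper: \eqref{eq: dH-s} follows from the Hamiltonian structure \eqref{eq:H-sys} (so that $\tfrac{\ud}{\ud s}H=\tfrac{\partial H}{\partial s}$), the constraint \eqref{eq:sum0}, and the substitutions \eqref{eq: const2}, \eqref{eq:pq-2}; \eqref{eq: action-diff} from an Euler-degree count on $H$ together with \eqref{eq:SystemEq} and \eqref{eq:sum0}; and \eqref{eq: dH-gamma}--\eqref{eq: dH-rho} by the chain rule plus mixed partials, exactly as the paper presents them. Your organizing observations --- that $\tfrac{\ud}{\ud s}H=\tfrac{\partial H}{\partial s}$ along the flow and that $\sum_k p_kq_k'$ acts as an Euler operator on $H$ --- simply make explicit the ``straightforward'' algebra the paper elides.
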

	
\begin{proof}
The proofs of \eqref{eq: dH-s} and \eqref{eq: action-diff} are straightforward, where we have made use of \eqref{eq:SystemEq}, \eqref{eq: const2} and \eqref{eq:p2q3} in the derivation of \eqref{eq: dH-s}, and of \eqref{eq:SystemEq} and \eqref{eq:sum0} in the derivation of \eqref{eq: action-diff}. To see the differential identity with respect to the parameter $\gamma$, we have from \eqref{eq:H-sys} that
\begin{equation}
  \frac{\partial}{\partial \gamma} H(s) = \sum_{k=0}^3 \left( \frac{\partial H}{\partial p_k} \frac{\partial }{\partial \gamma}p_k(s) +  \frac{\partial H}{\partial q_k} \frac{\partial}{\partial \gamma}q_k(s) \right) = \sum_{k=0}^3  \left( q_k'(s) \frac{\partial}{\partial \gamma}p_k(s) - p_k'(s) \frac{\partial}{\partial \gamma}q_k(s) \right),
\end{equation}
which gives us \eqref{eq: dH-gamma}. The differential identity with respect to $\rho$ in \eqref{eq: dH-rho} can be proved in a similar manner, and we omit the details here.

This completes the proof of Proposition \ref{prop:H-diff}.
\end{proof}

\section{Asymptotic analysis of the RH problem for $\Phi(z;s)$ as $s \to +\infty$}\label{sec:AsyPhiinfty}
In this section, we shall perform a Deift-Zhou steepest descent analysis \cite{DZ93} to the RH problem for $\Phi$ as $s \to +\infty$. It consists of a series of explicit and invertible transformations which leads to an RH problem tending to the identity matrix as $s \to +\infty$.
	
\subsection{First transformation: $\Phi \to T$}
This transformation is a rescaling and normalization of the RH problem for $\Phi$, which is defined by
\begin{equation}\label{def:PhiToT}
T(z)=  \diag \left(s^{\frac13},1,s^{-\frac13} \right) \Phi(sz)e^{-\Theta(sz)},
\end{equation}
where $\Theta(z)$ is given in \eqref{def:Theta}. Then, $T(z)$ satisfies the following RH problem.

\begin{proposition}\label{rhp:T}
The function $T$ defined in \eqref{def:PhiToT} has the following properties:
\begin{enumerate}
\item[\rm (1)] $T(z)$ is defined and analytic in $\mathbb{C}\setminus \{\cup^5_{j=0}\Sigma_j^{(1)}\cup[-1,1]\}$, where the contours $\Sigma_j^{(1)}$ are defined in \eqref{def:sigmais} with $s=1$.

\item[\rm (2)] $T(z)$ satisfies the jump condition
\begin{equation}\label{eq:T-jump}
 T_+(z)=T_-(z)J_T(z), \qquad z\in \cup^5_{j=0}\Sigma_j^{(1)}\cup(-1,0)\cup(0,1),
\end{equation}
where
\begin{equation}\label{def:JT}
J_T(z):=\left\{
 \begin{array}{ll}
          \begin{pmatrix} 0 & 1 &0 \\ -1 & 0 &0 \\ 0&0&1 \end{pmatrix}, & \qquad \hbox{$z\in \Sigma_0^{(1)}$,} \\
          \begin{pmatrix} 1&0&0 \\ e^{\theta_2(sz)-\theta_1(sz)}&1&e^{\theta_2(sz)-\theta_3(sz)} \\ 0&0&1 \end{pmatrix},  & \qquad  \hbox{$z\in \Sigma_1^{(1)}$,} \\
          \begin{pmatrix} 1&0&0 \\ 0&1&0 \\ e^{\theta_3(sz)-\theta_1(sz)}&e^{\theta_3(sz)-\theta_2(zs)}&1 \end{pmatrix},  & \qquad \hbox{$z\in \Sigma_2^{(1)}$,} \\
          \begin{pmatrix} 0 &0&1 \\ 0&1&0 \\ -1&0&0 \end{pmatrix}, & \qquad  \hbox{$z\in \Sigma_3^{(1)}$,} \\
          \begin{pmatrix} 1&0&0 \\ 0&1&0 \\ e^{\theta_3(sz)-\theta_2(sz)}&-e^{\theta_3(sz)-\theta_1(sz)}&1 \end{pmatrix}, & \qquad  \hbox{$z\in \Sigma_4^{(1)}$,} \\
          \begin{pmatrix} 1&0&0 \\ e^{\theta_1(sz)-\theta_2(sz)}&1&-e^{\theta_1(sz)-\theta_3(sz)} \\ 0&0&1 \end{pmatrix}, & \qquad  \hbox{$z\in \Sigma_5^{(1)}$,} \\
          \begin{pmatrix} e^{\theta_2(sz)-\theta_1(sz)}&1-\gamma&(1-\gamma)e^{\theta_2(sz)-\theta_3(sz)}\\ 0&e^{\theta_1(sz)-\theta_2(sz)}&0 \\ 0&0&1 \end{pmatrix}, & \qquad  \hbox{$z\in  (0,1)$,}\\
           \begin{pmatrix} e^{\theta_{3,+}(sz)-\theta_{3,-}(sz)}&(1-\gamma) e^{\theta_{2,-}(sz)-\theta_{2,+}(sz)}&1-\gamma\\ 0&1&0 \\ 0&0& e^{\theta_{3,-}(sz)-\theta_{3,+}(sz)}
           \end{pmatrix}, & \qquad  \hbox{$z\in(-1,0)$,}
        \end{array}
      \right.
      \end{equation}
      with $\theta_k(z)=\theta_k(z;\rho)$, $k=1,2,3$, being defined in \eqref{eq: theta-k-def}.
\item[\rm (3)]As $z \to \infty$ and $\pm \Im z>0$, we have
\begin{equation}\label{eq:asyT}
T(z)=
\left(I+ \frac{\mathsf{T}_1}{z} +\mathcal \Boh(z^{-2}) \right)\diag \left(z^{-\frac13},1,z^{\frac13} \right)L_{\pm},
\end{equation}
where
\begin{equation}
\mathsf{T}_1=\frac{\diag \left(s^{\frac13},1,s^{-\frac13} \right) \mathsf{\Phi}_1 \diag \left(s^{-\frac13},1,s^{\frac13} \right)}{s},
\end{equation}
with $\mathsf{\Phi}_1$ given in \eqref{X1-formula}.
\item[\rm (4)]
As $z \to \pm 1$, we have $T(z)=\Boh(\ln(z\mp 1))$.
\end{enumerate}
\end{proposition}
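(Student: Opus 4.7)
The proposition is a direct consequence of the transformation \eqref{def:PhiToT} applied to the RH problem for $\Phi$ in Proposition \ref{rhp:X}. My plan is to verify each of the four items (1)--(4) by direct computation, exploiting that the left multiplication by $\diag(s^{1/3},1,s^{-1/3})$ is a constant (in $z$) similarity and that multiplication by $e^{-\Theta(sz)}$ is analytic away from the jump contours of $\Theta$.

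For item (1), the contours $\Sigma_j^{(s)}$ of $\Phi$ get rescaled to $\Sigma_j^{(1)}$ under the map $z\mapsto sz$, and the segment $[-s,s]$ collapses to $[-1,1]$. Analyticity of $T$ off these contours (together with the possible cut of $\theta_3(sz)$ on the negative real axis) then follows from the analogous property of $\Phi$ together with the entireness of $e^{-\Theta(sz)}$ away from the branch cut of $z^{4/3}$. For item (2), from \eqref{def:PhiToT} one has, on each jump contour,
\begin{equation*}
T_-(z)^{-1}T_+(z) = e^{\Theta_-(sz)}\,\Phi_-(sz)^{-1}\Phi_+(sz)\,e^{-\Theta_+(sz)} = e^{\Theta_-(sz)}\,J_\Phi(sz)\,e^{-\Theta_+(sz)}.
\end{equation*}
On the contours $\Sigma_j^{(1)}$, $j=0,\ldots,5$, the matrix $\Theta$ is continuous, so the above reduces to the conjugation $e^{\Theta(sz)}J_\Phi e^{-\Theta(sz)}$. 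Since $\Theta$ is diagonal, the diagonal entries of $J_\Phi$ are preserved, while each off-diagonal entry $(J_\Phi)_{jk}$ picks up the factor $e^{\theta_j(sz)-\theta_k(sz)}$; inserting the explicit $J_\Phi$ from \eqref{def:JX} reproduces all six exponential jumps in \eqref{def:JT}. On the segment $[-1,1]$, one has to take into account that the definition of $\Theta$ in \eqref{def:Theta} swaps $\theta_1$ and $\theta_2$ across the real line, and that on $(-1,0)$ the function $\theta_3$ has a branch cut coming from $z^{4/3}$. A careful bookkeeping of these four cases — the diagonal of $\Theta_+$ and $\Theta_-$ on $(0,1)$ and on $(-1,0)$ — then yields precisely the two real-line jumps in \eqref{def:JT}, where the $(1,3)$ and $(3,1)$ entries of $J_\Phi=\mathrm{diag}(\ldots) + (1-\gamma)\,\text{off-diag}$ produce the $1-\gamma$ factors.

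For item (3), I substitute \eqref{eq:asyX} into \eqref{def:PhiToT} and use that $\diag(s^{1/3},1,s^{-1/3})\diag((sz)^{-1/3},1,(sz)^{1/3}) = \diag(z^{-1/3},1,z^{1/3})$ and that $L_\pm$ commutes appropriately through the rescaling. The constant matrices $L_\pm$ and the rescaling of the $z^{-1}$-coefficient give the stated expression for $\mathsf{T}_1$. For item (4), the logarithmic behavior near $\pm 1$ is simply inherited from the logarithmic behavior of $\Phi$ near $\pm s$ recorded in \eqref{eq:X-near-s}, since multiplication by the bounded factor $e^{-\Theta(sz)}$ and the constant diagonal matrix preserves the $\Boh(\ln(z\mp 1))$ bound.

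The only nontrivial bookkeeping is in item (2) on the interval $[-1,1]$, where one must track the swap between $\theta_1$ and $\theta_2$ across the real axis together with the $z^{4/3}$-branch cut of $\theta_3$ on $(-1,0)$; everything else in the proof is an entry-by-entry verification from the definitions.
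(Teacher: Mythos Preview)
Your proposal is correct and follows essentially the same approach as the paper. The paper's proof is terser---it states that all claims except the jump on $(-1,0)$ follow directly from \eqref{def:PhiToT} and Proposition \ref{rhp:X}, then carries out exactly the computation you describe for $(-1,0)$, using the relations $\theta_{3,+}=\theta_{2,-}$, $\theta_{1,+}=\theta_{3,-}$, $\theta_{1,-}=\theta_{2,+}$ on the negative axis (your ``swap plus branch-cut bookkeeping'') to simplify $e^{\Theta_-(sz)}J_\Phi(sz)e^{-\Theta_+(sz)}$ to the stated form.
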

\begin{proof}
We only need to check the jump on $(-1,0)$, while the other claims follow directly from \eqref{def:PhiToT} and the RH problem for $\Phi$ given in Proposition \ref{rhp:X}.

If $z\in(-1,0)$, it is readily seen from  \eqref{def:PhiToT}, \eqref{eq:X-jump} and \eqref{def:JX} that
\begin{align}\label{eq:JT}
J_T(z)&=T_-(z)^{-1}T_+(z)
\nonumber \\
&=\diag\left(e^{\theta_{2,-}(sz)},e^{\theta_{1,-}(sz)},e^{\theta_{3,-}(sz)}\right)\Phi_-(sz)^{-1}\Phi_+(sz)
\nonumber \\
& \quad \times \diag\left(e^{-\theta_{1,+}(sz)},e^{-\theta_{2,+}(sz)},e^{-\theta_{3,+}(sz)}\right)
\nonumber \\
&=\begin{pmatrix} e^{\theta_{2,-}(sz)-\theta_{1,+}(sz)}&(1-\gamma) e^{\theta_{2,-}(sz)-\theta_{2,+}(sz)} & (1-\gamma)e^{\theta_{2,-}(sz)-\theta_{3,+}(sz)}
\\ 0& e^{\theta_{1,-}(sz)-\theta_{2,+}(sz)} &0
\\ 0&0& e^{\theta_{3,-}(sz)-\theta_{3,+}(sz)}
\end{pmatrix}.
\end{align}	
To this end, we observe from \eqref{eq: theta-k-def} that, if $z<0$,
\begin{align}\label{eq:thetarelations}
\theta_{3,+}(z)=\theta_{2,-}(z),\quad \theta_{1,+}(z)=\theta_{3,-}(z), \quad \theta_{1,-}(z)=\theta_{2,+}(z).
\end{align}
This, together with \eqref{eq:JT}, gives us the formula of $J_T$ on $(-1,0)$ as shown in \eqref{def:JT}.

This completes the proof of Proposition \ref{rhp:T}.
\end{proof}

\subsection{Second transformation: $T \to S$}
As $s\to +\infty$, it comes out that most of the entries involving $\theta_k(sz)$ in $J_T$ tend to zero exponentially fast, except some entries
when restricted on $(-1,0)\cup (0,1)$. More precisely, note that
\begin{equation}
\theta_2(sz)-\theta_1(sz)=\sqrt{3}i s^{\frac43} \left( \frac{3}{4}z^{\frac43}-\frac{\rho}{2s^{\frac23}}z^{\frac23} \right), \qquad z\in(0,1),
\end{equation}
thus, the $(1,1)$ and $(2,2)$ entries of $J_T(z)$ is highly oscillatory for large positive $s$ and $z\in(0,1)$. Similarly, since
\begin{equation}
\theta_{3,+}(sz)-\theta_{3,-}(sz)=-\sqrt{3}i s^{\frac43} \left(\frac{3}{4}|z|^{\frac43}-\frac{\rho}{2s^{\frac23}}|z|^{\frac23} \right), \qquad z\in(-1,0),
\end{equation}
we have that the $(1,1)$ and $(3,3)$ entries of $J_T(z)$ is highly oscillatory for large positive $s$ and $z\in(-1,0)$.

Following the spirit of steepest descent analysis, the second transformation involves the so-called lens opening. The goal of this step is to convert the highly oscillatory jumps into constant jumps on the original contours while creating extra jumps tending to the identity matrices exponentially fast for large positive $s$ on the new contours. This transformation is based on the following factorizations:
\begin{align}\label{eq:Deformation-1}
&  \begin{pmatrix} e^{\theta_2(sz)-\theta_1(sz)}&1-\gamma&(1-\gamma)e^{\theta_2(sz)-\theta_3(sz)}\\ 0&e^{\theta_1(sz)-\theta_2(sz)}&0 \\ 0&0&1 \end{pmatrix}
\nonumber \\
&=
\begin{pmatrix} 1&0&0\\ \frac{ e^{\theta_1(sz)-\theta_2(sz)}}{1-\gamma}&1&-e^{\theta_1(sz)-\theta_3(sz)} \\ 0&0&1 \end{pmatrix}
\begin{pmatrix}
0 &1-\gamma&0
\\
\frac{1}{\gamma-1} & 0  & 0
\\ 0&0&1
\end{pmatrix}
\nonumber
\\ &~~~\times
\begin{pmatrix} 1&0&0\\ \frac{e^{\theta_2(sz)-\theta_1(sz)}}{1-\gamma} &1&e^{\theta_2(sz)-\theta_3(sz)} \\ 0&0&1 \end{pmatrix}, \qquad z\in (0,1),
\end{align}
and
\begin{align}\label{eq:Deformation-2}
& \begin{pmatrix}
  e^{\theta_{3,+}(sz)-\theta_{3,-}(sz)}&(1-\gamma)e^{\theta_{2,-}(sz)-\theta_{2,+}(sz)} & 1-\gamma
  \\
  0&1&0
  \\
  0&0& e^{\theta_{3,-}(sz)-\theta_{3,+}(sz)}
  \end{pmatrix}
  \nonumber
  \\
&=
\begin{pmatrix} 1&0&0\\0&1&0 \\
\frac{e^{\theta_{3,-}(sz)-\theta_{2,-}(sz)}}{1-\gamma}&-e^{\theta_{3,-}(sz)-\theta_{1,-}(sz)}&1
\end{pmatrix} \begin{pmatrix} 0 &0&1-\gamma \\ 0&1&0 \\ \frac{1}{\gamma-1}&0&0 \end{pmatrix}
\nonumber
\\ &~~~\times
     \begin{pmatrix} 1&0&0\\0&1&0 \\  \frac{e^{\theta_{3,+}(sz)-\theta_{1,+}(sz)}}{1-\gamma} &e^{\theta_{3,+}(sz)-\theta_{2,+}(sz)}&1
     \end{pmatrix}, \qquad z\in(-1,0),
\end{align}
where we have made use of relations \eqref{eq:thetarelations} in  \eqref{eq:Deformation-2}.

We now set simply connected domains (the lenses) $\Omega_{1,\pm} (\Omega_{-1,\pm})$  on the $\pm$-side of $(0,1)$ ($(-1,0)$), with oriented boundaries $\partial \Omega_{1,\pm} \cup (0,1)$ ($\partial \Omega_{-1,\pm} \cup (-1,0)$) as shown in Figure \ref{fig:S}.

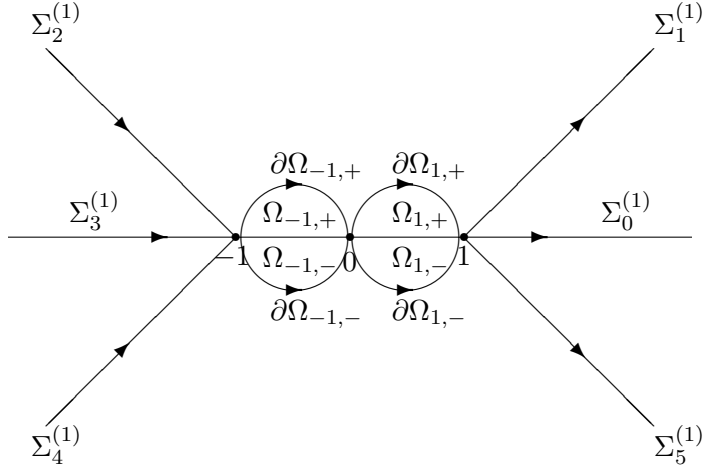
\begin{figure}[h]
\begin{center}
   \setlength{\unitlength}{1truemm}
   \begin{picture}(100,70)(-5,2)
       %\put(40,40){\line(-2,-3){16}}
       %\put(40,40){\line(-2,3){16}}
       \put(25,40){\line(-1,0){30}}
       \put(55,40){\line(1,0){30}}
       %\put(40,40){\line(2,3){16}}
       %\put(40,40){\line(2,-3){16}}

       \put(25,40){\line(1,0){30}}
       %\dashline{0.8}(25,40)(55,40)
%       \dashline{0.8}(15,65)(40,40)
%       \dashline{0.8}(40,40)(65,65)
%       \dashline{0.8}(40,40)(65,15)

       \put(25,40){\line(-1,-1){25}}
       \put(25,40){\line(-1,1){25}}
       \put(55,40){\line(1,1){25}}
       \put(55,40){\line(1,-1){25}}

       %\put(30,55){\thicklines\vector(2,-3){1}}
       \put(15,40){\thicklines\vector(1,0){1}}
       %\put(30,25){\thicklines\vector(2,3){1}}
       %\put(50,55){\thicklines\vector(2,3){1}}
       \put(65,40){\thicklines\vector(1,0){1}}
       %\put(50,25){\thicklines\vector(2,-3){1}}

       \put(10,55){\thicklines\vector(1,-1){1}}
       \put(10,25){\thicklines\vector(1,1){1}}
       \put(70,25){\thicklines\vector(1,-1){1}}
       \put(70,55){\thicklines\vector(1,1){1}}

       \put(39,35.5){$0$}
       \put(-2,11){$\Sigma_4^{(1)}$}

       \put(-2,67){$\Sigma_2^{(1)}$}
       \put(3,42){$\Sigma_3^{(1)}$}
       \put(80,11){$\Sigma_5^{(1)}$}
       \put(80,67){$\Sigma_1^{(1)}$}
       \put(73,42){$\Sigma_0^{(1)}$}

       %\put(10,46){$\texttt{III}$}
%       \put(10,34){$\texttt{IV}$}
%       \put(68,46){$\texttt{I}$}
%       \put(68,34){$\texttt{VI}$}
%       \put(38,55){$\texttt{II}$}
%       \put(38,20){$\texttt{V}$}
       %\put(55,40){$\Omega_1$}

\put(32.7,40){\circle{18}}
\put(47.3,40){\circle{18}}

\put(45.5,42){$\Omega_{1,+}$}
\put(45.5,36){$\Omega_{1,-}$}

\put(28.5,42){$\Omega_{-1,+}$}
\put(28.5,36){$\Omega_{-1,-}$}
 \put(32.7,47){\thicklines\vector(1,0){1}}
 \put(47.3,47){\thicklines\vector(1,0){1}}
 \put(32.7,33){\thicklines\vector(1,0){1}}
  \put(47.3,33){\thicklines\vector(1,0){1}}
   \put(45.5,48.5){$\partial \Omega_{1,+}$}
    \put(45.5,29){$\partial \Omega_{1,-}$}
      \put(29.5,48.5){$\partial \Omega_{-1,+}$}
      \put(29.5,29){$\partial \Omega_{-1,-}$}

  \put(40,40){\thicklines\circle*{1}}

   \put(25,40){\thicklines\circle*{1}}
       \put(55,40){\thicklines\circle*{1}}

       \put(22,36.3){$-1$}
       \put(54,36.3){$1$}

   \end{picture}
   \caption{The lenses used for the transformation $T \to S$.}
   \label{fig:S}
\end{center}
\end{figure}

Based on the decompositions of $J_S$ given in \eqref{eq:Deformation-1}, \eqref{eq:Deformation-2} and also on the lenses just defined,
the second transformation reads
\begin{equation}\label{def:TtoS}
S(z)=T(z) \left\{
\begin{array}{ll}
\begin{pmatrix} 1&0&0\\ \frac{ e^{\theta_1(sz)-\theta_2(sz)}}{1-\gamma}&1&-e^{\theta_1(sz)-\theta_3(sz)} \\ 0&0&1 \end{pmatrix},  & \qquad z\in \Omega_{1,-},\\
  \begin{pmatrix} 1&0&0\\ \frac{e^{\theta_2(sz)-\theta_1(sz)}}{\gamma-1} &1&-e^{\theta_2(sz)-\theta_3(sz)} \\ 0&0&1 \end{pmatrix}, & \qquad z\in\Omega_{1,+},\\
   \begin{pmatrix} 1&0&0\\0&1&0 \\  \frac{e^{\theta_{3}(sz)-\theta_{1}(sz)}}{\gamma-1} &-e^{\theta_3(sz)-\theta_2(sz)}&1 \end{pmatrix}, & \qquad z\in\Omega_{-1,+},\\
\begin{pmatrix} 1&0&0\\0&1&0 \\ \frac{e^{\theta_{3}(sz)-\theta_{2}(sz)}}{1-\gamma}&-e^{\theta_3(sz)-\theta_1(sz)}&1 \end{pmatrix}, & \qquad z\in\Omega_{-1,-}, \\
I, & \qquad \mbox{$z$ outside the lenses}.
\end{array}
\right.
\end{equation}

It is then straightforward to check that $S(z)$ satisfies the following RH problem.

\begin{rhp}\label{rhp:S}
The function $S$ defined in \eqref{def:TtoS} has the following properties:
\begin{enumerate}
\item[\rm (1)] $S(z)$ is defined and analytic in $\mathbb{C} \setminus \Sigma_S$,
where
\begin{equation}\label{def:SigmaS}
\Sigma_S:=\cup^5_{j=0}\Sigma_j^{(1)}\cup[-1,1] \cup \partial \Omega_{1,\pm} \cup \partial \Omega_{-1,\pm}
\end{equation}

\item[\rm (2)] $S(z)$ satisfies the jump condition
\begin{equation}\label{eq:S-jump}
 S_+(z)=S_-(z)J_S(z), \qquad z\in \Sigma_S,
\end{equation}
where
\begin{equation}\label{def:JS}
J_S(z):=\left\{
 \begin{array}{ll}
           J_T(z), & \qquad  \hbox{$z\in \cup^5_{j=0}\Sigma_j^{(1)}$,} \\
    \begin{pmatrix} 1&0&0\\ \frac{ e^{\theta_1(sz)-\theta_2(sz)}}{1-\gamma}&1&-e^{\theta_1(sz)-\theta_3(sz)}  \\ 0&0&1 \end{pmatrix}, & \qquad  \hbox{$z\in \partial\Omega_{1,-}$,} \\
    \begin{pmatrix} 1&0&0\\ \frac{e^{\theta_2(sz)-\theta_1(sz)}}{1-\gamma} &1&e^{\theta_2(sz)-\theta_3(sz)}  \\ 0&0&1 \end{pmatrix}, & \qquad  \hbox{$z\in \partial\Omega_{1,+}$,} \\
    \begin{pmatrix} 1&0&0\\0&1&0 \\  \frac{e^{\theta_{3}(sz)-\theta_{1}(sz)}}{1-\gamma} &e^{\theta_3(sz)-\theta_2(sz)}&1 \end{pmatrix},& \qquad  \hbox{$z\in \partial\Omega_{-1,+}$,} \\
    \begin{pmatrix} 1&0&0\\0&1&0 \\ \frac{e^{\theta_{3}(sz)-\theta_{2}(sz)}}{1-\gamma}&-e^{\theta_3(sz)-\theta_1(sz)}&1 \end{pmatrix}& \qquad  \hbox{$z\in \partial\Omega_{-1,-}$,} \\
 \begin{pmatrix} 0&1-\gamma&0\\ \frac{1}{\gamma-1}&0&0 \\ 0&0&1 \end{pmatrix}, & \qquad  \hbox{$z\in  (0,1)$,}\\
           \begin{pmatrix} 0&0&1-\gamma\\ 0&1&0 \\ \frac{1}{\gamma-1}&0& 0\end{pmatrix}, & \qquad  \hbox{$z\in(-1,0)$,}
        \end{array}
      \right.
 \end{equation}
and where $J_T$ is defined in \eqref{def:JT}.
\item[\rm (3)]As $z \to \infty$ and $\pm \Im z>0$, we have
\begin{equation}\label{eq:asyS}
S(z)=
\left(I+ \frac{\mathsf{T}_1}{z} +\mathcal \Boh(z^{-2}) \right)\diag \left(z^{-\frac13},1,z^{\frac13} \right)L_{\pm},
\end{equation}
for some function $\mathsf{T}_1$.
\item[\rm (4)] As $z \to \pm 1$, we have $S(z)=\Boh(\ln(z\mp 1))$.
\end{enumerate}
\end{rhp}

\subsection{Global parametrix}
It is now easily seen that all the jump matrices of $S$ tend to the identity matrices exponentially fast as $s\to +\infty$, except the ones along the real axis.  We are then led to consider the following RH problem for the global parametrix $N$.

\begin{rhp}\label{rhp:N}
We look for a $3 \times 3$ matrix-valued function $N$ satisfying the following properties:
\begin{enumerate}
\item[\rm (1)] $N(z)$ is defined and analytic in $\mathbb{C}\setminus \mathbb{R}$.

\item[\rm (2)] $N(z)$ satisfies the jump condition
\begin{equation}\label{eq:N-jump}
 N_+(z)=N_-(z)J_N(z), \qquad z\in \mathbb{R},
\end{equation}
where
\begin{equation}\label{def:JN}
J_N(z)=J_S(z)=
\left\{
 \begin{array}{ll}
          \begin{pmatrix} 0 &0&1 \\ 0&1&0 \\ -1&0&0 \end{pmatrix}, & \qquad  \hbox{$z\in(-\infty,-1)$,} \\
          \begin{pmatrix} 0&0&1-\gamma\\ 0&1&0 \\ \frac{1}{\gamma-1}&0& 0\end{pmatrix}, & \qquad  \hbox{$z\in(-1,0)$,}\\
          \begin{pmatrix} 0&1-\gamma&0\\ \frac{1}{\gamma-1}&0&0 \\ 0&0&1 \end{pmatrix}, & \qquad  \hbox{$z\in (0,1)$,}\\
           \begin{pmatrix} 0 & 1 &0 \\ -1 & 0 &0 \\ 0&0&1 \end{pmatrix}, & \qquad \hbox{$z\in(1,+\infty)$.}
                  \end{array}
      \right.
 \end{equation}
\item[\rm (3)]As $z \to \infty$ and $\pm \Im z>0$, we have
\begin{equation}\label{eq:asyN}
N(z)=
\left(I+ \frac{\mathsf{N}_1}{z} +\mathcal \Boh(z^{-2})\right)\diag \left(z^{-\frac13},1,z^{\frac13} \right)L_{\pm},
\end{equation}
for some constant $\mathsf{N}_1$.
%\item[\rm (4)] As $z \to \pm 1$, $N(z)=\Boh(\ln(z\mp 1))$.
\end{enumerate}
\end{rhp}

The solution of this RH problem takes the following form:
\begin{equation}\label{eq:NSolution}
N(z)=C_N \diag \left(z^{-\frac13},1,z^{\frac13} \right)L_{\pm}\diag\left(d_1(z), d_2(z),d_3(z)\right),
\end{equation}
where the constant matrix $C_N$ and the scalar functions $d_k$, $k=1,2,3$, are to be determined.

We start with the following proposition concerning the properties of $d_k$.
\begin{proposition}\label{prop:dk}
The scalar functions $d_k(z), k=1,2,3$, in \eqref{eq:NSolution} are analytic in $\mathbb{C}\setminus\mathbb{R}$ and satisfy the following relations:
\begin{align}
d_{3,+}(x)&=d_{3,-}(x), \qquad  &&x\in(0,1)\cup(1+\infty),
\\
d_{1,+}(x)&=\frac{d_{2,-}(x)}{1-\gamma}, \qquad d_{2,+}(x)=d_{1,-}(x)(1-\gamma), \qquad &&x\in(0,1),
\\
d_{1,+}(x)&=d_{2,-}(x), \qquad d_{2,+}(x)=d_{1,-}(x),  \qquad  && x\in(1,+\infty),
\\
d_{2,+}(x)&=d_{2,-}(x),  \qquad  && x\in(-\infty,-1)\cup(-1,0), \\
d_{1,+}(x)&=\frac{d_{3,-}(x)}{1-\gamma}, \qquad d_{3,+}(x)=d_{1,-}(x)(1-\gamma), \qquad && x\in(-1,0),\\
d_{1,+}(x)&=d_{3,-}(x), \qquad d_{3,+}(x)=d_{1,-}(x), \qquad  && x\in(-\infty,-1).
%d_k(z)=1+\Boh(1/z) & k=1,2,3.
\end{align}
\end{proposition}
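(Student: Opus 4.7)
The plan is to substitute the ansatz \eqref{eq:NSolution} directly into the jump condition $N_+(x) = N_-(x) J_N(x)$ and, on each segment of $\mathbb{R}$, reduce it to a matrix identity that can be read off entrywise. First, I fix the branches of $z^{1/3}$ and $z^{-1/3}$ with cut on $(-\infty,0]$, so the outer factor $\diag(z^{-1/3},1,z^{1/3})$ is continuous across $(0,+\infty)$ and satisfies
\begin{equation*}
\diag(z^{-1/3},1,z^{1/3})_+ = \diag(z^{-1/3},1,z^{1/3})_- \cdot \diag(\omega^2,1,\omega)
\end{equation*}
on $(-\infty,0)$, using $\omega^2 = e^{-2\pi i/3}$. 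Under this choice the prefactor $C_N$ and the outer diagonal are common to both sides of the jump, so the jump condition collapses to an identity involving only $L_\pm$, $J_N$, and $D_\pm := \diag(d_{1,\pm},d_{2,\pm},d_{3,\pm})$.

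On $(1,+\infty)$ and $(0,1)$, where there is no branch jump, the reduced identity is $L_+ D_+ = L_- D_- J_N$. A direct calculation using \eqref{def:Lpm} gives the two key algebraic identities
\begin{equation*}
L_- \begin{pmatrix} 0 & 1 & 0 \\ -1 & 0 & 0 \\ 0 & 0 & 1 \end{pmatrix} = L_+, \qquad
L_- \begin{pmatrix} 0 & 1-\gamma & 0 \\ \frac{1}{\gamma-1} & 0 & 0 \\ 0 & 0 & 1 \end{pmatrix} = L_+ \diag\!\left(\tfrac{1}{1-\gamma},\,1-\gamma,\,1\right).
\end{equation*}
Writing $D_- J_N$ as a matrix whose nonzero entries are the $d_{k,-}$ permuted by the column action of $J_N$, and comparing with $L_+ D_+$ column by column (each column of $L_+$ is scaled independently by the corresponding $d_{k,+}$), the relations on $(0,1)$ and $(1,+\infty)$ stated in the proposition fall out immediately.

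On the negative axis the outer diagonal contributes the extra factor, so the reduced identity becomes $\diag(\omega^2,1,\omega)\,L_+ D_+ = L_- D_- J_N$. A short matrix product shows that
\begin{equation*}
\diag(\omega^2,1,\omega)\,L_+ = L_- \begin{pmatrix} 0 & 0 & 1 \\ 0 & 1 & 0 \\ -1 & 0 & 0 \end{pmatrix},
\end{equation*}
which is precisely $L_-$ times the jump matrix on $(-\infty,-1)$. This reduces the analysis on $(-\infty,-1)$ and on $(-1,0)$ to the same permute-and-match procedure used on the positive axis: one multiplies on the right by $D_+$ on the left-hand side and by $J_N^{-1}$ or appropriate scalars on the right-hand side, then identifies the columns. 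Analyticity of each $d_k$ in $\C \setminus \mathbb{R}$ is automatic from the ansatz and the fact that $N$ is analytic off $\mathbb{R}$.

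The main obstacle is a purely bookkeeping one: keeping track of which entries of $D_-$ end up in which column after right-multiplication by the permutation-like part of $J_N$, and matching them against the $d_{k,+}$ that scale the corresponding columns of $L_+$ (or $\diag(\omega^2,1,\omega)L_+$). Once the column actions from the two sides are aligned, the six blocks of jump relations in the proposition are read off one after another.
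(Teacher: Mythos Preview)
Your proof is correct and follows essentially the same route as the paper: both substitute the ansatz \eqref{eq:NSolution} into $N_+=N_-J_N$, cancel the common prefactors, and reduce to a matrix identity between $L_\pm$, $J_N$, and the diagonal $D_\pm$ that can be read off column by column. The paper only writes out the case $x\in(0,1)$ (computing $L_+^{-1}L_-$ explicitly) and leaves the remaining intervals as ``similar''; your handling of the negative axis via the branch jump $\diag(\omega^2,1,\omega)$ and the identity $\diag(\omega^2,1,\omega)L_+=L_-\begin{pmatrix}0&0&1\\0&1&0\\-1&0&0\end{pmatrix}$ is exactly the computation the paper is tacitly invoking there.
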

\begin{proof}
We will only show the relations for $x\in(0,1)$, since the other relations can be proved similarly.

By \eqref{eq:NSolution} and \eqref{def:JN}, it follows that, if $x\in(0,1)$,
\begin{equation*}
L_+\diag(d_{1,+}(x),d_{2,+}(x),d_{3,+}(x))
=L_-\diag(d_{1,-}(x),d_{2,-}(x),d_{3,-}(x))
\begin{pmatrix} 0&1-\gamma&0\\ \frac{1}{\gamma-1}&0&0 \\ 0&0&1 \end{pmatrix}.
\end{equation*}
This, together with \eqref{def:Lpm}, implies that
\begin{align}
&\diag(d_{1,+}(x),d_{2,+}(x),d_{3,+}(x))
\nonumber
\\
&=L_+^{-1} L_-\diag(d_{1,-}(x),d_{2,-}(x),d_{3,-}(x))
\begin{pmatrix} 0&1-\gamma&0\\ \frac{1}{\gamma-1}&0&0 \\ 0&0&1 \end{pmatrix}
\nonumber
\\
& =\begin{pmatrix}
0 & -1 & 0
\\
1 & 0 & 0
\\ 0&0&1
\end{pmatrix}
\diag(d_{1,-}(x),d_{2,-}(x),d_{3,-}(x))
\begin{pmatrix} 0&1-\gamma&0\\ \frac{1}{\gamma-1}&0&0 \\ 0&0&1 \end{pmatrix}
\nonumber
\\
& =\diag\left(\frac{d_{2,-}(x)}{1-\gamma}, (1-\gamma)d_{1,-}(x),d_{3,-}(x)\right),
\end{align}
which gives us the relations on $(0,1)$.

This completes the proof of Proposition \ref{prop:dk}.
\end{proof}

To find the explicit expressions of $d_{k}$, $k=1,2,3$, we set
\begin{equation}\label{eq:dklambda}
\begin{array}{ll}
d_1(z)=\left\{
         \begin{array}{ll}
          \lambda(z^{\frac13}), & \quad \hbox{$\Im z>0$,} \\
           \lambda(\omega^{-1}z^{\frac13}), & \quad \hbox{$\Im z<0$,}
         \end{array}
       \right.
 \\
d_{2}(z)= \left\{\begin{array}{ll}
          \lambda(\omega^{-1} z^{\frac13}), & \quad \hbox{$\Im z>0$,} \\
          \lambda(z^{\frac13}), & \quad \hbox{$\Im z<0$,}
         \end{array}
       \right.
\\
d_{3}(z)=\lambda(\omega z^{\frac13}),
\end{array}
\end{equation}
for some function $\lambda$, where $-\pi <\arg z< \pi $ and recall that $\omega=e^{\frac{2\pi i}{3}}$. By further assuming that $d_k(z) \to 1 $, $k=1,2,3$, as $z\to \infty$, it is readily seen from \eqref{eq:dklambda} and Proposition \ref{prop:dk} that the function $\lambda$ solves the following scalar RH problem.
\begin{rhp}\label{rhp:lambda}
We look for a function $\lambda$ satisfying the following properties:
\begin{enumerate}
\item[\rm (1)] $\lambda(\xi)$ is defined and analytic in $\mathbb{C}\setminus\{(-1,1)\cup \omega^{-1} (-1, 1)\}$.

\item[\rm (2)] $\lambda(\xi)$ satisfies the jump condition
\begin{equation}\label{eq:lambdajump}
\lambda_+(\xi)=\lambda_{-}(\xi)\left\{
                                  \begin{array}{ll}
                                    1-\gamma, & \quad \hbox{$\xi  \in (-1,0)\cup e^{-\frac{2\pi i}{3}} (0, 1)$,} \\
                                    \frac{1}{1-\gamma}, &\quad  \hbox{$\xi  \in  e^{\frac{\pi i}{3}}(0,1)\cup (0, 1)$,}
                                  \end{array}
                                \right.
\end{equation}
where the orientations of the contours are shown in Figure \ref{fig:lambda}.

\item[\rm (3)]As $\xi \to \infty$, we have
$\lambda(\xi)\to 1$.
%\item[\rm (4)] As $z \to \pm 1$, $N(z)=\Boh(\ln(z\mp 1))$.
\end{enumerate}
\end{rhp}

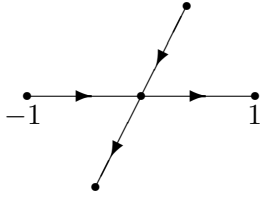
\begin{figure}[h]
\begin{center}
   \setlength{\unitlength}{1truemm}
   \vspace{-16mm}
   \begin{picture}(100,70)(-5,2)

       \put(40,40){\line(-1,-2){6}}
       \put(40,40){\line(1,2){6}}

       \put(25,40){\line(1,0){30}}
       %\dashline{0.8}(25,40)(55,40)
%       \dashline{0.8}(15,65)(40,40)
%       \dashline{0.8}(40,40)(65,65)
%       \dashline{0.8}(40,40)(65,15)

       %\put(30,55){\thicklines\vector(2,-3){1}}
       \put(32.5,40){\thicklines\vector(1,0){1}}
       %\put(30,25){\thicklines\vector(2,3){1}}
       %\put(50,55){\thicklines\vector(2,3){1}}
       \put(47.5,40){\thicklines\vector(1,0){1}}
       \put(43,46){\thicklines\vector(-1,-2){1}}
       \put(37,34){\thicklines\vector(-1,-2){1}}

  \put(40,40){\thicklines\circle*{1}}
       \put(25,40){\thicklines\circle*{1}}
       \put(55,40){\thicklines\circle*{1}}
       \put(46,52){\thicklines\circle*{1}}
       \put(34,28){\thicklines\circle*{1}}

       \put(22,36.3){$-1$}
       \put(54,36.3){$1$}
   \end{picture}
   \vspace{-23mm}
   \caption{The contours for the RH problem for $\lambda$.}
   \label{fig:lambda}
\end{center}
\end{figure}

It is easily seen that the solution to the above RH problem is explicitly given by
\begin{equation}\label{eq:lambdaSolution}
\lambda(\xi)= \left(\frac{\xi-\omega^{-1}}{\xi+1}\right)^{\beta}\left(\frac{\xi+\omega^{-1}}{\xi-1}\right)^{\beta}= \left(\frac{\xi^2-\omega}{\xi^2-1}\right)^{\beta},\qquad \xi\in \mathbb{C}\setminus \{(-1,1)\cup \omega^{-1}(-1, 1)\},
\end{equation}
where $\beta$ is defined in \eqref{def:alpha} and the branch is chosen such that $\lambda(\xi) \to 1$ as $\xi \to \infty$.

A further effort finally gives us the following lemma.
\begin{lemma}\label{lem:N}
Let $\lambda$ be the function defined in \eqref{eq:lambdaSolution},  we have
\begin{align}\label{def:N}
N(z)=
C_N
\diag \left(z^{-\frac13},1,z^{\frac13} \right)
\left\{
                                                \begin{array}{ll}
                                                  L_+\diag\left(\lambda(z^{\frac13}), \lambda(\omega^{-1}z^{\frac13}),\lambda(\omega z^{\frac13})\right), & ~~\hbox{$\Im z>0$,} \\
                                                  L_-\diag\left(\lambda(\omega^{-1}z^{\frac13}), \lambda(z^{\frac13}),\lambda(\omega z^{\frac13})\right), & ~~\hbox{$\Im z <0$,}
                                                \end{array}
                                              \right.
\end{align}
solves the RH problem \ref{rhp:N} for $N$, where
\begin{equation}\label{eq:CN}
C_N=\begin{pmatrix}
1 & 0 & 0
\\
0 & 1 & 0
\\
\beta \omega (\omega-1) & 0 & 1
\end{pmatrix}= \begin{pmatrix}
1 & 0 & 0
\\
0 & 1 & 0
\\
- \sqrt{3} \beta i  & 0 & 1
\end{pmatrix},
\end{equation}
and
$\beta$ is defined in \eqref{def:alpha}. Moreover, the coefficient $\mathsf{N}_1$ in the expansion \eqref{eq:asyN} takes the following structure
\begin{equation}\label{eq:N1}
\mathsf{N}_1= \begin{pmatrix} 0 & \sqrt{3}\beta i &0 \\ * & 0 &\sqrt{3}\beta i \\ 0&*&0 \end{pmatrix},
\end{equation}
where $*$ denotes certain unimportant entry.
\end{lemma}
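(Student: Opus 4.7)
\textbf{Proof proposal for Lemma \ref{lem:N}.}

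The plan is to verify directly that the function defined by \eqref{def:N}--\eqref{eq:CN} satisfies every item of RH problem \ref{rhp:N}, and then read off the structure of $\mathsf{N}_1$ from the large-$z$ expansion. Analyticity off $\mathbb{R}$ is immediate from \eqref{eq:dklambda}: the potential cuts of $z^{\pm 1/3}$ along $\mathbb{R}_-$ are compensated by the prescription of $\lambda(\omega^{\pm 1}z^{1/3})$ on the two sheets. For the jump condition, since left-multiplication by the constant matrix $C_N$ does not affect any jump, one only needs to check that $\diag(z^{-1/3},1,z^{1/3})\, L_\pm \diag(d_1,d_2,d_3)$ reproduces \eqref{def:JN} on each of $(-\infty,-1)$, $(-1,0)$, $(0,1)$, $(1,+\infty)$. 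This follows from the jumps of $d_1,d_2,d_3$ established in Proposition \ref{prop:dk}, together with the elementary monodromy identity $L_-^{-1}L_+=\bigl(\begin{smallmatrix} 0 & 1 & 0\\ -1 & 0 & 0\\ 0 & 0 & 1\end{smallmatrix}\bigr)$ that encodes the passage from the upper to the lower half plane across $\mathbb{R}$.

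For the asymptotic behavior at infinity I would expand the scalar solution \eqref{eq:lambdaSolution} as $\lambda(\xi)=1+\beta(1-\omega)\xi^{-2}+\mathcal{O}(\xi^{-4})$ and set $\delta=\beta\diag(1-\omega,\omega^2-1,\omega-\omega^2)$. For $\Im z>0$ one then has $\diag(d_1,d_2,d_3)=I+\delta z^{-2/3}+\mathcal{O}(z^{-4/3})$, and the task is to analyze
\begin{equation*}
C_N\, \diag(z^{-1/3},1,z^{1/3})\,L_+ \diag(d_1,d_2,d_3)\, L_+^{-1}\,\diag(z^{1/3},1,z^{-1/3}).
\end{equation*}
Using $1+\omega+\omega^2=0$ and $\omega^3=1$ together with the explicit columns of $L_+^{-1}$, a direct computation shows that the matrix $L_+\delta L_+^{-1}$ has zero diagonal, zero $(1,3)$ entry, and the three remaining nontrivial entries
\begin{equation*}
(L_+\delta L_+^{-1})_{12}=(L_+\delta L_+^{-1})_{23}=(L_+\delta L_+^{-1})_{31}=\beta(\omega-\omega^2)=\sqrt{3}\,\beta i.
\end{equation*}

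After the subsequent conjugation by $\diag(z^{-1/3},1,z^{1/3})$, the entries $(1,2)$, $(2,3)$ acquire the overall scaling $z^{-1/3}\cdot z^{-2/3}=z^{-1}$, thereby producing $\sqrt{3}\beta i\,(E_{12}+E_{23})/z$, while the $(3,1)$ entry gets amplified by $z^{2/3}$ and contributes a nonvanishing $\mathcal{O}(1)$ term equal to $\sqrt{3}\beta i\,E_{31}$. The role of $C_N$ is precisely to absorb this $\mathcal{O}(1)$ residue: requiring $C_N(I+\sqrt{3}\beta i\,E_{31})=I$ forces $C_N=I-\sqrt{3}\beta i\,E_{31}$, which coincides with \eqref{eq:CN} via $\beta\omega(\omega-1)=-\sqrt{3}\beta i$. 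The $(2,1)$ and $(3,2)$ entries of $\mathsf{N}_1$, marked by $*$ in \eqref{eq:N1}, receive contributions from the higher-order $z^{-4/3}$ term in the expansion of $\lambda$ via the $z^{1/3}$ scaling, but their explicit values are not needed. The $\Im z<0$ case is handled identically, using $L_-$ and the swapped definitions of $d_1,d_2$. The symmetry of the ansatz under $\Im z\mapsto -\Im z$ guarantees a consistent answer, and uniqueness of the RH solution (by the standard Liouville-type argument applied to a ratio of two solutions) closes the proof. The main bookkeeping obstacle is tracking the fractional powers $z^{-1/3},z^{-2/3},z^{-4/3},z^{-5/3}$ that appear from the product, and confirming that they all either vanish identically (as for the diagonal and the $(1,3)$ entries of $L_+\delta L_+^{-1}$) or combine into contributions of the correct order $z^{-1}$; this is where the trace-free property $\operatorname{tr}\delta=0$ and the cyclic identities satisfied by $L_\pm$ play an essential role.
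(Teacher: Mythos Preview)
Your proposal is correct and follows essentially the same route as the paper: both reduce the jump verification to Proposition \ref{prop:dk} together with the solution \eqref{eq:lambdaSolution} of the scalar RH problem, and both determine $C_N$ and the structure of $\mathsf{N}_1$ by expanding $\lambda(\xi)=1+\beta(1-\omega)\xi^{-2}+\Boh(\xi^{-4})$ and conjugating the resulting correction through $\diag(z^{-1/3},1,z^{1/3})L_\pm$. The paper phrases the key computation as a factorisation of $\diag(z^{-1/3},1,z^{1/3})L_+\diag(1,\omega^2,\omega^{-2})$, while you compute $L_+\delta L_+^{-1}$ first and then conjugate by the diagonal; these are the same calculation (indeed $L_+\delta L_+^{-1}=\sqrt{3}\beta i\,(E_{12}+E_{23}+E_{31})$, so your statement should also record that the $(2,1)$ and $(3,2)$ entries vanish, which is what kills the potential $z^{-1/3}$ contributions).
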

\begin{proof}
On account of our previous arguments, it remains to show \eqref{eq:CN}, which follows from the asymptotic condition \eqref{eq:asyN}. To that end, it is easily seen from \eqref{eq:lambdaSolution} that
\begin{equation}\label{eq:lambdainfty}
\lambda(\xi)=1+\frac{\beta(1-\omega)}{\xi^2}+\Boh(\xi^{-4}), \qquad \xi \to \infty.
\end{equation}
Thus, if $z\to \infty$ and $\Im z>0$, it follows that
\begin{align}
&\diag \left(z^{-\frac13},1,z^{\frac13} \right)L_{+}\diag\left(\lambda(z^{\frac13}), \lambda(\omega^{-1}z^{\frac13}),\lambda(\omega z^{\frac13})\right)
\nonumber
\\
&=\diag \left(z^{-\frac13},1,z^{\frac13} \right)L_{+}\left(I+\frac{\beta(1-\omega)}{z^{\frac23}}\diag(1,\omega^2,\omega^{-2})+\Boh(z^{-\frac43})\right). \label{eq:lambda-matrix-infinity}
\end{align}
Note that
\begin{align}
& \diag \left(z^{-\frac13},1,z^{\frac13} \right)L_{+}\diag(1,\omega^2,\omega^{-2})
\\
&=
\left(
\begin{pmatrix}
0 & 0 & 0
\\
0 & 0 & 0
\\
\omega & 0 & 0
\end{pmatrix}z^{\frac23}+\begin{pmatrix}
0 & \omega & 0
\\
0 & 0 & \omega
\\
0 & 0 & 0
\end{pmatrix}z^{-\frac13}\right)\diag \left(z^{-\frac13},1,z^{\frac13} \right)L_{+},
\end{align}
a combination of the above two formulas shows that
\begin{align}
&\diag \left(z^{-\frac13},1,z^{\frac13} \right)L_{+}\diag\left(\lambda(z^{\frac13}), \lambda(\omega^{-1}z^{\frac13}),\lambda(\omega z^{\frac13})\right)
\nonumber
\\
&=\left(\begin{pmatrix}
1 & 0 & 0
\\
0 & 1 & 0
\\
\beta \omega (1-\omega) & 0 & 1
\end{pmatrix}+\Boh(z^{-1})\right)\diag \left(z^{-\frac13},1,z^{\frac13} \right)L_{+}.
\end{align}
This, together with \eqref{eq:asyN} and \eqref{eq:dklambda}, gives us \eqref{eq:CN}. One gets the same result if we consider the asymptotics as $z\to \infty$ and $\Im z<0$.

Finally, the structure of $\mathsf{N}_1$ in \eqref{eq:N1} can be seen by expanding one more term in \eqref{eq:lambda-matrix-infinity} and a straightforward calculation.

This completes the proof of Lemma \ref{lem:N}.
\end{proof}

\subsection{Local parametrix near $1$}
Due to the fact that the convergence of the jump matrices to the identity matrices on $\Sigma_1^{(1)}$, $\Sigma_5^{(1)}$ and $\partial \Omega_{1,\pm}$ is not uniform near 1, we intend to find a function $P^{(1)}(z)$ satisfying an RH problem as follows:
\begin{rhp}\label{rhp:P-1}
We look for a $3 \times 3$ matrix-valued function  $P^{(1)}(z)$ satisfying
\begin{enumerate}
\item[\rm (1)]  $P^{(1)}(z)$ is defined and analytic in $D(1,\delta)\setminus \Sigma_S$, where $\Sigma_S$ is defined in \eqref{def:sigmais} and $D(z_0,\delta)$ stands for a fixed open disc centered at $z_0$ with radius $\delta>0$.

\item[\rm (2)]  $P^{(1)}(z)$ satisfies the jump condition
\begin{equation}\label{eq:P1-jump}
P^{(1)}_+(z)=P^{(1)}_-(z)J_S(z), \qquad z\in \Sigma_S \cap D(1,\delta),
\end{equation}
where $J_S(z)$ is given in \eqref{def:JS}.

\item[\rm (3)]
As $ s \to +\infty$, $P^{(1)}(z)$ matches $N(z)$ on the boundary $\partial D(1,\delta)$ of $D(1,\delta)$, i.e.,
\begin{equation}\label{eq:MatchingCondP1}
P^{(1)}(z)=\left(I+\Boh(s^{-\frac43}) \right)N(z),\qquad z\in \partial D(1,\delta).
\end{equation}
\end{enumerate}
\end{rhp}

To construct the local parametrix near $z=1$, we first introduce the following transformation to remove the $(2,3)$ entry of $J_S(z)$ in \eqref{def:JS}:
\begin{equation} \label{P1:P1-tilde}
P^{(1)}(z) = \widehat{P}^{(1)}(z)  \begin{cases}  \begin{pmatrix} 1& 0 &0 \\ 0 & 1&e^{\theta_2(sz)-\theta_3(sz)} \\ 0&0&1 \end{pmatrix}, &~~~ z \in \texttt{II} \cap D(1,\delta) \setminus \Omega_{1,+}, \\
\begin{pmatrix} 1& 0 &0 \\ 0 & 1&e^{\theta_1(sz)-\theta_3(sz)} \\ 0&0&1 \end{pmatrix}, & ~~~z \in \texttt{V} \cap D(1,\delta) \setminus \Omega_{1,-}, \\
I, & \textrm{otherwise},
\end{cases}
\end{equation}
where the regions $\texttt{II}$ and $\texttt{V}$ are illustrated in Figure \ref{fig:X}. Recall the definition of $\theta_k(z)$, $k=1,2,3$, in \eqref{eq: theta-k-def}, it is readily seen that $e^{\theta_2(sz)-\theta_3(sz)}$ and $e^{\theta_1(sz)-\theta_3(sz)}$ are analytic in $\mathbb{C} \setminus (-\infty, 0]$ and both of them are exponentially small as $s \to +\infty$, uniformly for $z \in D(1,\delta)$. Hence, $\widehat{P}^{(1)}(z)$ satisfies a similar RH problem as $ P^{(1)}(z)$, but with the jump condition \eqref{eq:P1-jump} replaced by
\begin{equation}\label{eq:hatP1-jump}
\widehat P^{(1)}_+(z)=\widehat P^{(1)}_-(z)\widehat J^{(1)}(z), \qquad z\in \Sigma_S \cap D(1,\delta),
\end{equation}
where
\begin{equation}\label{def:JP1}
\widehat J^{(1)}(z):=\left\{
 \begin{array}{ll}
          \begin{pmatrix} 0 & 1 &0 \\ -1 & 0 &0 \\ 0&0&1 \end{pmatrix}, & \qquad \hbox{$z\in \Sigma_0^{(1)}\cap D(1,\delta)$,} \\
          \begin{pmatrix} 1&0&0 \\ e^{\theta_2(sz)-\theta_1(sz)}&1&0\\ 0&0&1 \end{pmatrix},  & \qquad  \hbox{$z\in \Sigma_1^{(1)}\cap D(1,\delta)$,} \\
          \begin{pmatrix} 1&0&0 \\ e^{\theta_1(sz)-\theta_2(sz)}&1&0\\ 0&0&1 \end{pmatrix}, & \qquad  \hbox{$z\in \Sigma_5^{(1)}\cap D(1,\delta)$,} \\
           \begin{pmatrix} 1&0&0\\ \frac{ e^{\theta_1(sz)-\theta_2(sz)}}{1-\gamma}&1&0 \\ 0&0&1 \end{pmatrix}, & \qquad  \hbox{$z\in \partial \Omega_{1,-}\cap D(1,\delta)$,} \\
    \begin{pmatrix} 1&0&0\\ \frac{e^{\theta_2(sz)-\theta_1(sz)}}{1-\gamma} &1&0 \\ 0&0&1 \end{pmatrix}, & \qquad  \hbox{$z\in \partial \Omega_{1,+}\cap D(1,\delta)$,} \\
    \begin{pmatrix} 0&1-\gamma&0\\ \frac{1}{\gamma-1}&0&0 \\ 0&0&1 \end{pmatrix}, & \qquad  \hbox{$z\in  (1-\delta,1)$.}
        \end{array}
      \right.
 \end{equation}

% In a small neighborhood of $1$, say, $D(1,\delta)$,
% observe from that the $(2,3)$ entry of $J_S(z)$ is either $0$, or tends to $0$ exponentially fast for large $s$, uniformly for $z$ belonging to a small neighborhood of $1$, say, $D(1,\delta)$. Hence, we may ignore this entry in the construction of the local parametrix around $1$ which then has to
%satisfy the following RH problem.

We can construct $\widehat{P}^{(1)}(z)$ explicitly by using the confluent hypergeometric parametrix $\Phi^{(\CHF)}$ introduced in Appendix \ref{sec:CHF}. To see this, we define the following local conformal mapping near $z=1$:
\begin{align}\label{eq:f}
f(z) &= -is^{-\frac43}[(\theta_2(sz)-\theta_1(sz))-(\theta_2(s)-\theta_1(s))]
\nonumber \\
& = \frac{3\sqrt{3}}{4} (z^{\frac43}-1)-\frac{ \sqrt{3} \rho}{2s^{\frac23}}(z^{\frac23}-1), \qquad z\in D(1,\delta).
\end{align}
Obviously, we have
\begin{equation} \label{f1-f'1}
f(1)=0, \qquad   f'(1)=\sqrt{3}-\frac{\sqrt{3}}{3}\rho s^{-\frac23}, \qquad  f''(1) = \frac{\sqrt{3}}{3} +  \frac{\sqrt{3}}{9} \rho s^{-\frac23},
\end{equation}
where $f'(1)$ is positive when  $s$ is large enough. Let $\Phi^{(\CHF)}(z;\beta)$ be the confluent hypergeometric parametrix with $\beta$ given in \eqref{def:alpha} (see Appendix \ref{sec:CHF} below), we set, for $z\in D(1,\delta)\setminus \Sigma_S$,
\begin{align}\label{eq:P1Solution}
\widehat{P}^{(1)}(z)&=E_1(z) \diag \left( \Phi^{(\CHF)}(s^{\frac43}f(z);\beta)e^{-\frac\beta2 \pi i\sigma_3},1 \right)
%\begin{pmatrix} (H(s^{\frac43}f(z); \beta))_{11}& (H(s^{\frac43}f(z); \beta))_{12} &0 \\ (H(s^{\frac43}f(z); \beta))_{21} & (H(s^{\frac43}f(z); \beta))_{22} &0 \\ 0&0&1 \end{pmatrix}
%\diag\{(1-\gamma)^{1/4}, (1-\gamma)^{-1/4},1\}
\nonumber\\
&~~~\times\left\{
\begin{array}{ll}
\diag\left(e^{\frac{\theta_2(sz)-\theta_1(sz)}{2}\sigma_3},1\right), & \quad  \Im z>0,\\
\diag\left(e^{\frac{\theta_1(sz)-\theta_2(sz)}{2}\sigma_3},1\right), & \quad  \Im z<0,
\end{array}
\right.
\end{align}
where $\diag \left(A,1\right)$ stands for the $3 \times 3$ block diagonal matrix $\begin{pmatrix}(A)_{11} & (A)_{12} & 0
\\
(A)_{21} & (A)_{22} & 0
\\
0 & 0 & 1
\end{pmatrix}$
with $A$ being a $2\times 2$ matrix,
$\sigma_3=
\begin{pmatrix}
0 & 1
\\
-1 & 0
\end{pmatrix}$, $f(z)$ is defined in \eqref{eq:f} and
\begin{equation}\label{def:E1}
E_1(z)=\left\{
         \begin{array}{ll}
           N(z)\diag\left(\left(e^{\frac{\theta_1(s)-\theta_2(s)+\beta \pi i}{2}}s^{\frac{4\beta}{3}}f(z)^{\beta}\right)^{\sigma_3},1\right), & \quad \hbox{$\Im z>0$}, \\
           N(z)\diag\left(
\begin{pmatrix}
0 & 1
\\
-1 & 0
\end{pmatrix}
\left(e^{\frac{\theta_1(s)-\theta_2(s)+\beta \pi i}{2}}s^{\frac{4\beta}{3}}f(z)^{\beta}\right)^{\sigma_3},1 \right), & \quad \hbox{$\Im z<0$},
         \end{array}
       \right.
\end{equation}
with $N(z)$ given in \eqref{def:N}.

\begin{proposition}\label{prop:P1}
The local parametrix $P^{(1)}(z)$  defined in \eqref{P1:P1-tilde} and \eqref{eq:P1Solution} solves the RH problem \ref{rhp:P-1} for large positive $s$.
\end{proposition}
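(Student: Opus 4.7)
The plan is to verify the three defining properties of the RH problem~\ref{rhp:P-1} for $P^{(1)}(z)$: analyticity away from $\Sigma_S$, the prescribed jump $J_S$ on $\Sigma_S \cap D(1,\delta)$, and the matching $P^{(1)}(z)=(I+\mathcal{O}(s^{-4/3}))N(z)$ on $\partial D(1,\delta)$. Since the outer diagonal factor in \eqref{P1:P1-tilde} is built from $e^{\theta_1-\theta_3}$ and $e^{\theta_2-\theta_3}$, which are entire on $D(1,\delta)$ and exponentially small in $s$, it suffices to verify the analogous properties for $\widehat{P}^{(1)}$ with the simpler jumps $\widehat{J}^{(1)}$ of \eqref{def:JP1}; the dressing in \eqref{P1:P1-tilde} then restores exactly the $(2,3)$ entry of $J_S$ on the relevant rays and leaves all other jumps unaffected, and its contribution to the matching is $I+\mathcal{O}(e^{-c s^{4/3}})$.

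The heart of the proof is to check that the $2\times 2$ block in \eqref{eq:P1Solution} produces exactly $\widehat{J}^{(1)}$. First, I would confirm that $f$ in \eqref{eq:f} is a local biholomorphism $D(1,\delta)\to f(D(1,\delta))$ sending $1\mapsto 0$, using \eqref{f1-f'1}, so that $s^{4/3}f(z)$ plays the role of the CHF variable. Under this change, the rays $\Sigma_{0,1,5}^{(1)}$ and $\partial\Omega_{1,\pm}$ inside $D(1,\delta)$ are mapped (after a possible small deformation, absorbable by analyticity of the exponents) onto the jump contour of $\Phi^{(\CHF)}$ listed in the appendix, with the parameter $\beta$ as in \eqref{def:alpha}. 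Pulling back the jump relations of $\Phi^{(\CHF)}(\,\cdot\,;\beta)$ and using that $\theta_2(sz)-\theta_1(sz)=is^{4/3}f(z)+i(\theta_2(s)-\theta_1(s))$ is \emph{analytic} on $D(1,\delta)$, together with the cancellation $e^{-\frac{\beta}{2}\pi i\sigma_3}$ on the outside, reproduces exactly the matrices in \eqref{def:JP1}. On the segment $(1-\delta,1)$ one uses the CHF jump $\begin{pmatrix}0 & e^{-\beta\pi i}\\ -e^{\beta\pi i} & 0\end{pmatrix}$ combined with the definition \eqref{def:alpha} of $\beta$, i.e.\ $e^{-2\pi i\beta}=1-\gamma$, to recover $\begin{pmatrix}0 & 1-\gamma\\ 1/(\gamma-1) & 0\end{pmatrix}$.

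Next I would verify that the prefactor $E_1(z)$ in \eqref{def:E1} is analytic in $D(1,\delta)$. A priori, $N(z)$ jumps across $(1-\delta,1)$ according to \eqref{def:JN}, and the scalar factor $f(z)^\beta$ jumps by $e^{-2\pi i\beta\sigma_3}=(1-\gamma)^{-\sigma_3}$, while across $(1,1+\delta)$ both $N(z)$ and the $\sigma_3$-rotation by $\begin{pmatrix}0&1\\-1&0\end{pmatrix}$ swap the first two columns. A short computation based on Lemma~\ref{lem:N} and Proposition~\ref{prop:dk} (specifically the jump of $d_1,d_2$ on $(1-\delta,1)$ by the factors $1/(1-\gamma)$ and $1-\gamma$, and their simple swap on $(1,1+\delta)$) shows that these jumps cancel; moreover the removable logarithmic singularities of $\Phi^{(\CHF)}$ at the origin are compensated by the $f(z)^\beta$ factor, so that $E_1$ extends analytically to all of $D(1,\delta)$. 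Granted analyticity of $E_1$, the jumps of $\widehat{P}^{(1)}$ come entirely from $\Phi^{(\CHF)}$ as described above.

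Finally, the matching condition on $\partial D(1,\delta)$ follows from the large-argument asymptotics of $\Phi^{(\CHF)}$ (recalled in the appendix), which reads $\Phi^{(\CHF)}(\zeta;\beta)=\left(I+\frac{\Phi^{(\CHF)}_1}{\zeta}+\mathcal{O}(\zeta^{-2})\right)\zeta^{-\beta\sigma_3}e^{\mp\frac{i\zeta}{2}\sigma_3}e^{\frac\beta2\pi i\sigma_3}$. Substituting $\zeta=s^{4/3}f(z)$ makes the oscillatory factor $e^{-\frac{i}{2}s^{4/3}f(z)\sigma_3}$ cancel exactly against the conjugating $\diag(e^{\frac{\theta_2(sz)-\theta_1(sz)}{2}\sigma_3},1)$ in \eqref{eq:P1Solution}, while the $\zeta^{-\beta\sigma_3}$ factor pairs with the $f(z)^\beta$ built into $E_1$ to produce the identity. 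What remains is $E_1(z)\cdot\left(I+\mathcal{O}(s^{-4/3})\right)\cdot E_1(z)^{-1}$ on the outside of $N(z)$, giving \eqref{eq:MatchingCondP1}; uniform boundedness of $E_1$ and $E_1^{-1}$ on the compact contour $\partial D(1,\delta)$ transports the $s^{-4/3}$ bound without loss. The main technical obstacle is the analyticity verification for $E_1$: one must carefully align the branches of $f(z)^\beta$ (which by \eqref{f1-f'1} satisfies $f(z)^\beta\sim (\sqrt{3}(z-1))^\beta$) with the column-permutation jumps of $N(z)$ on both sides of $z=1$, and this is where the specific normalization constants $e^{\frac{\theta_1(s)-\theta_2(s)+\beta\pi i}{2}}$ and $s^{4\beta/3}$ in \eqref{def:E1} are essentially forced.
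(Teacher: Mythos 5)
Your proof follows the same strategy as the paper's: reduce to $\widehat{P}^{(1)}$ via the dressing factor \eqref{P1:P1-tilde}, verify the jumps come from those of $\Phi^{(\CHF)}$ under the conformal change $z\mapsto s^{4/3}f(z)$, show analyticity of the prefactor $E_1$ across $(1-\delta,1+\delta)$ with a removable singularity at $z=1$, and obtain the matching condition from the large-argument asymptotics \eqref{H at infinity}. A few small inaccuracies are worth flagging: with $\beta=\frac{1}{2\pi i}\ln(1-\gamma)$ one has $e^{2\pi i\beta}=1-\gamma$, not $e^{-2\pi i\beta}=1-\gamma$; the segment $(1-\delta,1)$ maps under $f$ to the \emph{negative} real $\zeta$-axis, so the relevant CHF jump is $\widehat J_4=\begin{pmatrix}0&e^{\beta\pi i}\\-e^{-\beta\pi i}&0\end{pmatrix}$ rather than $\widehat J_1$ as you wrote (after conjugation by $e^{\pm\frac{\beta\pi i}{2}\sigma_3}$ this yields $\begin{pmatrix}0&e^{2\beta\pi i}\\-e^{-2\beta\pi i}&0\end{pmatrix}=\begin{pmatrix}0&1-\gamma\\ \frac{1}{\gamma-1}&0\end{pmatrix}$); and the sentence about ``the removable logarithmic singularities of $\Phi^{(\CHF)}$\ldots compensated by the $f(z)^\beta$ factor'' conflates two distinct issues---$E_1$ does not involve $\Phi^{(\CHF)}$ at all, and its $f(z)^{\pm\beta}$ factors instead cancel the $(z-1)^{\mp\beta}$-type singularity of $N(z)$ coming from the scalar $\lambda$ in \eqref{eq:lambdaSolution}, which is exactly the paper's argument via \eqref{eq:CNdiag}--\eqref{eq:dk-1}. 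None of these affect the overall soundness of the argument.
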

\begin{proof}
By \eqref{HJumps} and \eqref{def:alpha}, it is readily seen that $P^{(1)}(z)$ satisfies the jump condition \eqref{eq:P1-jump} if the prefactor $E_1(z)$ is analytic in $D(1,\delta)$.
In view of \eqref{def:E1} and \eqref{eq:N-jump}, it is immediate that $E_{1,+}(x)=E_{1,-}(x)$ for $x\in(1,1+\delta)$. If $x\in(1-\delta,1)$, we observe from \eqref{eq:f} that $f_+(x)^{\beta}=f_-(x)^{\beta}e^{2\beta \pi i}$ for large $s$. Thus, it follows from \eqref{eq:N-jump} and \eqref{def:alpha} that
\begin{align}
&N_+(x)\diag\left(f_+(x)^{\beta\sigma_3},1\right)
\nonumber
\\
&=N_-(x)\diag\left(\begin{pmatrix}
0 & e^{2\beta \pi i}
\\
-e^{-2\beta \pi i} & 0
\end{pmatrix}e^{2\beta \pi i \sigma_3}f_-(x)^{\beta\sigma_3},1\right)
\nonumber
\\
&= N_-(x)\diag\left(\begin{pmatrix}
0 & 1
\\
-1 & 0
\end{pmatrix}f_-(x)^{\beta\sigma_3},1\right),
\end{align}
which implies $E_{1,+}(x)=E_{1,-}(x)$ for $x\in(1-\delta,1)$. Hence, $E_1(z)$ is analytic in the punctured disk $D(1,\delta)\setminus \{1\}$. Recall the definition of $N$ in \eqref{def:N}, we note that, if $z\to 1$ with $\Im  z>0$,
\begin{equation}\label{eq:CNdiag}
C_N\diag\left(z^{-\frac13},1,z^{\frac13} \right)L_{+}=C_N L_{+}\left(I+\frac{i}{3\sqrt{3}}\begin{pmatrix} 0 &1&-1 \\ -1&0&-1 \\ 1&1&0 \end{pmatrix}(z-1)+\Boh((z-1)^2)\right),
\end{equation}
and
\begin{align}\label{eq:dk-1}
&\diag\left(\lambda(z^{\frac13}), \lambda(\omega^{-1}z^{\frac13}),\lambda(\omega z^{\frac13})\right)
\nonumber
\\
&=
\diag \left( \left(\frac{3(1-\omega)}{2(z-1)} \right)^{\beta},  \left(\frac
{2(z-1)}{3(1-\omega^2)} \right)^{\beta}, (1+\omega)^{\beta} \right)
\nonumber\\
&~~~ \times\left( I+ \diag \left( \frac{3\sqrt{3}+2 i}{6 \sqrt{3}}\beta, \frac{-3\sqrt{3} + 2i}{6 \sqrt{3}} \beta, -\frac{2\beta i}{3 \sqrt{3}}\right)(z-1) +\Boh((z-1)^2)\right).
\end{align}
Hence, a combination of the above two formulas, \eqref{def:N}, \eqref{def:E1} and \eqref{f1-f'1} shows that $E_1(z)=\Boh(1)$ if $z\to 1$ with $\Im  z>0$. The conclusion also holds if $z\to 1$ with $\Im z<0$
by similar arguments. As a consequence, the singularity of $E_1(z)$ at $1$ is removable, as required.

As for the matching condition \eqref{eq:MatchingCondP1}, one can check directly from \eqref{eq:f} and the large $z$ behavior of $\Phi^{(\CHF)}(z)$ given in \eqref{H at infinity}.

This completes the proof of Proposition \ref{prop:P1}.
\end{proof}

For later use, we collect the local behavior of $E_1(z)$ near $z = 1$ in the following proposition, which can be verified from a direct calculation with the aid of \eqref{def:E1}, \eqref{eq:CNdiag} and \eqref{eq:dk-1}.
\begin{proposition}
With the function $E_1(z)$ defined in \eqref{def:E1}, we have
\begin{equation}\label{eq:E-expand}
E_1(z)=E_1(1)\left(I+E_1(1)^{-1}E_1'(1)(z-1)+\Boh((z-1)^2) \right), \qquad  z \to 1,
\end{equation}
where
\begin{equation}\label{eq:E-expand-coeff-1}
E_1(1)=C_NL_{+}\diag\Big(c_1(s), c_1(s)^{-1}e^{-\frac{\beta\pi i }{3}}, e^{\frac{\beta \pi i }{3}} \Big),
\end{equation}
and
\begin{align}
& E_1(1)^{-1}E_1'(1) \nonumber \\
& = \frac{\sqrt{3} i}{9}\begin{pmatrix} -\frac{3\sqrt{3}\beta  i f''(1)}{2f'(1)} - 2 \beta \omega - \frac{\sqrt{3}\beta i}{2} &c_1(s)^{-2} e^{-\frac{\beta \pi i }{3}} &-c_1(s)^{-1} e^{\frac{\beta \pi i }{3}} \\
-c_1(s)^{2} e^{\frac{\beta \pi i }{3}} & \frac{3\sqrt{3}\beta i f''(1)}{2f'(1)}- 2\beta\omega^2 +\frac{\sqrt{3}\beta i}{2} &-c_1(s) e^{\frac{2\beta \pi i }{3}} \\
c_1(s) e^{-\frac{\beta \pi i }{3}} &c_1(s)^{-1} e^{-\frac{2 \beta \pi i }{3}}& -2\beta \end{pmatrix}, \label{eq:E-expand-coeff-2}
\end{align}
and where $L_+$ and $C_N$ are given in \eqref{def:Lpm} and \eqref{eq:CN}, respectively, and
\begin{equation}\label{eq:c-1}
c_1(s)=\left(\frac{3 \sqrt{3}}{2}\right)^{\beta} e^{\frac{\beta\pi i}{3}}s^{\frac{4\beta}{3}}f'(1)^{\beta}e^{\frac{\theta_1(s)-\theta_2(s)}{2}}.
\end{equation}
\end{proposition}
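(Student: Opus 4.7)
The plan is a direct two-term Taylor expansion of each factor in \eqref{def:E1} at $z=1$. Since $E_1$ has already been shown to be analytic at $z=1$ during the proof of Proposition \ref{prop:P1}, only the values $E_1(1)$ and $E_1'(1)$ need to be identified. Working on the $\Im z>0$ branch, I would write
\[ E_1(z) = C_N \bigl[\diag(z^{-1/3},1,z^{1/3})\,L_+\bigr] \bigl[\diag(\lambda(z^{1/3}),\lambda(\omega^{-1}z^{1/3}),\lambda(\omega z^{1/3}))\bigr]\, \diag(g(z)^{\sigma_3},1), \]
with $g(z)=e^{(\theta_1(s)-\theta_2(s)+\beta\pi i)/2}s^{4\beta/3}f(z)^\beta$, and feed in the expansions \eqref{eq:CNdiag} and \eqref{eq:dk-1} together with $f(z)^\beta = f'(1)^\beta(z-1)^\beta\bigl(1+\tfrac{\beta f''(1)}{2f'(1)}(z-1)+\Boh((z-1)^2)\bigr)$, which follows from \eqref{f1-f'1}.

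The key simplification is that the singular $(z-1)^{\mp\beta}$ factors in the $\lambda$-diagonal cancel exactly against $f(z)^{\pm\beta}$ in the rightmost factor---this is the very mechanism that forces $E_1$ to be analytic---so the product of the two rightmost diagonal matrices reduces to $D_0\bigl(I+(B+D_f)(z-1)+\Boh((z-1)^2)\bigr)$, where $D_0=\diag(c_1(s),c_1(s)^{-1}e^{-\beta\pi i/3},e^{\beta\pi i/3})$, $B$ is the diagonal $\beta$-matrix from \eqref{eq:dk-1}, and $D_f=\diag(\tfrac{\beta f''(1)}{2f'(1)},-\tfrac{\beta f''(1)}{2f'(1)},0)$; here I use $1-\omega=\sqrt{3}\,e^{-i\pi/6}$, $1-\omega^2=\sqrt{3}\,e^{i\pi/6}$ and $1+\omega=e^{i\pi/3}$ to identify $(3(1-\omega)/2)^\beta,(2/(3(1-\omega^2)))^\beta,(1+\omega)^\beta$ with the entries of $D_0$. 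Combining with the anti-symmetric correction $\tfrac{i}{3\sqrt{3}}A$ from \eqref{eq:CNdiag} (with $A$ denoting the displayed matrix there), one arrives at
\[ E_1(z) = C_N L_+ D_0 \Bigl(I + \bigl[\tfrac{i}{3\sqrt{3}}D_0^{-1} A D_0 + B + D_f\bigr](z-1) + \Boh((z-1)^2)\Bigr), \]
which gives $E_1(1)=C_N L_+ D_0$, matching \eqref{eq:E-expand-coeff-1}, and the identity $E_1(1)^{-1}E_1'(1)=\tfrac{i}{3\sqrt{3}}D_0^{-1}AD_0+B+D_f$.

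The off-diagonal entries of \eqref{eq:E-expand-coeff-2} then follow immediately, since conjugation by the diagonal $D_0$ rescales the $(j,k)$ entry of $A$ by $(D_0)_{kk}/(D_0)_{jj}$ and thus produces exactly the $c_1(s)^{\pm 1}$, $e^{\pm\beta\pi i/3}$, $e^{\pm 2\beta\pi i/3}$ factors quoted. The diagonal entries collapse to checking the three small identities
\[ \tfrac{i}{3\sqrt{3}}\Bigl(-2\beta\omega-\tfrac{\sqrt{3}\beta i}{2}\Bigr)=B_{11},\quad \tfrac{i}{3\sqrt{3}}\Bigl(-2\beta\omega^2+\tfrac{\sqrt{3}\beta i}{2}\Bigr)=B_{22},\quad \tfrac{i}{3\sqrt{3}}(-2\beta)=B_{33}, \]
all of which are elementary consequences of $\omega=-\tfrac12+\tfrac{i\sqrt{3}}2$; the remaining $\pm\tfrac{3\sqrt{3}\beta i f''(1)}{2f'(1)}$ pieces in the $(1,1)$ and $(2,2)$ diagonal entries account for $\pm D_f^{(j,j)}$ via $\tfrac{i}{3\sqrt{3}}\cdot(\mp\tfrac{3\sqrt{3}\beta i f''(1)}{2f'(1)})=\pm\tfrac{\beta f''(1)}{2f'(1)}$. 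These arithmetic checks are the main (and mild) obstacle; once handled, the stated form of \eqref{eq:E-expand-coeff-2} is immediate.
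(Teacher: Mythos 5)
Your proposal is correct, and it takes the same route the paper intends: the paper states this proposition ``can be verified from a direct calculation with the aid of \eqref{def:E1}, \eqref{eq:CNdiag} and \eqref{eq:dk-1},'' and your argument is precisely that calculation carried out explicitly, with the singular $(z-1)^{\pm\beta}$ factors cancelling between the $\lambda$-diagonal and $f(z)^{\pm\beta}$, the constant $D_0$ matched to the entries of \eqref{eq:E-expand-coeff-1} via $1-\omega=\sqrt{3}e^{-i\pi/6}$, $1-\omega^2=\sqrt{3}e^{i\pi/6}$, $1+\omega=e^{i\pi/3}$, and the conjugation $D_0^{-1}AD_0$ producing the off-diagonal factors of \eqref{eq:E-expand-coeff-2} while $B+D_f$ accounts for the diagonal.
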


%Recalling the expansion of $N(z)$ as $z \to 1$ in \eqref{eq:NSolution-1} and the definition of $E_1(z)$ in  \eqref{def:E1}, we obtain, for $\Im z >0$,
%
%As $E_1(z)$ is analytic for $z \in D(1,\delta)$, the above formula actually holds in the whole neighbourhood $D(1,\delta)$. The coefficients in the above expansion are

%\begin{equation}\label{eq:c-1}c_1(s)=\left(\frac{3}{2}\right)^{\beta}(1-\omega)^{\beta}e^{\beta\pi i/2}s^{4\beta/3}f'(1)^{\beta}e^{\frac{\theta_2(s)-\theta_1(s)}{2}}.\end{equation}
%\begin{align}\label{eq:E-expand}
%E_1(z)&=C_0L_{+}\diag(c_1(s), c_1(s)^{-1}(1+\omega)^{-\beta}, (1+\omega)^{\beta})\nonumber\\
%&\left(I+\frac{i}{3\sqrt{3}}\begin{pmatrix} \frac{3\sqrt{3}\beta f''(1)}{2f'(1)i} &c_1(s)^{-2}(1+\omega)^{-\beta}&-c_1(s)^{-1}(1+\omega)^{\beta}) \\
%-c_1(s)^{2}(1+\omega)^{\beta}&- \frac{3\sqrt{3}\beta f''(1)}{2f'(1)i} &-c_1(s)(1+\omega)^{2\beta} \\
%c_1(s)(1+\omega)^{-\beta} &c_1(s)^{-1}(1+\omega)^{-2\beta}&0 \end{pmatrix}(z-1)+\Boh(z-1)^2\right),
%\end{align}
%Thus, we obtain the expansion for $z\in \mathtt{II}$
%\begin{align}\label{eq:P1-expand}
%P^{(1)}(z)\begin{pmatrix} 1& 0 &0 \\ 0 & 1&e^{\theta_2(sz)-\theta_3(sz)} \\ 0&0&1 \end{pmatrix}&e^{\Theta(sz)}
%=E_1(z)H_0\left(I+H_1s^{\frac43}f'(1)(z-1)+\Boh(z-1)^2\right)\nonumber\\
%&~~\times M(z)\begin{pmatrix}
%	1 & -\frac{\gamma}{2\pi i} \ln \left(e^{-\frac{\pi i}{2}}s^{\frac43}f(z)\right)& -\frac{\gamma}{2\pi i} \ln\left(e^{-\frac{\pi i}{2}}s^{\frac43}f(z)\right) \\
%	0 & 1 & 0 \\
%	0& 0 & 1
%	\end{pmatrix} ,\end{align}
%	where

\subsection{Local parametrix near $-1$}
Near $z=-1$, we consider the following local parametrix.
\begin{rhp}\label{rhp:Pnegative1}
We look for a $3 \times 3$ matrix-valued function  $P^{(-1)}(z)$ satisfying
\begin{enumerate}
\item[\rm (1)]  $P^{(-1)}(z)$ is defined and analytic in $D(-1,\delta)\setminus \Sigma_S$.

\item[\rm (2)]  $P^{(-1)}(z)$ satisfies the jump condition
\begin{equation}\label{eq:Pneg1-jump}
P^{(-1)}_+(z)=P^{(-1)}_-(z)J_S(z), \qquad z\in \Sigma_S \cap D(-1,\delta),
\end{equation}
where $J_S(z)$ is given in \eqref{def:JS}.

\item[\rm (3)]
As $ s \to +\infty$, we have the matching condition
\begin{equation}\label{eq:MatchingCondPneg1}
P^{(-1)}(z)=\left(I+\Boh(s^{-\frac43}) \right)N(z),\qquad z\in \partial D(-1,\delta).
\end{equation}
\end{enumerate}
\end{rhp}

Analogously to the construction of $P^{(1)}(z)$ presented in the previous section, we first introduce the following transformation to remove the $(3,2)$ entry of $J_S(z)$ in \eqref{def:JS}:
\begin{equation} \label{P-1:P-1-tilde}
P^{(-1)}(z) = \widehat{P}^{(-1)}(z)  \begin{cases}  \begin{pmatrix} 1& 0 &0 \\ 0 & 1& 0 \\ 0& e^{\theta_3(sz)-\theta_2(sz)} &1 \end{pmatrix}, & z \in \texttt{II}\cap D(-1,\delta) \setminus \Omega_{-1,+}, \\
\begin{pmatrix} 1& 0 &0 \\ 0 & 1&0 \\ 0& e^{\theta_3(sz)-\theta_1(sz)} &1 \end{pmatrix}, & z \in \texttt{V}\cap D(-1,\delta) \setminus \Omega_{-1,-}, \\
I, & \textrm{otherwise}.
\end{cases}
\end{equation}
Then, $\widehat{P}^{(-1)}(z)$ satisfies an RH problem similar to that for $P^{(-1)}$, but with the jump condition \eqref{eq:Pneg1-jump} replaced by
\begin{equation}\label{eq:hatPneg1-jump}
\widehat{P}^{(-1)}_+(z)=\widehat{P}^{(-1)}_-(z)J^{(-1)}(z), \qquad z\in \Sigma_S \cap D(-1,\delta),
\end{equation}
where
\begin{equation}\label{def:JP-1}
J^{(-1)}(z):=\left\{
 \begin{array}{ll}
          \begin{pmatrix} 1 & 0 &0 \\ 0 & 1 &0 \\ e^{\theta_3(sz)-\theta_1(sz)} & 0 & 1 \end{pmatrix}, & \qquad \hbox{$z\in \Sigma_2^{(1)}\cap D(-1,\delta)$,} \\
          \begin{pmatrix} 0&0&1 \\ 0 & 1 & 0\\ -1&0&0 \end{pmatrix},  & \qquad  \hbox{$z\in \Sigma_3^{(1)}\cap D(-1,\delta)$,} \\
          \begin{pmatrix} 1&0&0 \\ 0 &1&0\\ e^{\theta_3(sz)-\theta_2(sz)}&0&1 \end{pmatrix}, & \qquad  \hbox{$z\in \Sigma_4^{(1)}\cap D(-1,\delta)$,} \\
           \begin{pmatrix}
1&0&0
\\
0 & 1 & 0
\\
\frac{ e^{\theta_3(sz)-\theta_2(sz)}}{1-\gamma}&0&1  \end{pmatrix}, & \qquad  \hbox{$z\in \partial \Omega_{-1,-}\cap D(-1,\delta)$,} \\
\begin{pmatrix} 1&0&0
\\ 0 &1&0 \\ \frac{e^{\theta_3(sz)-\theta_1(sz)}}{1-\gamma}&0&1 \end{pmatrix}, & \qquad  \hbox{$z\in \partial \Omega_{-1,+}\cap D(-1,\delta)$,} \\
    \begin{pmatrix} 0& 0 & 1-\gamma\\ 0&1&0 \\\frac{1}{\gamma-1} &0&0 \end{pmatrix}, & \qquad  \hbox{$z\in  (-1,-1+\delta)$.}
        \end{array}
      \right.
 \end{equation}

Again, one can construct $\widehat P^{(-1)}(z)$ by using the confluent hypergeometric parametrix $\Phi^{(\CHF)}$. More precisely, we define
\begin{equation}\label{def:tildef}
\widetilde f (z)=-i s^{-\frac43}\left\{
                   \begin{array}{ll}
                     \theta_3(sz)-\theta_1(sz)-( \theta_{3,+}(-s)-\theta_{1,-}(-s)), & ~~~\hbox{$\Im z>0$,} \\
                     \theta_2(sz)-\theta_3(sz)-( \theta_{2,-}(-s)-\theta_{3,-}(-s)), & ~~~\hbox{$\Im z<0$.}
                   \end{array}
                 \right.
\end{equation}
In view of the relations \eqref{eq:thetarelations}, it is easily seen that $\widetilde f (z)$ is analytic in $D(-1,\delta)$ with
$$\widetilde f(-1)=0,\qquad \widetilde f'(-1)=\sqrt{3}-\frac{\sqrt{3}}{3}\rho s^{-\frac23},$$
which is actually a local conformal mapping near $z=-1$ for large positive $s$. We then set, for $z\in D(-1,\delta)\setminus \Sigma_S$,
\begin{align}\label{eq:P-1Solution}
& \widehat{P}^{(-1)}(z)
\nonumber
\\
&=E_{-1}(z)
%\diag \left( \Phi^{(\CHF)}(s^{\frac43}f(z);\beta)e^{-\frac\beta2 \pi i\sigma_3},1 \right)
\begin{pmatrix} (\Phi^{(\CHF)}(s^{\frac43}\widetilde f(z); \beta))_{11} & 0 & (\Phi^{(\CHF)}(s^{\frac43}\widetilde f(z); \beta))_{12}
 \\0&1&0
\\(\Phi^{(\CHF)}(s^{\frac43}\widetilde f(z); \beta))_{21} & 0 & (\Phi^{(\CHF)}(s^{\frac43}\widetilde f(z); \beta))_{22}
\end{pmatrix}
\nonumber
\\
&~~~\times
\diag\left(e^{-\frac{\beta \pi i}{2}}, 1, e^{\frac{\beta \pi i}{2}}\right)
\left\{
\begin{array}{ll}
\diag\left(e^{\frac{\theta_3(sz)-\theta_1(sz)}{2}},1,e^{\frac{\theta_1(sz)-\theta_3(sz)}{2}}\right), & \quad  \Im z>0,\\
\diag\left(e^{\frac{\theta_3(sz)-\theta_2(sz)}{2}},1,e^{\frac{\theta_2(sz)-\theta_3(sz)}{2}}\right), & \quad  \Im z<0,
\end{array}
\right.
\end{align}
where $\beta$ is given in \eqref{def:alpha}, $\widetilde f(z)$ is defined in \eqref{def:tildef}, and
\begin{align}\label{def:E-1}
&E_{-1}(z)=N(z)
\\
&\times \begin{cases}
\diag\left(e^{\frac{\theta_{1,+}(-s)-\theta_{3,+}(-s)+\beta \pi i}{2}}s^{\frac{4\beta}{3}}\widetilde f(z)^{\beta},1,e^{\frac{\theta_{3,+}(-s)-\theta_{1,+}(-s)-\beta \pi i}{2}}s^{-\frac{4\beta}{3}}\widetilde f(z)^{-\beta}\right),  \\
 \hspace{13cm} \hbox{$\Im z>0$}, \\
\begin{pmatrix}
0 & 0 & 1
\\
0 & 1 & 0
\\
-1 & 0 & 0
\end{pmatrix}
\diag\left(e^{\frac{\theta_{3,+}(-s)-\theta_{2,+}(-s)+\beta \pi i}{2}}s^{\frac{4\beta}{3}}\widetilde f(z)^{\beta},1,e^{\frac{\theta_{2,+}(-s)-\theta_{3,+}(-s)-\beta \pi i}{2}}s^{-\frac{4\beta}{3}}\widetilde f(z)^{-\beta}\right),  \\  \hspace{13cm}  \hbox{$\Im z<0$},
\end{cases} \nonumber
\end{align}
with $N(z)$ given in \eqref{def:N}. As in the proof of Proposition \ref{prop:P1}, it is straightforward to show that $E_{-1}(z)$ is analytic in $D(-1,\delta)$, which leads to the following proposition.
\begin{proposition}\label{prop:P-1}
The local parametrix ${P}^{(-1)}(z)$  defined in \eqref{P-1:P-1-tilde} and \eqref{eq:P-1Solution} solves the RH problem \ref{rhp:Pnegative1} for large positive $s$.
\end{proposition}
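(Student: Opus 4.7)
The proof follows the same pattern as that of Proposition \ref{prop:P1}, with additional care required due to the more intricate piecewise definition of $\widetilde{f}$ and $E_{-1}$ across the real axis near $z=-1$. The plan is first to reduce the claim to showing that the prefactor $E_{-1}(z)$ in \eqref{def:E-1} is analytic on the full disc $D(-1,\delta)$; then to verify this analyticity using the jumps of $N$ together with the branch behaviour of $\widetilde{f}^{\beta}$; and finally to check the matching condition on $\partial D(-1,\delta)$ from the large-argument asymptotics of $\Phi^{(\CHF)}$.

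First I would verify that \eqref{eq:Pneg1-jump} together with \eqref{def:JP-1} is satisfied on each arc of $\Sigma_S\cap D(-1,\delta)$ by $\widehat{P}^{(-1)}$, provided $E_{-1}$ is analytic there. On the rays $\Sigma_2^{(1)}$, $\Sigma_4^{(1)}$ and the lens boundaries $\partial\Omega_{-1,\pm}$ this is a direct consequence of the jump relations \eqref{HJumps} for $\Phi^{(\CHF)}$ combined with the choice \eqref{def:alpha} of $\beta$ and the conjugation by the diagonal factor $\diag(e^{-\beta\pi i/2},1,e^{\beta\pi i/2})$; on $(-1,-1+\delta)$ and on $\Sigma_3^{(1)}\cap D(-1,\delta)$ the $(1-\gamma)$-type swap jumps of $\Phi^{(\CHF)}$ reproduce those in \eqref{def:JP-1} after accounting for the permutation matrix inserted in \eqref{def:E-1} when $\Im z<0$.

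The main obstacle, and technical heart of the argument, is the analyticity of $E_{-1}(z)$ across $(-1-\delta,-1)$ and across $(-1,-1+\delta)$. Across $(-\infty,-1)$ the matrix $N$ swaps its first and third columns with a sign (see \eqref{def:JN}), and the relations \eqref{eq:thetarelations} ensure that the quantity $\theta_3(sz)-\theta_1(sz)$ used for $\Im z>0$ matches $\theta_2(sz)-\theta_3(sz)$ from below, modulo the constants that were subtracted in \eqref{def:tildef}; the anti-diagonal permutation inserted in \eqref{def:E-1} for $\Im z<0$ is precisely what absorbs this jump. Across $(-1,-1+\delta)$ the matrix $N$ has the jump $\begin{pmatrix}0&0&1-\gamma\\ 0&1&0\\ (\gamma-1)^{-1}&0&0\end{pmatrix}$; here I would use that $\widetilde{f}_+(x)^{\beta}=\widetilde{f}_-(x)^{\beta}e^{2\beta\pi i}$ on this interval (in parallel with the derivation for $E_1$) together with $e^{2\beta\pi i}=(1-\gamma)^{-1}$ from \eqref{def:alpha}, so that the product of $N$ with $\diag\bigl(s^{4\beta/3}\widetilde{f}(z)^{\beta},1,s^{-4\beta/3}\widetilde{f}(z)^{-\beta}\bigr)$ has no jump there. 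These arguments show that $E_{-1}$ extends analytically to $D(-1,\delta)\setminus\{-1\}$; a local expansion of $N$ and of $\widetilde{f}^{\beta}$ at $z=-1$, entirely analogous to \eqref{eq:CNdiag}--\eqref{eq:dk-1}, gives $E_{-1}(z)=\Boh(1)$ as $z\to -1$, so the isolated singularity at $-1$ is removable.

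Finally, the matching condition \eqref{eq:MatchingCondPneg1} follows by substituting the large-$z$ expansion of $\Phi^{(\CHF)}$ from \eqref{H at infinity} into \eqref{eq:P-1Solution}: the scalar exponential factors combine with $E_{-1}$ to reconstruct $N(z)$ to leading order, while the subleading contribution is of order $\Boh\bigl((s^{4/3}\widetilde{f}(z))^{-1}\bigr)=\Boh(s^{-4/3})$ uniformly for $z\in\partial D(-1,\delta)$. A useful shortcut is to invoke the symmetry relation \eqref{eq: Xz+-} for $\Phi$, which propagates through the transformations $\Phi\to T\to S$ and relates $P^{(-1)}(z)$ directly to $P^{(1)}(-z)$ via the conjugation by $\diag(1,-1,1)$ and the permutation $\begin{pmatrix}-1&0&0\\ 0&0&1\\ 0&1&0\end{pmatrix}$; I would use this either as an independent check of the direct verification, or to shorten the bookkeeping on the lower half plane piece of the construction.
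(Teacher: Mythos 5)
Your overall strategy is exactly what the paper's one-line proof (``as in the proof of Proposition \ref{prop:P1}, it is straightforward\ldots'') invites: reduce the claim to analyticity of the prefactor $E_{-1}$ on $D(-1,\delta)$, check it across the two real arcs, show the singularity at $-1$ is removable, and obtain the matching condition from \eqref{H at infinity}. Your observation that the symmetry \eqref{eq: Xz+-} relates $P^{(-1)}(z)$ to $P^{(1)}(-z)$ and can be used to bypass the direct verification once Proposition \ref{prop:P1} is established is a genuinely useful shortcut that the paper does not spell out.

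That said, the analyticity check as you have written it for the arc $(-1,-1+\delta)$ would fail, for two reasons. First, \eqref{def:alpha} gives $e^{2\beta\pi i}=1-\gamma$, not $(1-\gamma)^{-1}$. Second, and more substantively, with the principal branch of the power (which is what is used in the proof of Proposition \ref{prop:P1}, where the jump $f_+^{\beta}=f_-^{\beta}e^{2\beta\pi i}$ is observed on $(1-\delta,1)$, precisely where $f<0$), the cut of $\widetilde f(z)^{\beta}$ lies on $(-1-\delta,-1)$: since $\widetilde f(-1)=0$ and $\widetilde f'(-1)>0$ for large $s$, $\widetilde f$ is negative to the left of $-1$ and positive to the right. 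So $\widetilde f_+^{\beta}=\widetilde f_-^{\beta}$ on $(-1,-1+\delta)$ and the cancellation you propose there cannot occur. This is a real structural asymmetry relative to $E_1$: near $1$ the $f^{\beta}$ cut and the $(1-\gamma)$-type jump of $N$ both sit on $(1-\delta,1)$ and cancel on the spot, whereas near $-1$ the $(1-\gamma)$-jump of $N$ (on $(-1,0)$) and the $\widetilde f^{\beta}$ cut (on $(-\infty,-1)$) lie on opposite sides of the point. The device that closes the books is that the scalar constants appearing in \eqref{def:E-1} differ between the upper and lower half-planes (compare the single constant $c_1(s)$ in \eqref{def:E1}), and the ratio of those constants must be tracked together with the permutation, the jumps $J_N$, the branch jump of $\widetilde f^{\beta}$, and the relations \eqref{eq:thetarelations}; the checks across $(-1-\delta,-1)$ and $(-1,-1+\delta)$ each invoke $1-\gamma=e^{2\beta\pi i}$ but in different places. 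This is exactly the bookkeeping that makes the symmetry route you mention the cleaner way to write the proof.
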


\subsection{Local parametrix near the origin}

The local parametrix near the origin reads as follows.

\begin{rhp}\label{rhp:P-0}
We look for a $3\times 3$ matrix-valued function $P^{(0)}(z)$ satisfying
\begin{enumerate}
\item[\rm (1)]  $P^{(0)}(z)$ is defined and analytic in $D(0,\delta)\setminus \Sigma_S$.

\item[\rm (2)]  $P^{(0)}(z)$ satisfies the jump condition
\begin{equation}\label{eq:P0-jump}
P^{(0)}_+(z)=P^{(0)}_-(z)J_S(z), \qquad z\in D(0,\delta) \cap \Sigma_S,
\end{equation}
where $J_S(z)$ is defined in \eqref{def:JS}.
%\begin{equation}\label{def:JP0}
%J^{(0)}(z):=\left\{
% \begin{array}{ll}
%
%            \begin{pmatrix} 1&0&0\\ \frac{ e^{\theta_1(sz)-\theta_2(sz)}}{1-\gamma}&1&-e^{\theta_1(sz)-\theta_3(sz)}  \\ 0&0&1 \end{pmatrix}, & \qquad  \hbox{$z\in \Sigma_6$,} \\
%    \begin{pmatrix} 1&0&0\\ \frac{e^{\theta_2(sz)-\theta_1(sz)}}{1-\gamma} &1&e^{\theta_2(sz)-\theta_3(sz)}  \\ 0&0&1 \end{pmatrix}, & \qquad  \hbox{$z\in \Sigma_7$,} \\
%    \begin{pmatrix} 1&0&0\\0&1&0 \\  \frac{e^{\theta_{3}(sz)-\theta_{1}(sz)}}{1-\gamma} &e^{\theta_3(sz)-\theta_2(zs)}&1 \end{pmatrix},& \qquad  \hbox{$z\in \Sigma_8$,} \\
%    \begin{pmatrix} 1&0&0\\0&1&0 \\ \frac{e^{\theta_{3}(sz)-\theta_{2}(sz)}}{1-\gamma}&-e^{\theta_3(sz)-\theta_1(sz)}&1 \end{pmatrix}& \qquad  \hbox{$z\in \Sigma_9$,} \\
% \begin{pmatrix} 0&1-\gamma&0\\ \frac{1}{\gamma-1}&0&0 \\ 0&0&1 \end{pmatrix}, & \qquad  \hbox{$z\in  (0,\delta)$,}\\
%           \begin{pmatrix} 0&0&1-\gamma\\ 0&1&0 \\ \frac{1}{\gamma-1}&0& 0\end{pmatrix}, & \qquad  \hbox{$z\in (-\delta,0)$,}
%        \end{array}
%      \right.
% \end{equation}
%where $z\in D(0,\delta)$ and $\arg z\in (-\pi, \pi)$.
\item[\rm (3)]As $s\to \infty$, we have the matching condition
\begin{equation}\label{eq:MatchingCond0}
P^{(0)}(z)=\left(I+\Boh(s^{-\frac23}) \right) N(z), \qquad  z\in \partial D(0,\delta).
\end{equation}
\end{enumerate}
\end{rhp}

This local parametrix can be constructed in terms of the solution $\Psi(z)$ of the RH problem \ref{rhp: Pearcey}, i.e., the Pearcey parametrix, in the following way:
 \begin{equation}\label{eq:P0Solution}
P^{(0)}(z)=E_0(z)\Psi(sz)e^{-\Theta(sz)} \diag\left((1-\gamma)^{-\frac23},(1-\gamma)^{\frac13},(1-\gamma)^{\frac13}\right),
\end{equation}
where $\Theta(z)$ is defined in \eqref{def:Theta} and
\begin{multline}\label{eq:E0}
E_0(z)= -\sqrt{\frac{3}{2\pi}}e^{-\frac{\rho^2}{6}}i N(z) \diag\left((1-\gamma)^{\frac23},(1-\gamma)^{-\frac13},(1-\gamma)^{-\frac13}\right)
 \\ \times L_{\pm}^{-1} \diag((sz)^{\frac13}, 1,(sz)^{-\frac13}) \Psi_0^{-1},
\end{multline}
for $\pm \Im z>0$, with $L_{\pm}$ and $\Psi_0$ given in \eqref{def:Lpm} and \eqref{asyPsi:coeff}, respectively.

\begin{proposition}\label{prop:P0}
The local parametrix $P^{(0)}(z)$  defined in \eqref{eq:P0Solution} solves the RH problem \ref{rhp:P-0}.
\end{proposition}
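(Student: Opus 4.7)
The verification comprises three points: (i) analyticity of the prefactor $E_0(z)$ inside $D(0,\delta)$; (ii) the jump condition \eqref{eq:P0-jump}; and (iii) the matching \eqref{eq:MatchingCond0} on $\partial D(0,\delta)$. My plan is to treat these in order, noting that (i) is the delicate step and the remainder is largely computational.

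For (i), the piecewise definition of $E_0$ in \eqref{eq:E0} for $\pm\Im z>0$ is potentially discontinuous on the real segments $(-\delta,0)\cup(0,\delta)$. Substituting the explicit form \eqref{def:N} of $N(z)$, I will show that $E_{0,+}(x)=E_{0,-}(x)$ on $(0,\delta)$ reduces to the matrix identity
\[
\diag\left((1-\gamma)^{\frac{2}{3}},(1-\gamma)^{-\frac{1}{3}},(1-\gamma)^{-\frac{1}{3}}\right) L_-^{-1}L_+ \diag\left((1-\gamma)^{-\frac{2}{3}},(1-\gamma)^{\frac{1}{3}},(1-\gamma)^{\frac{1}{3}}\right) = J_N(x),
\]
which one verifies directly using $L_+^{-1}L_-=\begin{pmatrix}0&-1&0\\ 1&0&0\\ 0&0&1\end{pmatrix}$, a computation essentially carried out in the proof of Proposition~\ref{prop:dk}; the case $(-\delta,0)$ is analogous and further invokes the branch jump of $(sz)^{\pm 1/3}$ together with the $(1,3)$-type jump of $J_N$. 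At $z=0$, the only singular factor inside $E_0$ is $\diag(z^{-\frac{1}{3}},1,z^{\frac{1}{3}})$ hidden in $N(z)$ via \eqref{def:N}, and it is exactly cancelled by the factor $\diag((sz)^{\frac{1}{3}},1,(sz)^{-\frac{1}{3}})$ standing to its right in \eqref{eq:E0}; the scalar factors $d_k$ are bounded near $0$, so the singularity is removable.

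With (i) in hand, the jumps of $P^{(0)}$ on $\Sigma_S\cap D(0,\delta)$ come entirely from the factor $\Psi(sz)e^{-\Theta(sz)}$, since $E_0$ is analytic there and the rightmost diagonal factor in \eqref{eq:P0Solution} is constant. The portions of $\Sigma_S$ inside $D(0,\delta)$ consist of the two real segments together with parts of the lens boundaries $\partial\Omega_{\pm 1,\pm}$; the latter are free to be deformed locally to coincide with the four oblique rays $\Sigma_j$ of the Pearcey parametrix. On each such ray, $\Psi_-^{-1}\Psi_+=J_\Psi$ is given by \eqref{def:JPsi}, and after conjugation by the constant diagonal $\diag((1-\gamma)^{-\frac{2}{3}},(1-\gamma)^{\frac{1}{3}},(1-\gamma)^{\frac{1}{3}})$ it reproduces the corresponding entry of $J_S$ in \eqref{def:JS}; on the real segments the identities \eqref{eq:thetarelations} are invoked, in complete analogy with the derivation of $J_T$ on $(-1,0)$ in Proposition~\ref{rhp:T}. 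For (iii), I substitute the large-argument expansion \eqref{eq:asyPsi} evaluated at $sz$: the identity-level term, combined with $E_0(z)$ and the trailing diagonal factor, collapses to precisely $N(z)$ (using $\Psi_0^{-1}\Psi_0=I$, $L_\pm^{-1}L_\pm=I$, and the mutual cancellation of the two $(1-\gamma)$-diagonals), while the first subleading term produces a contribution of the form
\[
N(z)\cdot(\text{bounded})\cdot\frac{1}{sz}\diag\left((sz)^{\frac{1}{3}},1,(sz)^{-\frac{1}{3}}\right)\Psi_1\diag\left((sz)^{-\frac{1}{3}},1,(sz)^{\frac{1}{3}}\right)\cdot(\text{bounded});
\]
by the sparse pattern of $\Psi_1$ in \eqref{asyPsi:coeff} its four nonzero entries carry powers $(sz)^{\pm \frac{1}{3}}$, so this correction is of order $\Boh((sz)^{-\frac{2}{3}})=\Boh(s^{-\frac{2}{3}})$ uniformly for $z\in\partial D(0,\delta)$, which gives \eqref{eq:MatchingCond0}.

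The principal obstacle will be executing step (i) cleanly: one must carefully track the cube-root branch cuts, the piecewise $L_\pm$-prefactors, the various $(1-\gamma)$-powers, and the jumps of the scalar factors $d_k$ from Proposition~\ref{prop:dk} across both real segments of $(-\delta,\delta)$. Once the analyticity of $E_0$ is secured, steps (ii) and (iii) reduce to direct algebraic verification using the jumps of $\Psi$ in \eqref{def:JPsi} and its large-$z$ expansion \eqref{eq:asyPsi}.
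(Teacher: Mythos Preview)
Your approach is essentially the same as the paper's: verify analyticity of $E_0$ on $D(0,\delta)$, deduce the jump condition from that, and read off the matching order from the large-argument expansion of $\Psi$. Your remark about deforming the lens boundaries near the origin to coincide with the Pearcey rays is a point the paper leaves implicit but is indeed needed.

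There is, however, a small gap in your removability argument at $z=0$. You claim the factor $\diag(z^{-\frac13},1,z^{\frac13})$ coming from $N(z)$ is ``exactly cancelled'' by $\diag((sz)^{\frac13},1,(sz)^{-\frac13})$. It is not: between these two diagonals sits $L_\pm\,\diag\bigl(d_1(1-\gamma)^{\frac23},\,d_2(1-\gamma)^{-\frac13},\,d_3(1-\gamma)^{-\frac13}\bigr)\,L_\pm^{-1}$, which is a full matrix, so the two diagonal factors do not commute past it. The correct estimate is that the $(i,j)$ entry of $E_0$ picks up a factor $z^{(i-j)/3}$, hence $E_0(z)=\Boh(z^{-\frac23})$ as $z\to 0$. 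Since $E_0$ is analytic on the punctured disk (which you have argued), a singularity with growth $\Boh(z^{-\frac23})$ is too weak to be a pole, and is therefore removable. This is the argument the paper actually uses; your conclusion is right, but the stated reason (exact cancellation) would only work if the intervening factor were scalar.
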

\begin{proof}
It is straightforward to check the matching condition \eqref{eq:MatchingCond0}. Indeed, for $z\in \partial D(0,\delta)$, by inserting \eqref{eq:asyPsi} into \eqref{eq:P0Solution}, it follows that
\begin{equation}\label{eq:MatchingCondExpand}
P^{(0)}(z)N(z)^{-1}=\widehat{E}_0(z)\left(I+\frac{\widehat{\Psi}_1}{z s^{\frac23}}+ \Boh(s^{-\frac43}) \right)\widehat{E}_0(z)^{-1},
\end{equation}
where
\begin{equation}\label{eq:hatE0}
\widehat{E}_0(z)=N(z) \diag\{(1-\gamma)^{\frac23},(1-\gamma)^{-\frac13},(1-\gamma)^{-\frac13}\}L_{\pm}^{-1} \diag(z^{\frac13}, 1,z^{-\frac13}), \qquad \pm \Im z>0, %=C_0C_0^{t}(I+\Boh(z)),
\end{equation}
and
\begin{equation}\label{eq:hatPsi1}
\widehat{\Psi}_1= \begin{pmatrix}
0 & \kappa_3(\rho) & 0 \\
0 & 0 & \kappa_3(\rho) + \frac{\rho}{3} \\
0 &0  & 0
\end{pmatrix},
\end{equation}
with  $\kappa_3(\rho)$ defined in \eqref{poly:kappa3}. Since $\widehat{E}_0(z)$ is independent of $s$, the formula \eqref{eq:MatchingCondExpand} gives us \eqref{eq:MatchingCond0}.

To see the jump condition \eqref{eq:P0-jump}, it suffices to show that $E_0(z)$ is analytic in $D(0,\delta)$. To that end, we observe from \eqref{eq:E0} and \eqref{eq:N-jump} that, if $z>0$,
\begin{align*}
&E_{0,+}(z)E_{0,-}(z)^{-1}
\\
&=L_- \diag\left((1-\gamma)^{-\frac23},(1-\gamma)^{\frac13},(1-\gamma)^{\frac13}\right)
\begin{pmatrix}
0 & 1-\gamma & 0
\\
\frac{1}{\gamma-1} & 0 & 0
\\
0 & 0 & 1
\end{pmatrix}
\\
&\quad \times \diag\left((1-\gamma)^{\frac23},(1-\gamma)^{-\frac13},(1-\gamma)^{-\frac13}\right)L_+^{-1}
=L_-\begin{pmatrix}
0 & 1 & 0
\\
-1 & 0 & 0
\\
0 & 0 & 1
\end{pmatrix}L_+^{-1}=I,
\end{align*}
while for $z<0$,
\begin{align*}
&E_{0,+}(z)E_{0,-}(z)^{-1}
\\
&=\diag\left((sz)_-^{-\frac13}, 1,(sz)_-^{\frac13}\right)L_- \diag\left((1-\gamma)^{-\frac23},(1-\gamma)^{\frac13},(1-\gamma)^{\frac13}\right)
\begin{pmatrix}
0 & 0 & 1-\gamma
\\
0 & 1 & 0
\\
\frac{1}{\gamma-1} & 0 & 0
\end{pmatrix}
\\
&\quad \times \diag\left((1-\gamma)^{\frac23},(1-\gamma)^{-\frac13},(1-\gamma)^{-\frac13}\right)L_+^{-1}\diag\left((sz)_+^{\frac13}, 1,(sz)_+^{-\frac13}\right)
\\
&=\diag\left((sz)_-^{-\frac13}, 1,(sz)_-^{\frac13}\right)L_-
\begin{pmatrix}
0 & 0 & 1
\\
0 & 1 & 0
\\
-1 & 0 & 0
\end{pmatrix}
L_+^{-1}\diag\left((sz)_+^{\frac13}, 1,(sz)_+^{-\frac13}\right)
\\
&=\diag\left((sz)_-^{-\frac13}, 1,(sz)_-^{\frac13}\right)
\begin{pmatrix}
\omega^2 & 0 & 0
\\
0 & 1 & 0
\\
0 & 0 & \omega
\end{pmatrix}
\diag\left((sz)_+^{\frac13}, 1,(sz)_+^{-\frac13}\right)=I.
\end{align*}
Hence, $E_0(z)$ is analytic in the punctured disk $D(0,\delta)\setminus\{0\}$. In view of \eqref{def:N} and \eqref{eq:E0}, it easily seen that $E_0(z)=\Boh(z^{-\frac23})$, which implies that the singularity of $E_0(z)$ at the origin is removable, as expected.

This completes the proof of Proposition \ref{prop:P0}.
\end{proof}

%Here the function $E_0(z)$ is analytic near the origin and is chosen such that the matching condition \eqref{eq:MatchingCond0} is fulfilled.
%For later use, we derive the asymptotics as  $z\to0$ and $\arg z\in(0,\pi)$
%\begin{equation}\label{eq:dk-0}
%\diag(d_1(z),d_2(z),d_3(z))=\diag(e^{-\frac{4}{3}\beta\pi i}, e^{\frac{2}{3}\beta\pi i}, e^{\frac{2}{3}\beta\pi i})(I+\beta(\omega^2-\omega)\diag(\omega,\omega^2,1)z^{\frac23}+\Boh(z^{\frac43})).
%\end{equation}
%Moreover, it follows from \eqref{eq:dk-0} and \eqref{eq:E0} that
%\begin{equation}\label{eq:E0-expand}
%E_0(0)=C_NC_N^{T}\diag(s^{\frac13},1, s^{-\frac13})\Psi_0^{-1} (\sqrt{\frac{2\pi}{3}}i e^{\frac{\rho^2}{6}})^{-1},
%\end{equation}
%where $C_N$  is given in \eqref{eq:CN}.

\subsection{Final transformation}

The final transformation is defined by
\begin{equation}\label{def:StoR}
 R(z)=\left\{
                \begin{array}{ll}
 S(z)N(z)^{-1}, &\qquad  \hbox{$z\in \mathbb{C}\setminus \{D(-1,\delta) \cup D(0,\delta)\cup D(1,\delta)\cup  \Sigma_S\}$,} \\
        S(z) P^{(-1)}(z)^{-1},        &\qquad  \hbox{$z \in D(-1,\delta)$,} \\
        S(z)P^{(0)}(z)^{-1},  &\qquad  \hbox{$z \in D(0,\delta)$,}\\
        S(z)P^{(1)}(z)^{-1},  &\qquad  \hbox{$z \in D(1,\delta)$.}
                \end{array}
              \right.
\end{equation}
It is then easily seen that $R$ satisfies the following RH problem.
\begin{rhp}\label{rhp:R}
The $3\times 3$ matrix-valued function $R(z)$ defined in \eqref{def:StoR} has the following properties:
\begin{enumerate}
\item[\rm (1)]  $R(z)$ is analytic in $\mathbb{C} \setminus \Sigma_{R}$,
where
\begin{equation}
\Sigma_R:=\Sigma_S\cup \partial D(-1,\delta) \cup \partial D(0,\delta) \cup \partial D(1,\delta) \setminus \{\mathbb{R} \cup D(-1,\delta)
\cup D(0,\delta) \cup D(1,\delta) \};
\end{equation}
see Figure \ref{fig:ContourR} for an illustration.

\item[\rm (2)]  $R(z)$ satisfies the jump condition  $$ R_+(z)=R_-(z)J_R(z), \qquad z\in \Sigma_R,$$
where
  \begin{equation}\label{def:JR}
                     J_{R}(z)=\left\{
                                      \begin{array}{ll}
                                        P^{(-1)}(z) N(z)^{-1}, & \hbox{ $z \in \partial D(-1, \delta)$,} \\
                                        P^{(0)}(z) N(z)^{-1}, & \hbox{ $z \in \partial D(0, \delta)$,} \\
                                        P^{(1)}(z) N(z)^{-1}, & \hbox{ $z \in \partial D(1, \delta)$,} \\
                                        N(z) J_S(z) N(z)^{-1}, & \hbox{ $ z \in \Sigma_{R} \setminus \{ D(-1,\delta)
\cup D(0,\delta) \cup D(1,\delta) \}$.}
                                      \end{array}
\right.
  \end{equation}
\item[\rm (3)] As $z \to \infty$, we have
$$R(z)=I+\Boh(z^{-1}).$$
\end{enumerate}
\end{rhp}

\begin{figure}[h]
\centering
\includegraphics[width=9.2cm]{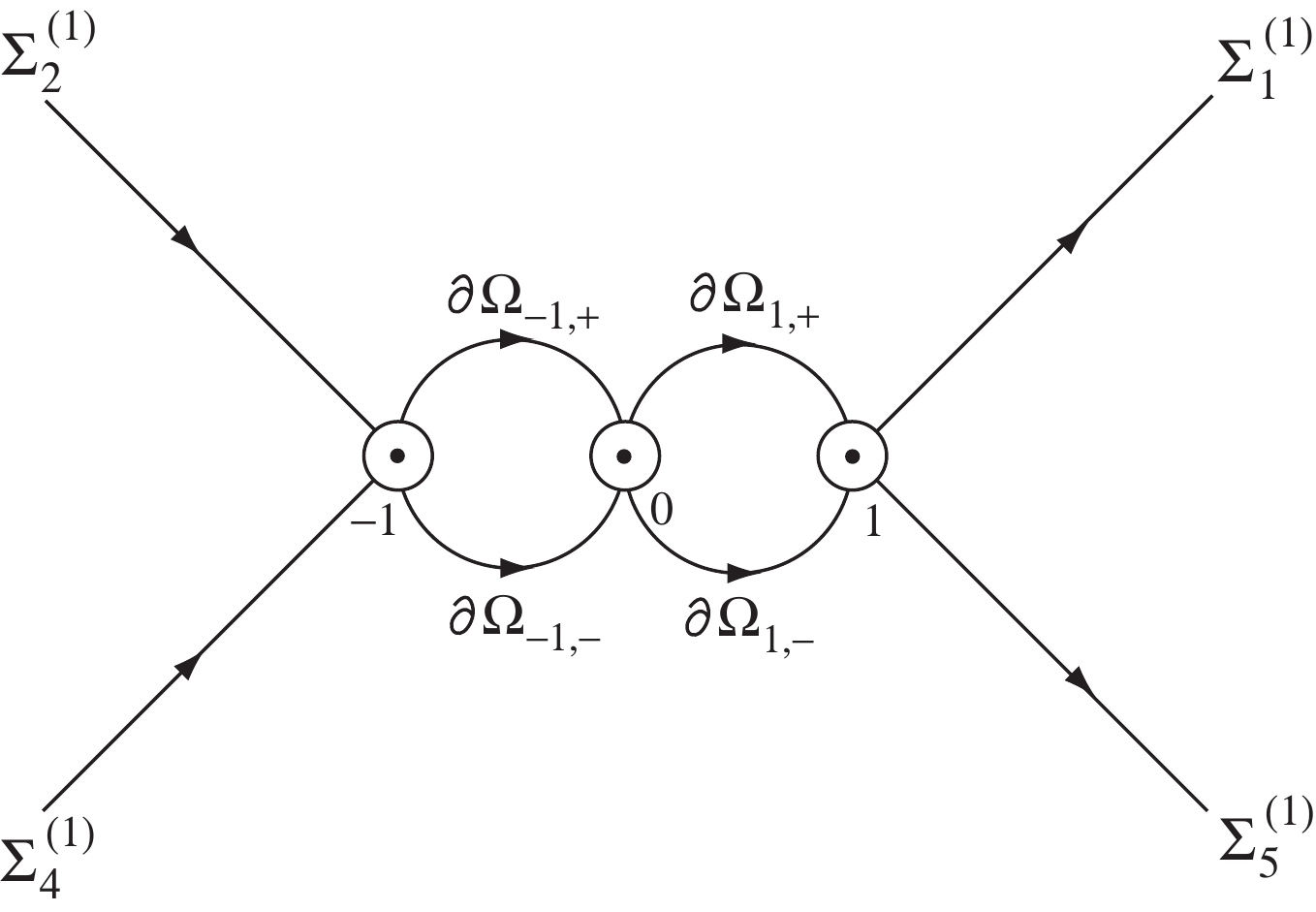}
\caption{Contour  $\Sigma_R$ for the RH problem \ref{rhp:R} for $R$. }
 \label{fig:ContourR}
\end{figure}

For $z\in \Sigma_{R} \setminus \{ D(-1,\delta)
\cup D(0,\delta) \cup D(1,\delta) \} $, it is readily seen from \eqref{def:JR}, \eqref{def:N} and \eqref{def:JS} that there exists some constant $c>0$ such that
\begin{equation}\label{eq:JRest1}
J_R(z)=I+O\left(e^{-c s^{\frac43}}\right), \qquad  s\to+\infty,
\end{equation}
uniformly valid for $\gamma$ and $\rho$ in any compact subset of $[0,1)$ and $\mathbb{R}$, respectively. For $z\in\partial D(0,\delta)$, we see from \eqref{def:JR} and \eqref{eq:MatchingCondExpand} that
\begin{equation}\label{eq:R-Jump-est}
J_{R}(z)=I+\frac{J_1(z)}{s^{\frac23}}+\Boh(s^{-\frac43}), \qquad s\to +\infty,
\end{equation}
where
\begin{equation}\label{def:J1}
J_1(z)= \frac{1}{z} \widehat E_0(z) \widehat \Psi_1 \widehat E_0(z)^{-1},
\end{equation}
with $\widehat E_0(z)$ and $\widehat \Psi_1$ given in \eqref{eq:hatE0} and \eqref{eq:hatPsi1}. Combining \eqref{eq:JRest1} and \eqref{eq:R-Jump-est} with the matching conditions \eqref{eq:MatchingCondP1} and \eqref{eq:MatchingCondPneg1} shows that $R(z)$ admits an asymptotic expansion of the following form
\begin{equation}\label{eq:R-est}
R(z)=I+\frac{R_1(z)}{s^{\frac23}}+\Boh(s^{-\frac43}), \qquad s \to +\infty;
\end{equation}
see the standard analysis in \cite{DeiftBook,DZ93}.  Moreover, from the RH problem \ref{rhp:R} for $R$, one can readily verify that $R_1(z)$ satisfies the following RH problem.
\begin{rhp}
The $3\times 3$ matrix-valued function $R_1(z)$ appearing in \eqref{eq:R-est} has the following properties:
\begin{enumerate}
\item[\rm (1)]  $R_1(z)$ is analytic in $\mathbb{C}\setminus \partial D(0, \delta)$.

\item[\rm (2)]   For $z \in \partial D(0, \delta)$, we have
\begin{equation}\label{eq:R1jump}
R_{1,+}(z)-R_{1,-}(z)=J_1(z),
\end{equation}
where $J_1(z)$ is given in \eqref{def:J1}.

\item[\rm (3)] As $z \to \infty$, we have $R_1(z)=\Boh(z^{-1})$.
\end{enumerate}
\end{rhp}
On account of \eqref{eq:E0}, \eqref{eq:hatE0} and the fact that $E_0(z)$ is analytic near the origin, we have that $\widehat E_0(z)$ is analytic at $z=0$ as well. Note that $\widehat{E}_0(0)=C_NC_N^t$, where $C_N$ is given in \eqref{eq:CN}, it is readily seen from \eqref{eq:R1jump}, \eqref{def:J1} and Cauchy's residue theorem that
\begin{align}\label{eq:R1}
  R_1(z) = \frac{1}{2\pi i} \oint_{\partial D(0, \delta)} \frac{J_1(\zeta)}{\zeta-z} \ud \zeta
=\left\{
    \begin{array}{ll}
       \frac{C_N \widehat{\Psi}_1C_N^{-1}}{z} - J_1(z), &~~~ \hbox{ $z\in D(0,\delta)$,} \vspace{3pt} \\
      \frac{C_N \widehat{\Psi}_1C_N^{-1}}{z} , &~~~ \hbox{ $|z|> \delta.$}
    \end{array}
  \right.
\end{align}
Moreover, the asymptotic expansion \eqref{eq:R-est} also gives us the following asymptotics for the derivatives of $R$ with respect to the parameter $\beta$:
\begin{equation} \label{eq:R-est-beta'}
\frac{\partial^k}{\partial \beta^k} R(z) =  \frac{\partial^k}{\partial \beta^k} R_1(z) s^{-\frac23} +\Boh((\ln s)^ks^{-\frac43}), \qquad s \to +\infty,\qquad k=1,2,\ldots,
\end{equation}
uniformly for $z \in \mathbb{C} \setminus \Sigma_R$; see similar analysis in \cite[Section 3.5]{CL21}.

\section{Asymptotic analysis  of the RH problem for $\Phi(z;s)$ as  $s\to 0^{+}$}\label{sec:AsyPhi0}

The asymptotic analysis for $\Phi(z;s)$ as  $s\to 0^{+}$ is much simpler than the case when $s \to +\infty$. As $s\to 0^+$, the interval  $(-s,s)$  shrinks to the origin. Comparing the RH problem for $\Phi(z;s)$ given in Proposition \ref{rhp:X} with the RH problem \ref{rhp: Pearcey} for $\Psi(z)$, it is easy to see that $\Phi(z;s)$ can be approximated by $\Psi(z)$ when $z$ is bounded away from $(-s,s)$ as $s\to 0^+$. In another word, the global parametrix  is given by
\begin{equation}
\frac{\Psi_0^{-1}}{\sqrt{\frac{2\pi}{3}} e^{\frac{\rho^2}{6}} i}\Psi(z) \begin{cases}
J_1, &  \arg z < \frac{\pi}{4} \textrm{ and } \arg (z -s) > \frac{\pi}{4}, \\
J_2, &  \arg z > \frac{3\pi}{4} \textrm{ and } \arg (z+s) < \frac{3\pi}{4},\\
J_4^{-1}, & \arg z < - \frac{3\pi}{4} \textrm{ and } \arg (z +s) > -\frac{3\pi}{4}, \\
J_5^{-1}, &  \arg z >- \frac{\pi}{4} \textrm{ and } \arg (z -s) <- \frac{\pi}{4},
\end{cases}
\end{equation}
for $|z|>\delta$, where the constant matrix $\Psi_0$ is given in \eqref{asyPsi:coeff} and $J_k$ denotes the constant jump matrix $J_\Phi(z)$ in \eqref{def:JX} restricted on the contour $\Sigma_k^{(s)}, k=0,1,\cdots,5$.

When $z$ lies in a neighbourhood of $(-s,s)$, i.e., $z\in D(0,\delta)$, we approximate $\Phi(z;s)$ by the following explicit function:
 \begin{equation}\label{eq: localPra-small-s-0}
P^{(s)}(z)=\frac{\Psi_0^{-1}}{\sqrt{\frac{2\pi}{3}} e^{\frac{\rho^2}{6}}i} \widetilde \Psi(z) \left(I+\frac{\gamma }{2\pi i} \ln\left(\frac{ z+s}{z-s}\right) \begin{pmatrix} 0&1&1\\ 0&0&0 \\ 0&0&0 \end{pmatrix}\right)
\begin{cases}
J_1^{-1}, &   z\in \mathtt{I},  \\
I, &   z\in \mathtt{II},  \\
       J_2^{-1},        &   z\in \mathtt{III}, \\
        J_2^{-1} J_3^{-1} ,  & z\in \mathtt{IV},\\
       J_1^{-1}J_0^{-1}J_5^{-1},  &  z\in \mathtt{V},\\
               J_1^{-1}J_0^{-1}                            &  z\in \mathtt{VI},
\end{cases}
 \end{equation}
where the function $\widetilde \Psi(z)$ is defined in \eqref{eq: tilde-psi}, the regions $\mathtt{I}-\mathtt{VI}$ are indicated in Figure \ref{fig:X} and the principal branch is taken for $\ln(z\pm s)$.  It is straightforward to check that $P^{(s)}(z)$ satisfies the same jump as $\Phi(z)$ on $\Sigma_k^{(s)}$, with the aid of the following relation
\begin{equation*}
J_5 J_0 J_1 = J_4 J_3 J_2= \begin{pmatrix}
1 & 1 & 1
\\
0 & 1 & 0
\\
0 & 0 & 1
\end{pmatrix}.
\end{equation*}
For $z \in (-s,s)$, as $\ln_+(z-s) - \ln_-(z-s) = 2 \pi i$, we have
\begin{equation}
 P^{(s)}_-(x)^{-1} P^{(s)}_+(x) =
J_5 J_0 J_1 \begin{pmatrix}
1 & -\gamma & -\gamma \\
0 & 1 & 0 \\ 0 & 0 & 1
\end{pmatrix} =  \begin{pmatrix}
1 & 1-\gamma & 1-\gamma \\
0 & 1 & 0 \\ 0 & 0 & 1
\end{pmatrix},
\end{equation}
which is same as the jump of $\Phi(z)$ on $(-s,s)$.  The endpoint condition of $P^{(s)}(z)$ as $z \to s$ can also be shown to agree with that of $\Phi(z)$ in \eqref{eq:X-near-s} directly. This means $P^{(s)}(z)$ defined in \eqref{eq: localPra-small-s-0} is indeed a desired local parametrix of $\Phi(z;s)$ near the origin.

 %It is direct to see that the local parametrix satisfies the same jump condition with $\Phi(z)$ for $|z|<\delta$.

%\subsection{Global parametrix}

%\subsection{Local parametrix}

%Let
% \begin{equation}\label{eq: localPra-small-s-0}
%L_0(z) =\widetilde \Psi\left(I+\frac{\gamma }{2\pi i} \ln\left(\frac{ z+s}{z-s}\right) \begin{pmatrix} 0&1&1\\ 0&0&0 \\ 0&0&0 \end{pmatrix}\right),
% \end{equation}
%with $\widetilde \Psi$ defined in \eqref{eq: tilde-psi}.

 % L(z)J_1^{-1}, z\in \mathtt{I}; \quad L(z)=L(z)J_1^{-1}J_0^{-1}, z\in \mathtt{VI};  \quad L(z)=L(z)J_1^{-1}J_0^{-1}J_5^{-1}, z\in \mathtt{V};$$
% $$L(z)=L(z)J_2^{-1}, z\in \mathtt{III}; \quad L(z)=L(z)J_2^{-1}J_4^{-1}, z\in \mathtt{IV}.$$

%\subsection{Final transformation}

As usual, we define the final transformation as
\begin{equation}\label{def:tildeR}
 \widetilde R(z)=\left\{
                \begin{array}{ll}
\sqrt{\frac{2\pi}{3}} e^{\frac{\rho^2}{6}}i \Phi(z)\Psi(z)^{-1} \Psi_0 , &\qquad  |z|>\delta, \\
        \Phi(z) P^{(s)}(z)^{-1},       &\qquad z\in D(0,\delta).
        \end{array}
              \right.
\end{equation}
It is easily seen that $\widetilde R$ satisfies the following RH problem.
\begin{rhp}
The $3\times 3$ matrix-valued function $\widetilde R(z)$ defined in \eqref{def:tildeR} has the following properties:
\begin{enumerate}
\item[\rm (1)]  $\widetilde  R(z)$ is analytic in $\mathbb{C} \setminus D(0,\delta) $.

\item[\rm (2)]  For $z \in \partial D(0, \delta)$,  we have $$ \widetilde R_+(z)= \widetilde R_-(z) J_{ \widetilde R}(z), $$  where
\begin{equation}
J_{ \widetilde R}(z) = \sqrt{\frac{2\pi}{3}} e^{\frac{\rho^2}{6}}i P^{(s)}(z) \Psi(z)^{-1} \Psi_0.
\end{equation}

\item[\rm (3)] As $z \to \infty$, we have $\widetilde R(z)=I+\Boh(z^{-1}).$
\end{enumerate}
\end{rhp}

The only $s$-dependent term in the jump matrix $J_{ \widetilde R}(z) $ comes from the term $\ln\left(\frac{z+s}{z-s}\right)$ in \eqref{eq: localPra-small-s-0}, which behaves like $\Boh(s)$ as $s \to 0^+$.  This gives us
\begin{equation}\label{eq:tildeR-Jump-est}
J_{ \widetilde R}(z)=I+\Boh(s), \qquad z \in \partial D(0, \delta),
\end{equation}
uniformly valid for the parameters $\gamma$ and $\rho$ in any compact subset of $[0,1)$ and $\mathbb{R}$. Thus, we obtain the following estimate
\begin{equation}\label{eq:tildeR-est}
\widetilde R(z)=I+\Boh(s),\qquad s\to 0^+,
\end{equation}
uniformly for $z \in \mathbb{C} \setminus \partial D(0,\delta)$.

We have now completed asymptotic analysis of the RH problem for $\Phi$, and are ready to prove our main results in what follows.

\section{Proofs of the main results}\label{sec:proofs}

%With the above preparations, we are ready .  We first derive asymptotics of the function $p_k(s)$ and $q_k(s)$ appearing in the definition of the Hamiltonian $H(s)$ in  \eqref{pro:H}. With these asymptotic results, we establish the integral representation of $F(s;\gamma,\rho)$ in Theorem \ref{thm:TW} and derive its large gap asymptotics in Theorem \ref{thm:FAsy}.
\subsection{Proof of Theorem \ref{thm:specialsol}}

We have already proved in Proposition \ref{pro:Lax pair} that the functions $p_k(s)$ and $q_k(s)$, $k=0,1,2,3$, defined in \eqref{def: p-0} and \eqref{def: q-k} satisfy the equations \eqref{eq:SystemEq} and \eqref{eq:sum0}, and $p_0(s)$ and $q_0(s)$ satisfy the coupled differential system \eqref{eq:eqforp-0} and \eqref{eq:p0q0-1}. The existence part follows from the solvability of the associated RH problem. It then remains to show asymptotic results of $p_k$ and $q_k$, which are outcomes of asymptotic analysis of the RH problem for $\Phi$ and will be discussed next.

\subsubsection*{Asymptotics of $p_k$ and $q_k$  as $s\to+\infty$}

We first derive the asymptotics of $p_0(s)$ and $q_0(s)$. This comes from their relation with $\mathsf{\Phi}_1 $ (see \eqref{def: p-0}), which is the coefficient of $z^{-1}$-term in the large $z$ expansion of $\Phi(z)$ in \eqref{eq:asyX}. To find the asymptotics of $\mathsf{\Phi}_1 $ as $s \to +\infty$, we trace back  the transformations $\Phi \mapsto T \mapsto S\mapsto R$ in \eqref{def:PhiToT}, \eqref{def:TtoS} and \eqref{def:StoR} and obtain
\begin{equation}\label{eq:phioutside}
\Phi(sz) = \diag\left(s^{-\frac13},1, s^{\frac13}\right)R(z)N(z) e^{\Theta(sz)},\qquad z\in\mathbb{C} \setminus \{D(0,\delta) \cup D(1,\delta) \cup D(-1,\delta)\},
\end{equation}
where $\Theta(z)$ is defined in \eqref{def:Theta}. From \eqref{eq:R-est} and \eqref{eq:R1}, it follows
\begin{equation}\label{eq:Rlargez}
R(z) = I + \frac{\mathsf{R}_1 }{z} +\Boh(z^{-2}),\qquad z \to \infty,
\end{equation}
where
\begin{equation}\label{def:R1}
\mathsf{R}_1  = C_N \widehat{\Psi}_1C_N^{-1} s^{-\frac23} +\Boh(s^{-\frac43}) , \qquad s \to +\infty,
\end{equation}
with $C_N$ and $\widehat{\Psi}_1$ given in \eqref{eq:CN} and \eqref{eq:hatPsi1}, respectively. Combining \eqref{eq:asyX}, \eqref{eq:asyN}, \eqref{eq:phioutside} and  \eqref{eq:Rlargez}, we have
\begin{equation}
\mathsf{\Phi}_1 = s \diag\left(s^{-\frac13},1, s^{\frac13}\right) \left(\mathsf{R}_1  + \mathsf{N}_1  \right) \diag\left(s^{\frac13},1, s^{-\frac13}\right),
\end{equation}
where $\mathsf{N}_1$ is given in \eqref{eq:N1}. This, together with \eqref{def:R1}, implies that
\begin{equation}
(\mathsf{\Phi}_1)_{12}  = \sqrt{3} \beta i  s^{\frac23} +  \frac{\rho^3}{54} - \frac{\rho}{6} +\Boh(s^{-\frac23}), \quad (\mathsf{\Phi}_1)_{23} =  \sqrt{3} \beta i  s^{\frac23} +  \frac{\rho^3}{54} + \frac{\rho}{6} +\Boh(s^{-\frac23}).
\end{equation}
The asymptotics of $p_0(s)$ and $q_0(s)$ given in \eqref{thm:p-0-asy} and \eqref{thm:q-0-asy} then follow directly from the above formula and \eqref{def: p-0}.

%\begin{eqnarray}
%\Phi(sz) \nonumber \\
%&=&\diag\left(s^{-\frac13},1, s^{\frac13}\right)(I+\frac{C_0\hat{\Psi}_1C_0^{-1}+N_1 +\Boh(s^{-\frac43})}{z}+\Boh(1/z^2)), \label{eq:PhiExpandOut}
%\end{eqnarray}
%for $z$ large, and , respectively.  This, together with \eqref{def: p-0},  gives us
%and
	
To show the asymptotics of $p_k(s)$ and $q_k(s)$, $k=1,2,3$, we start with their definitions in \eqref{def: q-k}, in which $\Phi_0^{(0)}(s)$ appears in the local behavior of $\Phi(z)$ as $z \to s$. More precisely, we have from \eqref{eq:X-near-s} and \eqref{eq: Phi-expand-s}  that
\begin{equation} \label{Phi-00-formula}
\Phi_0^{(0)}(s) =  \lim_{z \to 1, \atop z\in \mathtt{II}} \Phi(sz)  \begin{pmatrix}
	1 & \frac{\gamma}{2\pi i} \ln(sz-s) & \frac{\gamma}{2\pi i} \ln(sz-s) \\
	0 & 1 & 0 \\
	0& 0 & 1
	\end{pmatrix} .
\end{equation}
Again, by tracing back the transformations $\Phi \mapsto T \mapsto S \mapsto R $ in \eqref{def:PhiToT}, \eqref{def:TtoS} and \eqref{def:StoR}, it follows that
\begin{equation}\label{eq:phisz1}
\Phi(sz)=\diag\left(s^{-\frac13},1, s^{\frac13}\right)R(z)P^{(1)}(z) e^{\Theta(sz)}, \qquad z\in D(1, \delta) \cap \mathtt{II}.
\end{equation}
As $R(z)$ is analytic at $z =1$ and $\theta_1(z)+ \theta_2(z) + \theta_3(z) = 0$,  it is readily seen from the above two formulas, the definition of $P^{(1)}(z)$ in \eqref{P1:P1-tilde} and \eqref{eq:P1Solution} that
\begin{eqnarray}
&&\Phi_0^{(0)}(s) = \diag\left(s^{-\frac13},1, s^{\frac13}\right)R(1) E_1(1)  \nonumber \\
&&\quad \times \lim_{z \to 1, \atop z\in \mathtt{II}} \left[ \diag \left( \Phi^{(\CHF)}(s^{\frac43}f(z);\beta)e^{-\frac\beta2 \pi i\sigma_3},1 \right) \widetilde\Theta(z) \begin{pmatrix}
	1 & \frac{\gamma}{2\pi i} \ln(sz-s) & \frac{\gamma}{2\pi i} \ln(sz-s) \\
	0 & 1 & 0 \\
	0& 0 & 1
	\end{pmatrix} \right],
\nonumber
\end{eqnarray}
where
\begin{equation}\label{eq:M}
\widetilde\Theta(z)= e^{-\frac{1}{2}\theta_3(sz)} \begin{pmatrix}
	1 & 0& 0 \\
	0 & 1 & 1 \\
	0& 0 & e^{\frac{3}{2}\theta_3(sz)}
	\end{pmatrix}.
\end{equation}	
Recall that $f(1) =0$, $f'(1)>0$ when $s$ is large enough (see \eqref{f1-f'1}), and the local behavior of $\Phi^{(\CHF)}(z)$ near the origin shown in \eqref{eq:H-expand-2}, one can easily verify that the terms involving $\ln(z-1)$ cancel out when we take the limit $z \to 1$. Thus, we arrive at the following expression of $\Phi_0^{(0)}(s)$:
\begin{align}
\Phi_0^{(0)}(s) &= e^{-\frac{1}{2}\theta_3(s)}  \diag\left(s^{-\frac13},1, s^{\frac13}\right)R(1) E_1(1) \diag \left( \Upsilon_0 ,1 \right)  \nonumber \\
&~~~ \times \begin{pmatrix}
	1 & \frac{\gamma}{2\pi i} \left( \ln s  - \ln (e^{-\frac{\pi i}{2}} s^{\frac{4}{3}} f'(1)) \right) & \frac{\gamma}{2\pi i} \left( \ln s  - \ln (e^{-\frac{\pi i}{2}} s^{\frac{4}{3}} f'(1))  \right) \\
	0 & 1 & 1 \\
	0& 0 & e^{\frac{3}{2}\theta_3(s)}
	\end{pmatrix} , \label{eq:Phi-0-0}
\end{align}
where $\Upsilon_0$ is given in \eqref{eq:H-expand-coeff-0}.

We now derive the asymptotics of $q_k(s)$, $k=1,2,3$. Inserting \eqref{eq:Phi-0-0} into \eqref{def: q-k}, it follows that
 \begin{align}\label{eq:q-expand}
\begin{pmatrix}
      q_1(s) \\
      q_2(s) \\
    q_3(s)
    \end{pmatrix} =\Phi_0^{(0)}(s) \begin{pmatrix}
      1\\
    0 \\
   0
    \end{pmatrix}
  &=e^{-\frac{1}{2}\theta_3(s)}\diag\left(s^{-\frac13},1, s^{\frac13}\right)R(1)E_1(1)\begin{pmatrix}
   \Gamma(1-\beta)e^{-\beta\pi i}  \\
      \Gamma(1+\beta) \\
0
    \end{pmatrix} .
\end{align}
With $E_1(1)$ given in \eqref{eq:E-expand-coeff-1}, we have
\begin{equation}
E_1(1)\begin{pmatrix}
   \Gamma(1-\beta)e^{-\beta\pi i}  \\
      \Gamma(1+\beta) \\
0
    \end{pmatrix} = C_N \begin{pmatrix}
-\omega \Gamma(1-\beta)e^{-\beta\pi i}c_1(s)+ \frac{\omega^2 \Gamma(1+\beta) }{e^{\frac{\beta\pi i}{3}} c_1(s) }  \vspace{4pt}  \\
 -\Gamma(1-\beta)e^{-\beta\pi i}c_1(s)+ \frac{\Gamma(1+\beta) }{e^{\frac{\beta\pi i}{3}} c_1(s) } \vspace{4pt}  \\
-\omega^2 \Gamma(1-\beta)e^{-\beta\pi i}c_1(s)+\frac{\omega \Gamma(1+\beta) }{e^{\frac{\beta\pi i}{3}} c_1(s) }
    \end{pmatrix},
\end{equation}
where $C_N$ and $c_1(s)$ are given in \eqref{eq:CN} and \eqref{eq:c-1}.
%Substituting   into the above equation, we get
% \begin{equation}\label{eq:q-expand-2}
%\begin{pmatrix}
%      q_1(s) \\
%      q_2(s) \\
%    q_3(s)
%    \end{pmatrix}=e^{-\frac{1}{2}\theta_3(s)}\diag\left(s^{-\frac13},1, s^{\frac13}\right)R(1) C_0 \begin{pmatrix}
%-\omega \Gamma(1-\beta)e^{-\beta\pi i}c_1(s)+\omega^2\Gamma(1+\beta)(1+\omega)^{-\beta}c_1(s)^{-1} \\
% -\Gamma(1-\beta)e^{-\beta\pi i}c_1(s)+\Gamma(1+\beta)(1+\omega)^{-\beta}c_1(s)^{-1} \\
%-\omega^2 \Gamma(1-\beta)e^{-\beta\pi i}c_1(s)+\omega\Gamma(1+\beta)(1+\omega)^{-\beta}c_1(s)^{-1}
%    \end{pmatrix},
%   \end{equation}
An important observation here is each entry of the $3 \times 1$  matrix  on the right-hand side of the above formula is the summation of a complex conjugate pair. To see this, note that if $s \in \mathbb{R}$ and $\gamma \in [0,1)$, it follows from \eqref{def:alpha}, \eqref{eq: theta-k-def} and \eqref{f1-f'1} that
\begin{equation} \label{eq:complex-para}
  \Re \beta = 0, \qquad \Re(\theta_1(s)-\theta_2(s)) = 0,  \qquad  \Im f'(1) = 0.
\end{equation}
This, together with \eqref{eq:c-1}, implies that
\begin{align}
\Gamma(1-\beta)e^{-\beta\pi i}c_1(s) & = \Gamma(1-\beta)e^{-\frac{2\beta\pi i}{3}}\left(\frac{3 \sqrt{3}}{2}\right)^{\beta}s^{\frac{4\beta}{3}}f'(1)^{\beta}e^{\frac{\theta_1(s)-\theta_2(s)}{2}}, \label{eq:complex-para-1} \\
\frac{\Gamma(1+\beta) }{e^{\frac{\beta\pi i}{3}} c_1(s) } & = \Gamma(1+\beta)e^{-\frac{2\beta\pi i}{3}} \left(\frac{3 \sqrt{3}}{2}\right)^{-\beta} s^{-\frac{4\beta}{3}}f'(1)^{-\beta}e^{-\frac{\theta_1(s)-\theta_2(s)}{2}}, \label{eq:complex-para-2}
\end{align}
which indeed constitute a complex conjugate pair due to \eqref{eq:complex-para}. With the aid of the large $s$ approximation of $R(z)$  \eqref{eq:R-est}, it follows from \eqref{eq:q-expand}--\eqref{eq:complex-para-2} that
\begin{align*}
q_1(s) &=  \frac{2i }{e^{\frac{1}{2}\theta_3(s)} s^{\frac{1}{3}}} \Im \left(-\omega \Gamma(1-\beta)e^{-\beta\pi i}c_1(s)  \right) \left(1+\Boh(s^{-\frac23}) \right)
\nonumber
\\
& =  \frac{2 e^{-\frac{2\beta\pi i}{3}} i}{e^{\frac{1}{2}\theta_3(s)} s^{\frac{1}{3}}} |\Gamma(1-\beta)|  \left(1+\Boh(s^{-\frac23}) \right)
\nonumber
\\
&~~~ \times \sin\left( -\frac{\pi }{3} +  \arg \Gamma(1-\beta) -\beta i \left(\frac{4}{3}\ln s+\ln \left(\frac{3\sqrt{3}}{2}\right)+\ln f'(1)\right) + \frac{\theta_1(s)-\theta_2(s)} {2i} \right) .
\end{align*}
Inserting the explicit formulas of $\theta_i(z)$, $i=1,2,3$, and of $f'(1)$ in \eqref{eq: theta-k-def} and \eqref{f1-f'1} into the above formula, we obtain the asymptotics of $q_1(s)$ shown in \eqref{thm:q-1-asy}. The asymptotics of $q_2(s)$ and $q_3(s)$ in \eqref{thm:q-2-asy} and \eqref{thm:q-3-asy} can be derived in a similar manner.

% \begin{align}\label{eq:q-1-asy}
%q_1(s)&=e^{-\frac{1}{2}\theta_3(s)} s^{-\frac13} \left( -\omega \Gamma(1-\beta)e^{-\beta\pi i}c_1(s)+\frac{\omega^2\Gamma(1+\beta)}{(1+\omega)^{\beta}c_1(s) } \right) \Big(1+\Boh(s^{-\frac23}) \Big) \nonumber\\
%&=2ie^{-\frac{1}{2}\theta_3(s)} s^{-\frac13}\Im \left(-\omega \Gamma(1-\beta)e^{-\beta\pi i/2}\left(\frac{3}{2}\right)^{\beta}(1-\omega)^{\beta}s^{4\beta/3}f'(1)^{\beta}e^{\frac{\theta_1(s)-\theta_2(s)}{2}}\right) (1+\Boh(s^{-\frac23}))\nonumber\\
%&=2ie^{-\frac{1}{2}\theta_3(s)-\frac{2}{3}\beta\pi i} s^{-\frac13}| \Gamma(1-\beta)|\sin (\vartheta(s)-\frac{\pi}{3})(1+\Boh(s^{-\frac23})), \end{align}
%where  $ \vartheta(s)$ is defined in \eqref{eq:vtheta}.
%\begin{equation}\label{eq:vtheta}
%\vartheta(s)=\frac{\theta_1(s)-\theta_2(s)} {2i}+\arg \Gamma(1-\beta)-\beta i \left(\frac{4}{3}\ln s+\ln (\frac{3\sqrt{3}}{2})+\ln f'(1)\right).\end{equation}	

%\begin{equation}\label{eq:q-2-asy}
%q_2(s)=-2ie^{-\frac{1}{2}\theta_3(s)-\frac{2}{3}\beta\pi i} | \Gamma(1-\beta)|\sin (\vartheta(s))(1+\Boh(s^{-\frac23})),\end{equation}	
%and
%\begin{equation}\label{eq:q-3-asy}
%q_3(s)=2ie^{-\frac{1}{2}\theta_3(s)-\frac{2}{3}\beta\pi i} s^{\frac13}| \Gamma(1-\beta)|(\sin (\vartheta(s)+\frac{\pi}{3}))-\sqrt{3} \beta i \sin (\vartheta(s)-\frac{\pi}{3}))(1+\Boh(s^{-\frac23})).\end{equation}	

For the asymptotics of $p_k(s)$, $k=1,2,3$, we obtain from \eqref{def: q-k}, \eqref{eq:Phi-0-0} and the fact $(AB)^{-t} = A^{-t} B^{-t}$ for any two invertible matrices $A,B$ that
\begin{align}\label{eq:pasym1}
\begin{pmatrix}
  p_1(s) \\
      p_2(s) \\
    p_3(s)
    \end{pmatrix}&=-\frac{\gamma}{2\pi i}\Phi_0^{(0)}(s)^{-t}\begin{pmatrix}
      0\\
    1 \\
   1
    \end{pmatrix} \nonumber \\
     &= -\frac{\gamma \, e^{\frac{1}{2}\theta_3(s)}}{2\pi i}   \diag\left(s^{\frac13},1, s^{-\frac13}\right)R(1)^{-t} E_1(1)^{-t} \diag \left( \Upsilon_0^{-t} ,1 \right) \begin{pmatrix}
 0 \\
     1 \\
   0
    \end{pmatrix}.
\end{align}
With $E_1(1)$ and $\Upsilon_0$ given in \eqref{eq:E-expand-coeff-1} and \eqref{eq:H-expand-coeff-0}, we have
\begin{equation}\label{eq:pasym2}
E_1(1)^{-t} \diag \left( \Upsilon_0^{-t} ,1 \right) \begin{pmatrix}
 0 \\
     1 \\
   0
    \end{pmatrix} = \frac{C_N^{-t}}{3} \begin{pmatrix}
\omega \Gamma(1-\beta)e^{-\frac{2\beta\pi i}{3}}c_1(s)  + \frac{\omega^2 \Gamma(1+\beta)}{c_1(s)} \vspace{4pt}  \\
 \Gamma(1-\beta) e^{-\frac{2\beta\pi i}{3}} c_1(s)  + \frac{\Gamma(1+\beta)}{c_1(s)}  \vspace{4pt}  \\
\omega^2 \Gamma(1-\beta) e^{-\frac{2\beta\pi i}{3}} c_1(s)  + \frac{\omega \Gamma(1+\beta)}{c_1(s)}
    \end{pmatrix}.
\end{equation}
In view of \eqref{eq:complex-para-1} and \eqref{eq:complex-para-2}, it is readily seen that $\Gamma(1-\beta) e^{-\frac{2\beta\pi i}{3}} c_1(s)$ and $\frac{\Gamma(1+\beta)}{c_1(s)}$ in the above formula form a complex conjugates pair. The asymptotics of $p_k(s)$, $k=1,2,3$, in \eqref{thm:p-1-asy}--\eqref{thm:p-3-asy} then follow directly from the above two formulas. For the convenience of the reader, we include the derivation of \eqref{thm:p-2-asy}. By \eqref{eq:R-est}, \eqref{eq:pasym1} and \eqref{eq:pasym2}, it is easily seen that
\begin{align*}
p_2(s) &=  -\frac{\gamma e^{\frac{1}{2}\theta_3(s)}}{3\pi i} \Re \left(\Gamma(1-\beta) e^{-\frac{2\beta\pi i}{3}} c_1(s) \right) \left(1+\Boh(s^{-\frac23}) \right) \\
& =  -\frac{\gamma e^{\frac{1}{2}\theta_3(s) -\frac{\beta\pi i}{3}}  }{3\pi i} |\Gamma(1-\beta)|  \left(1+\Boh(s^{-\frac23}) \right) \\
& ~~~ \times \cos\left(  \arg \Gamma(1-\beta) -\beta i \left(\frac{4}{3}\ln s+\ln \left(\frac{3\sqrt{3}}{2} \right) + \ln f'(1)\right) + \frac{\theta_1(s)-\theta_2(s)} {2i} \right) .
\end{align*}
Since
\begin{equation}\label{eq:gammabeta}
\gamma = 1-e^{2\beta \pi i } = -2i e^{\beta \pi i } \sin( \beta\pi), \qquad \gamma\in[0,1),
\end{equation}
we obtain \eqref{thm:p-2-asy} from the above two formulas and \eqref{f1-f'1}.

\subsubsection*{Asymptotics of $p_k$ and $q_k$  as $s \to 0^+$}
The asymptotics of $p_k(s)$ and $q_k(s)$, $k=0,1,2,3$, as $s \to 0^+$ are the consequence of asymptotic analysis of the RH problem for $\Phi$ as $s \to 0^+$. From \eqref{def:tildeR} and \eqref{eq:tildeR-est}, we obtain, for $|z|>\delta$,
\begin{equation}\label{eq:PhiExpandOutSmalls}
\Phi(z)=\frac{1}{\sqrt{\frac{2\pi}{3}} e^{\frac{\rho^2}{6}}i } \left(I+\Boh(s) \right) \Psi_0^{-1} \Psi(z), \qquad  s \to 0^+.
\end{equation}
This, together with  \eqref{eq:asyPsi} and \eqref{def: p-0}, implies the asymptotics of $p_0(s)$ and $q_0(s)$ given in \eqref{thm:p-0-asy-0} and \eqref{thm:q-0-asy-0}.

To obtain the asymptotics of $p_k(s)$ and $q_k(s)$, $k=1,2,3$, again we rely on their definitions in \eqref{def: q-k} and the asymptotics of $\Phi_0^{(0)}(s)$. For this purpose, we observe from \eqref{eq: localPra-small-s-0} and  \eqref{def:tildeR} that
\begin{equation}
\Phi(z)= \widetilde{R}(z) \frac{\Psi_0^{-1}}{\sqrt{\frac{2\pi}{3}} e^{\frac{\rho^2}{6}}i} \widetilde \Psi(z) \left(I+\frac{\gamma }{2\pi i} \ln\left(\frac{ z+s}{z-s}\right) \begin{pmatrix} 0&1&1\\ 0&0&0 \\ 0&0&0 \end{pmatrix}\right),\qquad z\in D(0,\delta) \cap \mathtt{II}.
\end{equation}
A further appeal to \eqref{Phi-00-formula} shows that
\begin{equation}
  \Phi_0^{(0)}(s) = \widetilde{R}(s) \frac{\Psi_0^{-1}}{\sqrt{\frac{2\pi}{3}} e^{\frac{\rho^2}{6}}i} \widetilde \Psi(s) \left(I+\frac{\gamma \ln(2s)}{2\pi i}  \begin{pmatrix} 0&1&1\\ 0&0&0 \\ 0&0&0 \end{pmatrix}\right).
\end{equation}
Using \eqref{eq: tilde-psi} and \eqref{eq:tildeR-est}, we have from the above formula
\begin{equation} \label{eq:PhiExpandInSmalls}
  \Phi_0^{(0)}(s) = \frac{\Psi_0^{-1}}{\sqrt{\frac{2\pi}{3}} e^{\frac{\rho^2}{6}}i} \left(  \widetilde \Psi(0) + \Boh(s) \right) \left(I+\frac{\gamma \ln(2s)}{2\pi i}  \begin{pmatrix} 0&1&1\\ 0&0&0 \\ 0&0&0 \end{pmatrix}\right), \qquad s \to 0^+.
\end{equation}
Regarding the term $\widetilde \Psi(0)$, from the definition of $\mathcal{P}_j(z)$, $j=0,1,\ldots,5$, given in \eqref{def:pj}, it is immediate that
\begin{equation*}
\mathcal{P}'_0(0) = i \int_{-\infty}^{\infty} t \, e^{-\frac14 t^4-\frac{\rho}{2}t^2}\ud t = 0,  \qquad \mathcal{P}'_1(0) - \mathcal{P}'_4(0)  = -i \int_{-i\infty}^{i\infty} t \, e^{-\frac14 t^4-\frac{\rho}{2}t^2}\ud t = 0.
\end{equation*}
The above formulas, together with the definition of $\widetilde \Psi(z)$ in \eqref{eq: tilde-psi}, yield
\begin{equation}\label{eq:Phi0}
\widetilde \Psi(0) \begin{pmatrix}
      1\\
    0 \\
   0
    \end{pmatrix} = \begin{pmatrix}
	\mathcal{P}_0(0)   \\
	0   \\
	\mathcal{P}''_0(0)
	\end{pmatrix}, \qquad  \widetilde \Psi(0)^{-t} \begin{pmatrix}
	0 \\ 1 \\ 1
	\end{pmatrix}= \begin{pmatrix}
	0 \\ \frac{1}{\mathcal{P}'_1(0)} \\ 0
	\end{pmatrix},
\end{equation}
where note that $\mathcal{P}'_1(0) \neq 0$. Finally, using \eqref{def: q-k}, \eqref{eq:PhiExpandInSmalls} and \eqref{eq:Phi0}, we obtain the asymptotics of $p_k(s)$ and $q_k(s)$, $k=1,2,3$, in \eqref{thm:p-1-asy-0} and \eqref{thm:q-1-asy-0}, respectively.

This completes the proof of Theorem \ref{thm:specialsol}. \qed

\subsection{Proof of Theorem \ref{thm:TW}}
As long as the asymptotics of the functions $p_k(s)$ and $q_k(s)$, $k=0,1,2,3$, are available, the asymptotics of the associated Hamiltonian $H(s)$ follow directly from its definition \eqref{pro:H} and straightforward calculations. In view of \eqref{thm:p-0-asy-0}--\eqref{thm:q-1-asy-0}, this idea gives us the asymptotics of $H(s)$ as $s \to 0^+$ in \eqref{thm:H-asy-0}. The same strategy, however, leads to some messy to derive the asymptotics of $H(s)$ as  $s \to + \infty$, due to the complicated asymptotics of $p_k(s)$ and $q_k(s)$ as $s\to +\infty$ given in \eqref{thm:p-0-asy}--\eqref{thm:q-3-asy}. To overcome this difficulty, we make use of the relation \eqref{eq:H-F}, that is,
\begin{equation} \label{eq:H-Phi10}
H(s) = -\frac{\gamma}{2 \pi i} \begin{pmatrix}
0 & 1 & 1
\end{pmatrix} \Phi_1^{(0)}(s)  \begin{pmatrix}
      1\\
    0 \\
   0
    \end{pmatrix},
\end{equation}
where $\Phi_1^{(0)}(s)$ is given in the local behavior of $\Phi(z)$ near $z = s$. Similar to \eqref{Phi-00-formula},  we have from \eqref{eq:X-near-s} and \eqref{eq: Phi-expand-s}  that
\begin{equation} \label{Phi-10-formula}
\Phi_1^{(0)}(s) = \frac{\Phi_0^{(0)}(s)^{-1}}{s}   \lim_{z \to 1, \atop z\in \mathtt{II}} \left[ \Phi(sz)  \begin{pmatrix}
	1 & \frac{\gamma}{2\pi i} \ln(sz-s) & \frac{\gamma}{2\pi i} \ln(sz-s) \\
	0 & 1 & 0 \\
	0& 0 & 1
	\end{pmatrix} \right]',
\end{equation}
where $'$ denotes the derivative with respect to $z$. From \eqref{eq:Phi-0-0}, it follows that
\begin{equation}
\begin{pmatrix}
      0 & 1 & 1
    \end{pmatrix} \Phi_0^{(0)}(s)^{-1} = e^{\frac{1}{2}\theta_3(s)} \begin{pmatrix}
      0 & 1 & 0
    \end{pmatrix} \diag \left( \Upsilon_0^{-1} ,1 \right) E_1(1)^{-1} R(1)^{-1}    \diag\left(s^{\frac13},1, s^{-\frac13}\right).
\end{equation}
Regarding the limit in \eqref{Phi-10-formula}, we have from \eqref{eq:phisz1}, \eqref{P1:P1-tilde} and \eqref{eq:P1Solution} that
\begin{align}
& \lim_{z \to 1, \atop z\in \mathtt{II}} \left[ \Phi(sz)  \begin{pmatrix}
	1 & \frac{\gamma}{2\pi i} \ln(sz-s) & \frac{\gamma}{2\pi i} \ln(sz-s) \\
	0 & 1 & 0 \\
	0& 0 & 1
	\end{pmatrix} \right]' \begin{pmatrix}
      1\\
    0 \\
   0
    \end{pmatrix}
    \nonumber  \\
& = \diag\left(s^{-\frac13},1, s^{\frac13}\right) \lim_{z \to 1, \atop z\in \mathtt{II}} \left[ R(z) E_1(z) \diag \left( \Phi^{(\CHF)}(s^{\frac43}f(z);\beta)e^{-\frac{\beta \pi i}{2} \sigma_3},1 \right) \widetilde\Theta(z) \right] '  \begin{pmatrix}
      1\\
    0 \\
   0
    \end{pmatrix},
\end{align}
where $\widetilde\Theta(z)$ is defined in \eqref{eq:M}. Similar to the derivation of \eqref{eq:Phi-0-0}, we calculate the above limit and obtain
\begin{align}
& e^{-\frac{1}{2}\theta_3(s)} \diag\left(s^{-\frac13},1, s^{\frac13}\right) \bigg(   R'(1) E_1(1) \diag \left( \Upsilon_0 ,1 \right) + R(1) E_1'(1) \diag \left( \Upsilon_0 ,1 \right) \nonumber \\
 & + s^{\frac43} f'(1) R(1) E_1(1) \diag \left( \Upsilon_0 \Upsilon_1, 0 \right)  -\frac{s \theta_3'(s)}{2}  R(1) E_1(1) \diag \left( \Upsilon_0 ,1 \right)  \bigg) \begin{pmatrix}
      1\\
    0 \\
   0
    \end{pmatrix}, \label{Phi-10-limit-final}
\end{align}
where $\Upsilon_0$ and  $\Upsilon_1$ are constant matrices in \eqref{eq:H-expand-2}. A combination of \eqref{eq:H-Phi10}--\eqref{Phi-10-limit-final} gives us
\begin{align}
H(s) =& \, -\frac{\gamma}{2 \pi i s} \bigg( \diag \left( \Upsilon_0^{-1} ,1 \right) E_1(1)^{-1} R(1)^{-1} R'(1) E_1(1) \diag \left( \Upsilon_0 ,1 \right)
\nonumber \\
 & + \diag \left( \Upsilon_0^{-1} ,1 \right) E_1(1)^{-1} E_1'(1) \diag \left( \Upsilon_0 ,1 \right) + s^{\frac43} f'(1)  \diag \left( \Upsilon_1, 0 \right) -\frac{s \theta_3'(s)}{2} I \bigg)_{21}. \label{H-final-form}
\end{align}
We now derive the large positive $s$ asymptotics of the four terms in the bracket of the above formula one by one. From the estimate of $R(z)$ in \eqref{eq:R-est}, it follows that
\begin{equation}\label{eq:term1}
  \left(\diag \left( \Upsilon_0^{-1} ,1 \right) E_1(1)^{-1} R(1)^{-1} R'(1) E_1(1) \diag \left( \Upsilon_0 ,1 \right) \right)_{21} = \Boh(s^{-\frac23}).
\end{equation}
By the explicit expression of $\Upsilon_0$ in \eqref{eq:H-expand-coeff-0}, we have
\begin{align}\label{eq:term2}
&\left(\diag \left( \Upsilon_0^{-1} ,1 \right) E_1(1)^{-1} E_1'(1) \diag \left( \Upsilon_0 ,1 \right)\right)_{21} \nonumber \\ &=\Gamma(1+\beta)\Gamma(1-\beta)e^{-\beta \pi i} \left( \left(E_1(1)^{-1}E_1'(1)\right)_{22}-\left(E_1(1)^{-1}E_1'(1)\right)_{11} \right)\nonumber\\
&~~~ -\Gamma(1+\beta)^2 \left( E_1(1)^{-1}E_1'(1) \right)_{12} + \Gamma(1-\beta)^2e^{-2\beta \pi i} \left( E_1(1)^{-1}E_1'(1) \right)_{21}.
\end{align}
Since $\Re \beta =0$, it is readily seen from \cite[Equation 5.4.3]{DLMF} that
\begin{equation}\label{eq:gamma1beta}
\Gamma(1+\beta)\Gamma(1-\beta) = |\Gamma(1+\beta)|^2 = \frac{\beta \pi}{\sin(\beta \pi)}.
\end{equation}
This, together with the formula of $ E_1(1)^{-1} E_1'(1)$ given in \eqref{eq:E-expand-coeff-2}, implies that
\begin{align}
   &\Gamma(1+\beta)\Gamma(1-\beta) e^{-\beta \pi i} \left( \left(E_1(1)^{-1}E_1'(1)\right)_{22}-\left(E_1(1)^{-1}E_1'(1)\right)_{11} \right) \nonumber \\
   &= -\frac{\beta^2 \pi}{\sin(\beta \pi)} e^{-\beta \pi i} \left(\frac{f''(1)}{f'(1)}+1 \right) = -\frac{\beta^2 \pi}{\sin(\beta \pi)} e^{-\beta \pi i} \left(\frac{4}{3} + \Boh(s^{-\frac23}) \right), \label{H:derivation-1}
\end{align}
and
\begin{align}
  & -\Gamma(1+\beta)^2 \left( E_1(1)^{-1}E_1'(1) \right)_{12} + \Gamma(1-\beta)^2e^{-2\beta \pi i} \left( E_1(1)^{-1}E_1'(1) \right)_{21} \nonumber \\
  &= -\frac{\sqrt{3} i }{9 } e^{\frac{\beta \pi i}{3}}  \left( \frac{\Gamma(1+\beta)^2}{e^{\frac{2\beta\pi i}{3}} c_1^2(s)} + \Gamma(1-\beta)^2 e^{-2\beta \pi i}c_1^2(s)  \right) \nonumber \\
  &= -\frac{2\sqrt{3} i}{9} \frac{ \beta \pi   }{\sin(\beta \pi )} e^{-\beta \pi i} \cos\left(2\vartheta(s)\right) + \Boh(s^{-\frac23}), \label{H:derivation-2}
\end{align}
where $\vartheta(s)$ is given in \eqref{eq:vtheta}. In the derivation of \eqref{H:derivation-1} and \eqref{H:derivation-2}, we have also made use of the values of $f'(1)$ and $f''(1)$ in \eqref{f1-f'1}, and the complex conjugation property listed in \eqref{eq:complex-para-1} and \eqref{eq:complex-para-2}. The asymptotics of the last two term in \eqref{H-final-form} are simpler. It follows from \eqref{f1-f'1} and \eqref{eq:H-expand-coeff-1} that
\begin{equation} \label{H:derivation-3}
  \left( s^{\frac43} f'(1)  \diag \left( \Upsilon_1, 0 \right) \right)_{21} = \frac{ \beta \pi i \, e^{-\beta \pi i} }{\sin(\beta \pi )} \left(   \sqrt{3} s^{\frac43} -\frac{\sqrt{3}}{3}\rho s^{\frac23} \right),
\end{equation}
and obviously,
\begin{equation}\label{eq:term4}
\left(\frac{s \theta_3'(s)}{2} I \right)_{21} = 0.
\end{equation}
In view of \eqref{eq:gammabeta}, we obtain the asymptotics of $H(s)$ as $s\to +\infty$ in \eqref{thm:H-asy-infty} by combining
\eqref{H-final-form}--\eqref{eq:term2} and \eqref{H:derivation-1}--\eqref{eq:term4}.

To show the integral representation of $F$, it is readily seen from \eqref{eq:derivativeinsX-2} and \eqref{eq:H-F} that
\begin{equation}
\frac{\ud }{\ud s}F(s;\gamma,\rho)=2H(s).
\end{equation}
By taking integration on both sides of the above formula, we obtain \eqref{thm: IntRep}. Moreover, the integral is convergent near the origin in view of \eqref{thm:H-asy-0}.

%Moreover, the Hamiltonian is pole-free for $s\in(0,+\infty)$, which can be seen from the fact that  $H(s)$ is meromorphic for $s\in (0,+\infty)$ and $\exp(F(s))$ is a gap probability. The gap probability is strictly positive since it is decreasing in  $s$ and  strictly positive near positive infinity by using the asymptotics of $F(s)$.

This completes the proof of Theorem \ref{thm:TW}.
\qed

\subsection{Proof of Theorem \ref{thm:FAsy}}
The proof relies on the differential identities established in Proposition \ref{prop:H-diff}. First, we obtain from \eqref{eq: action-diff} that
\begin{align}\label{eq:pkqkH}
\int_0^{s}H(\tau)d\tau
&= \int_0^{s} \left(\sum_{k=0}^3p_k(\tau)q_k'(\tau)-H(\tau) \right) \ud \tau
\nonumber
\\ &~~~ -\frac{1}{4}\Big[2p_0(\tau)q_0(\tau)+p_2(\tau)q_2(\tau)+2p_3(\tau)q_3(\tau)-3\tau H(\tau)\Big]_{\tau = 0}^s.
\end{align}
To tackle the integral on the right-hand side of the above formula, we make use of \eqref{eq: dH-gamma}, where the differential identity still holds if we replace $\gamma$ by $\beta$. By integrating both sides of \eqref{eq: dH-gamma} with respect to $s$, it follows that
\begin{align}\label{eq:dbeta}
  \frac{\partial}{\partial \beta} \int_{0}^s \left(\sum_{k=0}^3 p_k(\tau)q_k'(\tau)-H(\tau) \right) \ud \tau & = \sum_{k=0}^3p_k(s)\frac{\partial}{\partial \beta}q_k(s) - \sum_{k=0}^3p_k(0)\frac{\partial}{\partial \beta}q_k(0)
  \nonumber
  \\
  & = \sum_{k=0}^3p_k(s)\frac{\partial}{\partial \beta}q_k(s),
\end{align}
where we have made use of the fact that $\sum\limits_{k=0}^3p_k(0)\frac{\partial}{\partial \beta}q_k(0) = 0$ in the second equality, as can be seen from the asymptotics of $p_k(s)$ and $q_k(s)$, $k=0,1,2,3$, as $s\to 0^+$ given in \eqref{thm:p-0-asy-0}--\eqref{thm:q-1-asy-0}.
A key observation now is, if $\beta = 0$ (equivalently, $\gamma = 0$), it is readily seen from \eqref{eq:H-F} and \eqref{def: q-k} that
$H(s)=0$ and $p_1(s)=p_2(s)=p_3(s)=0$. Furthermore, from \eqref{def:fh} and \eqref{eq:Y-jump},  it follows that $Y\equiv I$. This, together with \eqref{def: p-0}, \eqref{X1-formula} and \eqref{asyPsi:coeff}, implies that  $p_0(s) = \frac{\sqrt{2}}{2}\left( \frac{\rho^3}{54}+\frac{\rho}{2} \right)$, $q_0(s) = \frac{\sqrt{2}}{2} \left(-\frac{\rho^3}{54}+\frac{\rho}{2}\right)$. As a consequence,
\begin{equation}
  \sum_{k=0}^3 p_k(s)q_k'(s)-H(s) \equiv 0, \qquad \beta = 0.
\end{equation}
Integrating both sides of \eqref{eq:dbeta} with respect to $\beta$, we obtain from the above formula that
\begin{equation} \label{eq:intpkqkbeta}
  \int_{0}^s \left(\sum_{k=0}^3 p_k(\tau)q_k'(\tau)-H(\tau) \right) \ud \tau = \int_{0}^\beta \left( \sum_{k=0}^3p_k(s)\frac{\partial}{\partial \beta'}q_k(s) \right) \ud \beta',
\end{equation}
where we have used the notations $p_k(s)=p_k(s;\beta')$, $k=0,1,2,3$, for the integral on the right-hand side of the above formula.
Therefore, a combination of \eqref{eq:pkqkH} and \eqref{eq:intpkqkbeta} gives us
\begin{align} \label{eq:H-integral}
  \int_0^{s}H(\tau) \ud \tau & = \int_{0}^\beta \left( \sum_{k=0}^3p_k(s)\frac{\partial}{\partial \beta'}q_k(s) \right) \ud \beta'  \nonumber
  \\
  &~~~-\frac{1}{4}\Big[2p_0(\tau)q_0(\tau)-2p_1(\tau)q_1(\tau)-p_2(\tau)q_2(\tau)-3\tau H(\tau)\Big]_{\tau=0}^s,
\end{align}
where we have replaced $p_3(\tau)q_3(\tau)$ in \eqref{eq:pkqkH} by $-p_1(\tau)q_1(\tau)-p_2(\tau)q_2(\tau)$; see \eqref{eq: const1}.

We next use the asymptotics of $p_k$ and $q_k$, $k=0,1,2,3$, established in Theorem \ref{thm:specialsol} to calculate the asymptotics of the formula on the right-hand side of \eqref{eq:H-integral} as $s\to +\infty$. For
the terms  $p_k(s)\frac{\partial}{\partial \beta}q_k(s) $, $k=1,2,3$, appearing in the definite integral, we rewrite $\frac{\partial}{\partial \beta}q_k(s)$ as $q_k(s)\frac{\partial}{\partial \beta}\ln\left( q_k(s)\right)$ and obtain from \eqref{thm:p-1-asy}--\eqref{thm:p-3-asy} and \eqref{thm:q-1-asy}--\eqref{thm:q-3-asy} that
\begin{align}\label{eq:p1q1beta-asy}
& p_1(s) \frac{\partial}{\partial \beta}q_{1}(s)=p_1(s)q_1(s)  \frac{\partial}{\partial \beta}\ln\left( q_1(s)\right) %= -\frac{2}{3} \beta i \left(\sin \Big(2\vartheta(s)-\frac{2}{3}\pi \Big)+\sqrt{3}\beta i \sin(2\vartheta(s))-\frac{3}{2}\beta i \right) (\ln q_1)_{\beta} \Big(1+\Boh(s^{-\frac23}) \Big)
\nonumber\\
& = -\frac{2}{3} \beta i \left( \sin\Big(2\vartheta(s)-\frac{2}{3}\pi\Big)+\sqrt{3}\beta i \sin(2\vartheta(s))-\frac{3}{2}\beta i \right) \nonumber \\
& ~~~ \times \left(-\frac{2}{3}\pi i+\frac{\partial}{\partial \beta}\ln | \Gamma(1-\beta)|+\cot \left(\vartheta(s)-\frac{\pi}{3} \right) \frac{\partial}{\partial \beta}\vartheta(s) \right) \left(1+\Boh\left(\frac{\ln s}{s^{\frac23}}\right) \right),
\end{align}
where the relation $|\Gamma(1-\beta)|^2 = \frac{\beta \pi}{\sin(\beta \pi)}$ is used again; see \eqref{eq:gamma1beta}.
Similarly, we have
\begin{align}\label{eq:p2q2beta-asy}
& p_2(s) \frac{\partial}{\partial \beta}q_{2}(s)= p_2(s)q_2(s)  \frac{\partial}{\partial \beta}\ln\left( q_2(s)\right)%=-\frac{2}{3} \beta i \sin(2\vartheta(s))(\ln q_2)_{\beta}\Big(1+\Boh(s^{-\frac23}) \Big)
\nonumber
\\
&=-\frac{2}{3} \beta i \sin(2\vartheta(s)) \left(-\frac{2}{3}\pi i+\frac{\partial}{\partial \beta}\ln | \Gamma(1-\beta)|+\cot (\vartheta(s)) \frac{\partial}{\partial \beta}\vartheta(s) \right) \left(1+\Boh\left(\frac{\ln s}{s^{\frac23}}\right) \right)
\end{align}
and
\begin{align}\label{eq:p3q3beta-asy}
& p_3(s) \frac{\partial}{\partial \beta}q_{3}(s) =p_3(s)q_3(s)\frac{\partial}{\partial \beta}\ln\left( q_3(s)\right)
\nonumber
\\
&=\biggl[ -\frac{2}{3} \beta i\sin \left( 2\vartheta(s)+\frac{2}{3}\pi \right) \left( -\frac{2}{3}\pi i+\frac{\partial}{\partial \beta}\ln | \Gamma(1-\beta)|+\cot \left(\vartheta(s)+\frac{\pi}{3} \right) \vartheta_{\beta}(s) \right) \nonumber\\
&~~~ -\beta^2 \left( \frac{2}{\sqrt{3}}  \sin(2\vartheta(s))-  1 \right) \left(-\frac{2}{3}\pi i+\frac{1}{\beta}+\frac{\partial}{\partial \beta}\ln | \Gamma(1-\beta)|+\cot \left(\vartheta(s)-\frac{\pi}{3} \right) \frac{\partial}{\partial \beta}\vartheta(s) \right) \biggr] \nonumber \\
& ~~~ \times
\left(1+\Boh\left(\frac{\ln s}{s^{\frac23}}\right) \right).
\end{align}
A combination of \eqref{eq:p1q1beta-asy}--\eqref{eq:p3q3beta-asy} and the fact that
\begin{align*}
\sin(2\vartheta)+\sin\left(2\vartheta+\frac{2}{3}\pi\right)+ \sin\left(2\vartheta-\frac{2}{3}\pi\right)&=0,
\\
\cos ^2 \vartheta +\cos ^2 \left(\vartheta+\frac{\pi}{3} \right)+\cos ^2 \left(\vartheta-\frac{\pi}{3} \right)&=\frac{3}{2},
\end{align*}
shows that
\begin{multline}\label{eq:pkqkbeta}
\int_0^\beta \left(\sum_{k=1}^3 p_k(s) \frac{\partial}{\partial \beta'}q_{k}(s) \right) \ud \beta'
\\
=\int_0^\beta \beta' \left(1-\frac{2 }{\sqrt{3}} \sin(2\vartheta(s))   -2i \frac{\partial}{\partial \beta'}\vartheta(s) \right) \ud \beta'  +\Boh \left(\frac{\ln s}{s^{\frac23}}\right).
\end{multline}
Comparing the above formula with \eqref{eq:H-integral}, it remains to derive the asymptotics of $p_0(s) \frac{\partial}{\partial \beta}q_{0}(s)$. Theoretically, this can be done by using the asymptotics of $p_0(s)$ and $q_0(s)$ in \eqref{thm:p-0-asy} and \eqref{thm:q-0-asy}, but then the $\Boh(s^{-\frac23})$-term therein has to be calculated explicitly. Alternatively, we refer to \eqref{eq: const2} and rewrite $p_0(s) \frac{\partial}{\partial \beta}q_{0}(s)$ as
\begin{equation}
  p_0(s) \frac{\partial}{\partial \beta}q_{0}(s) = - \sqrt{2} p_3(s) q_1(s)\frac{\partial}{\partial \beta}q_{0}(s) - \left(q_0(s) - \frac{\rho}{\sqrt{2}} \right) \frac{\partial}{\partial \beta}q_{0}(s).
\end{equation}
This gives us
\begin{equation}
  \int_0^{\beta} p_0(s) \frac{\partial}{\partial \beta'}q_{0}(s) \ud\beta'
=-\sqrt{2}\int_0^{\beta}p_3(s)q_1(s)\frac{\partial}{\partial \beta'}q_{0}(s) \ud \beta' - \left[\frac{1}{2}q_0(s)^2-\frac{\rho}{\sqrt{2}}q_{0}(s)\right]_{\beta'=0}^{\beta}.
\end{equation}
Inserting the asymptotics of $p_3(s)$, $q_0(s)$ and $q_1(s)$ given in \eqref{thm:p-3-asy}--\eqref{thm:q-1-asy} into the above formula, it follows that
\begin{multline}\label{eq:p0q0}
\int_0^{\beta} p_0(s) \frac{\partial}{\partial \beta'}q_{0}(s) \ud\beta'
= \int_0^{\beta} \beta' \left(\frac{2}{\sqrt{3}} \sin(2\vartheta(s))- 1 \right)\ud \beta'  - \frac{1}{2}q_0(s)^2 + \frac{\rho}{\sqrt{2}}q_{0}(s)  \\
 -\frac{1}{4}\left(\frac{\rho^3}{54}+\frac{3\rho}{2}\right)\left(-\frac{\rho^3}{54} + \frac{\rho}{2} \right) +\Boh \left(\frac{\ln s}{s^{\frac23}}\right) .
\end{multline}
This, together with \eqref{eq:pkqkbeta}, implies that
\begin{multline} \label{eq:pkqk-final-sum}
\int_0^\beta \left(\sum_{k=0}^3 p_k(s) \frac{\partial}{\partial \beta'}q_{k}(s) \right) \ud \beta'
= -2i \int_0^\beta \beta' \frac{\partial}{\partial \beta'}\vartheta(s)  \ud \beta'  -\frac{1}{2}q_0(s)^2+\frac{\rho}{\sqrt{2}}q_{0}(s) \\ -\frac{1}{4}\left(\frac{\rho^3}{54}+\frac{3\rho}{2}\right)\left(-\frac{\rho^3}{54} + \frac{\rho}{2} \right) +\Boh \left(\frac{\ln s}{s^{\frac23}}\right) .
\end{multline}
Recall the definition of $\vartheta(s)$ in \eqref{eq:vtheta}, we have
\begin{align}\label{vartheta-log}
  & -2i \int_0^\beta \beta' \frac{\partial}{\partial \beta'}\vartheta(s)  \ud \beta'
   \nonumber
   \\
   &= -2i  \int_0^{\beta}\beta' \left(\frac{\partial}{\partial \beta'} \arg\Gamma(1-\beta') -i \left (\frac{4}{3}\ln s+\ln \left(\frac{9}{2} \right) \right) \right) \ud\beta'
   \nonumber \\
   & = 2i \int_0^{\beta}\arg\Gamma(1-\beta')\ud\beta' -2\beta i\arg\Gamma(1-\beta)  - \beta^2\left(\frac{4}{3}\ln s+\ln \left(\frac{9}{2}\right)\right)
    \nonumber \\
   & = \beta\left(\ln \Gamma(1+\beta)-\ln \Gamma(1-\beta) \right)
 \nonumber \\
& ~~~ -\int_0^{\beta} \left( \ln \Gamma(1+\beta')-\ln \Gamma(1-\beta') \right) \ud\beta'  - \beta^2\left(\frac{4}{3}\ln s+\ln \left(\frac{9}{2}\right)\right),
\end{align}
where we have made use of the fact that $2i\arg\Gamma(1-\beta) =\ln\Gamma(1-\beta) - \ln\Gamma(1+\beta)$ if $\Re \beta =0$ in the last equality.
%, we then get
%\begin{multline}
%-2i \int_0^\beta \beta \frac{\partial}{\partial \beta}\vartheta(s)  \ud \beta' = \beta\Big(\ln \Gamma(1+\beta)-\ln \Gamma(1-\beta) \Big)
%\\
%-\int_0^{\beta} \Big( \ln \Gamma(1+\beta)-\ln \Gamma(1-\beta) \Big) d\beta  - \beta^2\left(\frac{4}{3}\ln s+\ln \left(\frac{9}{2}\right)\right). \label{vartheta-log}
%\end{multline}
%Recalling the definition of the Barnes G-function (see \cite[Equation(5.17.4)]{DLMF}),
%\begin{equation}\label{eq:G}
%\ln G(1+z)=\frac{z}{2}\ln(2\pi)-\frac{z(z+1)}{2}+z\ln \Gamma(1+z)-\int_0^z \ln (\Gamma(1+x))dx, \quad \Re z>-1,
%\end{equation}
From the definition of the Barnes G-function given in \eqref{eq:G}, we further obtain from \eqref{vartheta-log} that
\begin{equation} \label{vartheta-barnes}
  -2i \int_0^\beta \beta' \frac{\partial}{\partial \beta'}\vartheta(s)  \ud \beta' = \ln\left( G(1+\beta) G(1-\beta) \right) + \beta^2 \left( 1- \frac{4}{3}\ln s-\ln \left(\frac{9}{2}\right)\right).
\end{equation}
Inserting \eqref{vartheta-barnes} into \eqref{eq:pkqk-final-sum}, it then follows from \eqref{eq:H-integral} and \eqref{thm:p-0-asy-0}--\eqref{thm:q-1-asy-0} that
\begin{align}
     \int_0^{s}H(\tau)\ud\tau &= \ln\left( G(1+\beta) G(1-\beta) \right) + \beta^2 \left( 1- \frac{4}{3}\ln s-\ln \left(\frac{9}{2}\right)\right)  \nonumber \\
      &~~~ -\frac{1}{2}q_0(s)^2+\frac{\rho}{\sqrt{2}}q_{0}(s)   + \frac{1}{4}\left(\frac{\rho^4}{54}-\frac{\rho^2}{2}\right) \nonumber  \\ %{\red -\frac{1}{4}\left(\frac{\rho^3}{54}+\frac{3\rho}{2}\right)\left(-\frac{\rho^3}{54} + \frac{\rho}{2} \right)}   \nonumber  \\
      &~~~ -\frac{1}{4}\left(2p_0(s)q_0(s)-3sH(s)\right)   + \frac14 (2p_1(s)q_1(s)+p_2(s)q_2(s))+\Boh \left(\frac{\ln s}{s^{\frac23}}\right)
      \nonumber
      \\
    & = \ln\left( G(1+\beta) G(1-\beta) \right) + \beta^2 \left( 1- \frac{4}{3}\ln s-\ln \left(\frac{9}{2}\right)\right)+ \frac{1}{4}\left(\frac{\rho^4}{54}-\frac{\rho^2}{2}\right) \nonumber  \\ % {\blue -\frac{1}{4}\left(\frac{\rho^3}{54}+\frac{3\rho}{2}\right)\left(-\frac{\rho^3}{54} + \frac{\rho}{2} \right)}  \nonumber \\
      &~~~+ \frac{q_0(s)}{\sqrt{2}}\left(p_3(s)q_1(s)+\frac{\rho}{2}\right)+\frac34 sH(s)
      \nonumber
      \\
      &~~~+ \frac14 (2p_1(s)q_1(s)+p_2(s)q_2(s))+\Boh \left(\frac{\ln s}{s^{\frac23}}\right),
\end{align}
where we have made use of the relation \eqref{eq: const2} in the second equality.

Finally, substituting \eqref{thm:p-1-asy}--\eqref{thm:q-2-asy} and \eqref{thm:H-asy-infty} into the above formula, we obtain the large gap asymptotic formula \eqref{main:F-asy} from \eqref{thm: IntRep}. The fact that the error term in \eqref{main:F-asy}  is $\Boh \left(s^{-\frac23}\right) $ instead of $\Boh \left(\frac{\ln s}{s^{\frac23}}\right) $  follows  directly  from substituting \eqref{thm:H-asy-infty} in \eqref{thm: IntRep}.

This completes the proof of Theorem \ref{thm:FAsy}.
\qed

\begin{appendices}

\section{Confluent hypergeometric parametrix}\label{sec:CHF}
The confluent hypergeometric parametrix $\Phi^{(\CHF)}(z)=\Phi^{(\CHF)}(z;\beta)$ with $\beta$ being a parameter is a solution of the following RH problem.

\subsection*{RH problem for $\Phi^{(\CHF)}$}
 \begin{description}
  \item(a)   $\Phi^{(\CHF)}(z)$ is analytic in $\mathbb{C}\setminus \{\cup^6_{j=1}\widehat\Sigma_j\cup\{0\}\}$, where the contours $\widehat\Sigma_j$, $j=1,\ldots,6,$ are indicated in Fig. \ref{fig:jumps-Phi-C}.

  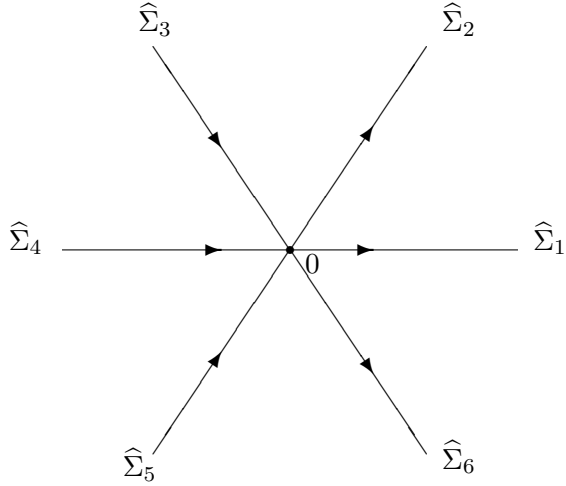
\begin{figure}[h]
\begin{center}
   \setlength{\unitlength}{1truemm}
   \begin{picture}(100,70)(-5,2)
       \put(40,40){\line(-2,-3){18}}
       \put(40,40){\line(-2,3){18}}
       \put(40,40){\line(-1,0){30}}
       \put(40,40){\line(1,0){30}}
     % \put(40,40){\line(0,1){30}}
% \put(40,40){\line(0,-1){30}}
  \put(40,40){\line(2,-3){18}}
    \put(40,40){\line(2,3){18}}

       \put(30,55){\thicklines\vector(2,-3){1}}
       \put(30,40){\thicklines\vector(1,0){1}}
       \put(50,40){\thicklines\vector(1,0){1}}
       \put(30,25){\thicklines\vector(2,3){1}}
      \put(50,25){\thicklines\vector(2,-3){1}}
       \put(50,55){\thicklines\vector(2,3){1}}

 %    \put(40,58){\thicklines\vector(0,1){1}}
 %     \put(40,22){\thicklines\vector(0,-1){1}}

       \put(42,36.9){$0$}
         \put(72,40){$\widehat \Sigma_1$}
           \put(60,69){$\widehat \Sigma_2$}
         %    \put(39,73){$\Sigma_3$}
             \put(20,69){$\widehat \Sigma_3$}
              \put(3,40){$\widehat \Sigma_4$}
       \put(18,10){$\widehat \Sigma_5$}
            %   \put(39,5){$\Sigma_5$}
          \put(60,11){$ \widehat \Sigma_6$}

%
%          \put(55,48){$\texttt{I}$}
%           \put(38,60){$\texttt{II}$}
%             % \put(31,60){$\texttt{III}$}
%            \put(22,48){$\texttt{III}$}
%        \put(24,31){$\texttt{IV}$}
%      %  \put(31,19){$\texttt{V}$}
%         \put(38,19){$\texttt{V}$}
% \put(52,31){$\texttt{VI}$}

       \put(40,40){\thicklines\circle*{1}}
       %\put(50,40){\thicklines\circle*{1}}
\end{picture}
   \caption{The jump contours for the RH problem for $\Phi^{(\CHF)}$.}
   \label{fig:jumps-Phi-C}
\end{center}
\end{figure}

  \item(b) $\Phi^{(\CHF)}$ satisfies the following jump condition:
  \begin{equation}\label{HJumps}
  \Phi^{(\CHF)}_+(z)=\Phi^{(\CHF)}_-(z) \widehat J_i(z), \quad z \in \widehat\Sigma_i,\quad j=1,\ldots,6,
  \end{equation}
  where
  \begin{equation*}
 \widehat J_1(z) = \begin{pmatrix}
    0 &   e^{-\beta \pi i} \\
    -  e^{\beta \pi i} &  0
    \end{pmatrix}, \qquad \widehat J_2(z) = \begin{pmatrix}
    1 & 0 \\
    e^{ \beta \pi i } & 1
    \end{pmatrix}, \qquad
    \widehat J_3(z) = \begin{pmatrix}
    1 & 0 \\
    e^{ -\beta\pi i} & 1
    \end{pmatrix},                                                         % \quad \widehat J_3(z) = \widehat J_7(z) = e^{\pi i\alpha\sigma_3},
  \end{equation*}
  \begin{equation*}
  \widehat J_4(z) = \begin{pmatrix}
    0 &   e^{\beta\pi i} \\
     -  e^{-\beta\pi i} &  0
     \end{pmatrix}, \qquad
      \widehat J_5(z) = \begin{pmatrix}
     1 & 0 \\
     e^{- \beta\pi i} & 1
     \end{pmatrix},\qquad
     \widehat J_6(z) = \begin{pmatrix}
   1 & 0 \\
   e^{\beta\pi i} & 1
   \end{pmatrix}.
  \end{equation*}

  \item(c) $\Phi^{(\CHF)}(z)$ satisfies the following asymptotic behavior at infinity:
  \begin{multline}\label{H at infinity}
 \Phi^{(\CHF)}(z)=(I + \Boh(z^{-1})) z^{-\beta \sigma_3}e^{-\frac{iz}{2}\sigma_3}
  \left\{\begin{array}{ll}
                         I, & ~0< \arg z <\pi,
                         \\
                        \begin{pmatrix}
                                                             0 &   -e^{\beta\pi i} \\
                                                            e^{-\beta\pi i } &  0
                        \end{pmatrix}, &~ \pi< \arg z<\frac{3\pi}{2},
                        \\
                        \begin{pmatrix}
                        0 &   -e^{-\beta\pi i} \\
                        e^{\beta\pi i} &  0
                        \end{pmatrix}, & -\frac{\pi}{2}<\arg z<0.
 \end{array}\right.
\end{multline}
\item(d) As $z\to 0$, we have $\Phi^{(\CHF)}(z)=\Boh(\ln |z|)$.

\end{description}

From \cite{IK}, it follows that the above RH problem can be solved explicitly in the following way. For $z$ belonging to the region bounded by the rays $\widehat \Sigma_1$ and $\widehat \Sigma_2$,
\begin{equation}\label{Hsolution}
\Phi^{(\CHF)}(z)=C_1\left(\begin{array}{ll}
\psi(\beta,1,e^{\frac{\pi i}{2}}z)e^{2 \beta \pi i}e^{-\frac{iz}{2}}&-\frac{\Gamma(1-\beta)}{\Gamma(\beta)}\psi(1-\beta,1,e^{-\frac{\pi i}{2}}z)e^{\beta \pi i}e^{\frac{iz}{2}}\\
-\frac{\Gamma(1+\beta)}{\Gamma(-\beta)}\psi(1+\beta,1,e^{\frac{\pi i}{2}}z)e^{\beta \pi i}e^{-\frac{iz}{2}}
&\psi(-\beta,1,e^{-\frac{\pi i}{2}}z)e^{\frac{iz}{2}}\end{array}\right),
\end{equation}
where the confluent hypergeometric function $\psi(a,b;z)$ is the unique solution to the Kummer's equation
\begin{equation} \label{Kummer-equation}
z\frac{\ud^2y}{\ud z^2}+(b-z)\frac{\ud y}{\ud z}-ay=0
\end{equation}
satisfying the boundary condition $\psi(a,b,z)\sim  z^{-a}$ as $z\to \infty$ and $-\frac{3\pi }{2} < \arg z < \frac{3\pi}{2}$;
see \cite[Chapter 13]{DLMF}.  The branches of the multi-valued functions are chosen such that $-\frac{\pi}{2}<\arg z<\frac{3\pi}{2}$ and
$$
C_1=
\begin{pmatrix}
e^{-\frac32 \beta \pi i} & 0
\\
0 & e^{\frac12 \beta \pi i}
\end{pmatrix}
$$
is a constant matrix. The explicit formula of $\Phi^{(\CHF)}(z)$ in the other sectors is then determined by using the jump condition \eqref{HJumps}.

We conclude this appendix by the detailed local behavior of $\Phi^{(\CHF)}(z)$ near the origin. For this purpose, let us recall the following properties of the confluent hypergeometric functions (cf. \cite[Equations 13.2.41 and 13.2.9]{DLMF}):
\begin{eqnarray}
\frac{1}{\Gamma\left(b\right)}\phi\left(a,b,z\right)=\frac{e^{\mp a\pi i}
}{\Gamma\left(b-a\right)}\psi\left(a,b,z\right)+\frac{e^{\pm(b-a)\pi i}}{
\Gamma\left(a\right)}e^{z}\psi\left(b-a,b,e^{\pm\pi i}z\right), \label{eq:rec-1} \\
\psi\left(a,1,z\right)=-\frac{\ln z}{\Gamma\left(a\right)}\phi\left(a,1,z\right) -\frac{1}{\Gamma\left(a\right)}\sum_{k=0}^{\infty}\frac{{\left(a\right)_{k}}}{(k!)^2} \left( \frac{\Gamma'\left(a+k\right)}{\Gamma\left(a+k\right)} - \frac{2 \, \Gamma'(1+k)}{\Gamma(1+k)} \right)z^{k}, \label{eq:rec-2}
\end{eqnarray}
where $\phi\left(a,b,z\right)$ is another solution to \eqref{Kummer-equation} defined by
\begin{equation} \label{phi-z-def}
\phi\left(a,b,z\right)=\sum_{k=0}^{\infty}\frac{{\left(a\right)_{k}}}{\left(b\right)_k}\frac{z^{k}}{k!}
\end{equation}
with $(a)_k=\frac{\Gamma(a+k)}{\Gamma(a)}=a(a+1)\cdots(a+k-1)$ being the Pochhammer symbol.
The function $\phi\left(a,b,z\right)$ is an entire function and satisfies the following relation
\begin{equation}\label{eq:relation}
\phi\left(a,b,z\right)=\phi\left(b-a,b,-z\right)e^{z}.
\end{equation}
It then follows from \eqref{eq:rec-1}--\eqref{eq:relation}, \eqref{HJumps} and \eqref{Hsolution} that
\begin{align}\label{eq:HsolutionExpand}
&\Phi^{(\CHF)}(z)
\nonumber
\\
&= \left(\begin{array}{ll}
\Gamma\left(1-\beta\right) e^{- \frac{\beta \pi i}{2} - \frac{iz}{2}} \phi(\beta,1,e^{\frac{\pi i}{2}}z)&\frac{e^{-\frac{\beta \pi i}{2} + \frac{iz}{2} } }{\Gamma(\beta)} \sum\limits_{k=0}^{\infty}\frac{{\left(1-\beta\right)_{k}}}{(k!)^2} \left( \frac{\Gamma'\left(1-\beta+k\right)}{\Gamma\left(1-\beta+k\right)} - \frac{2 \, \Gamma'(1+k)}{\Gamma(1+k)} \right) (-iz)^{k} \vspace{5pt}\\
\Gamma\left(1+\beta\right) e^{\frac{\beta \pi i}{2} + \frac{iz}{2}} \phi(-\beta,1,e^{-\frac{\pi i}{2}}z)
&-\frac{e^{\frac{\beta \pi i}{2} + \frac{iz}{2}}  }{\Gamma(-\beta)} \sum\limits_{k=0}^{\infty}\frac{{\left(-\beta\right)_{k}}}{(k!)^2}\left( \frac{\Gamma'\left(-\beta+k\right)}{\Gamma\left(-\beta+k\right)} - \frac{2 \, \Gamma'(1+k)}{\Gamma(1+k)} \right)(-iz)^{k} \end{array}\right)
\nonumber
\\
&~~~ \times  \begin{pmatrix} 1 & \frac{\sin(\beta \pi )}{\pi} \ln (e^{-\frac{\pi i}{2}}z) \\ 0 & 1  \end{pmatrix} ,
\end{align}
for $z$ belonging to the region bounded by the rays $\widehat \Sigma_2$ and $\widehat \Sigma_3$. This, together with \eqref{phi-z-def}, implies that
\begin{equation}\label{eq:H-expand-2}
\Phi^{(\CHF)}(z) e^{-\frac{\beta \pi i}{2} \sigma_3} = \Upsilon_0\left( I+ \Upsilon_1z+\Boh(z^2) \right) \begin{pmatrix} 1 & -\frac{\gamma}{2\pi i} \ln (e^{-\frac{\pi i}{2}}z) \\
0 & 1  \end{pmatrix}, \qquad  z \to 0,
%\begin{pmatrix} (H(z; \beta))_{11}& (H(z; \beta))_{12} &0 \\ (H(z; \beta))_{21} & (H(z; \beta))_{22} &0 \\ 0&0&1 \end{pmatrix}
%\diag(e^{-\beta\pi i/2}, e^{\beta\pi i/2},1)=H_0(I+H_1z+\Boh(z^2)) \begin{pmatrix} 1 & -\frac{\gamma}{2\pi i} \ln (e^{-\frac{\pi i}{2}}z) &0\\
%0 & 1 &0\\
%0&0&1  \end{pmatrix},
\end{equation}
for $z$ belonging to the region bounded by the rays $\widehat \Sigma_2$ and $\widehat \Sigma_3$, where $\gamma=1-e^{2\beta \pi i}$,
\begin{align}\label{eq:H-expand-coeff-0}
\Upsilon_0
%&=\begin{pmatrix} \Gamma\left(1-\beta\right)e^{- \frac{\beta \pi i}{2} } &\frac{e^{-\frac{\beta \pi i}{2}} }{\Gamma(\beta)} \left( \frac{\Gamma'\left(1-\beta\right)}{\Gamma\left(1-\beta\right)} - \frac{\pi^2}{3} \right) \vspace{5pt}  \\
%\Gamma\left(1+\beta\right)e^{\frac{\beta \pi i}{2}} & -\frac{e^{\frac{\beta \pi i}{2} } }{\Gamma(-\beta)}\left( \frac{\Gamma'\left(-\beta\right)}{\Gamma\left(-\beta\right) } - \frac{\pi^2}{3}\right)   \end{pmatrix} e^{-\frac{\beta \pi i}{2} \sigma_3}
=\begin{pmatrix} \Gamma\left(1-\beta\right) e^{-\beta \pi i} &\frac{1}{\Gamma(\beta)} \left( \frac{\Gamma'\left(1-\beta\right)}{\Gamma\left(1-\beta\right)} +2\gamma_{\textrm{E}} \right) \vspace{5pt} \\
\Gamma\left(1+\beta\right) & -\frac{e^{\beta \pi i}}{\Gamma(-\beta)} \left( \frac{\Gamma'\left(-\beta\right)}{\Gamma\left(-\beta\right) } +2\gamma_{\textrm{E}}\right) \end{pmatrix}
\end{align}
with $\gamma_{\textrm{E}}$ being the Euler's constant,
and
 \begin{equation}\label{eq:H-expand-coeff-1}
(\Upsilon_1)_{21}=\frac{ \beta \pi i \, e^{-\beta \pi i} }{\sin(\beta \pi )}.
\end{equation}

\end{appendices}

\section*{Acknowledgements}
We thank Christophe Charlier for helpful comments on this work. Dan Dai was partially supported by grants from the City University of Hong Kong (Project No. 7005597 and 7005252), and grants from the Research Grants Council of the Hong Kong Special Administrative Region, China (Project No. CityU 11303016 and CityU 11300520). Shuai-Xia Xu was partially supported by National Natural Science Foundation of China under grant numbers 11971492, 11571376 and 11201493, and Natural Science Foundation for Distinguished Young Scholars of Guangdong Province of China. Lun Zhang was partially supported by National Natural Science Foundation of China under grant numbers 11822104 and 11501120, ``Shuguang Program'' supported by Shanghai Education Development Foundation and Shanghai Municipal Education Commission, and The Program for Professor of Special Appointment (Eastern Scholar) at Shanghai Institutions of Higher Learning.

%---------------------------------------------------------------------------------

\end{document}